\documentclass[11pt,a4paper,english]{amsart}
\usepackage[utf8]{inputenc}
\usepackage[english]{babel} 

\usepackage{amsmath} 
\usepackage{amsthm} 
\usepackage{graphicx} 
\usepackage{xcolor} 
\usepackage{amsfonts}
\usepackage[biblabel]{cite}
\usepackage{bm} 
\usepackage[hidelinks]{hyperref} 
\usepackage{amssymb} 
\usepackage{tikz-cd} 
\usepackage{amscd}
\usepackage{etoolbox}
\patchcmd{\thmhead}{(#3)}{#3}{}{}
\usepackage{enumitem}

\setcounter{MaxMatrixCols}{20}

\evensidemargin 3ex \oddsidemargin 3ex
\textwidth=15cm
\setlength{\marginparwidth}{2cm}

\DeclareMathOperator{\ini}{in} 
\DeclareMathOperator{\wt}{wt}
\DeclareMathOperator{\ev}{ev} 
\DeclareMathOperator{\Tr}{Tr}
\DeclareMathOperator{\Trr}{Tr_{\ell}}
\DeclareMathOperator{\RS}{RS}
\DeclareMathOperator{\PRS}{PRS}

\newcommand{\F}{{\mathbb{F}}}
\newcommand{\fq}{\mathbb{F}_q}
\newcommand{\fqs}{\mathbb{F}_{q^s}}

\newcommand{\Pp}{{\mathbb{P}^{1}}}
\newcommand{\X}{{\mathbb{X}}}
\newcommand{\Z}{{\mathbb{Z}}}
\newcommand{\A}{{\mathbb{A}}}
\newcommand{\B}{{\mathcal{B}}}
\newcommand{\II}{{\mathfrak{I}}}
\newcommand{\E}{{\RS(N,\Delta)}}
\newcommand{\ET}{{(\RS(N,\Delta)_q)^\perp}}
\newcommand{\EE}{{\RS(N,\Delta)_q}}
\newcommand{\DT}{{(\PRS(N,\Delta)_q)^\perp}}
\newcommand{\mDT}{{(\mathcal{D}(N,\Delta)^\perp)_q}}
\newcommand{\mDTq}{{(\mathcal{D}(N,\Delta)^\perp)_{q^2}}}
\newcommand{\mDTT}{{((\mathcal{D}(N,\Delta)^\perp)_q)^\perp}}
\newcommand{\mDTh}{{(\mathcal{D}(N,\Delta)^\perp)_{q^2}}}
\newcommand{\mDTTh}{{((\mathcal{D}(N,\Delta)^\perp)_{q^2})^{\perp_h}}}
\newcommand{\Dtt}{{\PRS(N,\Delta)^\perp}}
\newcommand{\D}{{\PRS(N,\Delta)}}
\newcommand{\mD}{\mathcal{D}(N,\Delta)}
\newcommand{\DS}{{\PRS(N,\Delta)_q}}

\newcommand{\Dr}{{\mathcal{D}(\Trr,\Delta)}}

\newcommand{\Drt}{{\mathcal{D}(\Trr,\Delta)^\perp}}
\newcommand{\Drts}{{(\mathcal{D}(\Trr,\Delta)^\perp)_{q^2}}}
\newcommand{\Drtts}{{((\mathcal{D}(\Trr,\Delta)^\perp)_{q^2})^{\perp_h}}}

\newcommand{\Drtshn}{{(\mathcal{D}(\Trr,\Delta)^\perp)_{q^2}}}

\newcommand{\Xr}{{\mathbb{X}_{\Trr}}}
\newcommand{\xr}{{X_{\Trr}}}
\newcommand{\T}{{\mathcal{T}}}

\usepackage{mathtools} 
\DeclarePairedDelimiter\abs{\lvert}{\rvert}%
\DeclarePairedDelimiter\norm{\lVert}{\rVert}%

\makeatletter
\let\oldabs\abs
\def\abs{\@ifstar{\oldabs}{\oldabs*}}
\let\oldnorm\norm
\def\norm{\@ifstar{\oldnorm}{\oldnorm*}}
\makeatother


\newtheorem{thm}{Theorem}[section]
\newtheorem{prop}[thm]{Proposition}
\newtheorem{cor}[thm]{Corollary}
\newtheorem{lem}[thm]{Lemma}
\theoremstyle{definition}
\newtheorem{defn}[thm]{Definition} 

\newtheorem{rem}[thm]{Remark} 
 
\newtheorem{ex}[thm]{Example}


\title[EAQECCs from subfield subcodes of projective Reed-Solomon codes]{Entanglement-assisted quantum error-correcting codes from subfield subcodes of projective Reed-Solomon codes}
\author{Philippe Gimenez, Diego Ruano and Rodrigo San-José}
\curraddr{
\texttt{Philippe Gimenez, Diego Ruano, Rodrigo San-José:} IMUVA-Mathematics Research Institute, Universidad de Valladolid, 47011 Valladolid (Spain).
}
\email{pgimenez@uva.es;  diego.ruano@uva.es; rodrigo.san-jose@uva.es}
\date{}
\thanks{This work was supported in part by the following grants: Grant TED2021-130358B-I00 funded by MCIN/AEI/10.13039/501100011033 and by the ``European Union NextGenerationEU/PRTR'', Grants PID2022-138906NB-C21 and PID2022-137283NB-C22 funded by MCIN/AEI/10.13039/501100011033 and by ERDF ``A way of making Europe'', FPU20/01311 funded by the Spanish Ministry of Universities, and by QCAYLE project funded by MCIN, the European Union NextGenerationEU (PRTR C17.I1) and Junta de Castilla y Le\'on.}

\subjclass[2020]{Primary: 81P70. Secondary: 94B05, 13P25}

\keywords{asymmetric quantum codes, EAQECC, evaluation codes, linear codes, projective Reed-Solomon codes, subfield subcodes, trace}

\begin{document}

\maketitle

\begin{abstract}
We study the subfield subcodes of projective Reed-Solomon codes and their duals: we provide bases for these codes and estimate their parameters. With this knowledge, we can construct symmetric and asymmetric entanglement-assisted quantum error-correcting codes, which in many cases have new or better parameters than the ones available in the literature.
\end{abstract}

\section{Introduction}

The subfield subcode of a linear code $C\subset \F_{q^s}^n$, with $s\geq 1$, is the linear code $C\cap \F_q^n$. Considering subfield subcodes is a standard technique for constructing long linear codes over a small finite field. For instance, BCH codes are obtained in this way. They can be regarded as subfield subcodes of Reed-Solomon codes and their duals \cite{bierbrauercyclic}. In this work, we study subfield subcodes of projective Reed-Solomon codes.

Reed-Solomon codes are constructed by evaluating one-variable polynomials at points of the affine line. They have optimal parameters, although they cannot be defined over a small finite field. Projective Reed-Solomon codes are constructed by evaluating two-variable homogeneous polynomials at points of the projective line. When one evaluates at all the points they are commonly called doubly extended Reed-Solomon codes. Subfield subcodes of projective Reed-Solomon codes, when one evaluates at all the points of the projective line, were studied in \cite{bierbrauerrs}. 

In this work we consider a more general setting: we may evaluate at fewer points to define a projective Reed-Solomon code and then compute its subfield subcode. We provide bases for both the subfield subcodes of projective Reed-Solomon codes and their duals and, thus, a formula for their dimension. For the dual code, we use Delsarte's Theorem \ref{delsarte}, for which we need to study first the metric structure of the codes we are considering. We also study the vanishing ideal of the points in which we evaluate, which allows us to discuss linear independence between the traces that arise when using Delsarte's Theorem. Moreover, we estimate the minimum distance for both primary and dual codes. For the primary code we simply use the bound given by the projective Reed-Solomon code, and for the dual one we use a BCH-type bound.

Reed-Solomon and BCH codes have been extensively used to construct quantum codes using the CSS construction, see for instance \cite{bierbrauerquantum,laguardialibro,galindoBCH}. It is therefore natural to consider subfield subcodes of projective Reed-Solomon for constructing quantum codes. 

The construction of quantum computers has important consequences because of their computing capabilities. Despite the fact that quantum mechanical systems are sensitive to disturbances and arbitrary  quantum states cannot be replicated, error correction is possible. Quantum error-correcting codes are designed for protecting quantum information from quantum noise and particularly decoherence. An important class of quantum error-correcting codes are stabilizer codes; they can be derived from classical ones by using self-orthogonal codes for the symplectic product \cite{calderbankgoodquantum}. One can also consider the Euclidean and the Hermitian inner product, and we will call the resulting quantum error-correcting codes QECCs. Entanglement-assisted quantum error-correcting codes (EAQECCs) constitute an extension of quantum codes. EAQECCs make use of pre-existing entanglement between transmitter and receiver to correct more errors \cite{entanglement,galindoentanglement}. One virtue of this class of codes is that one can get a quantum code from any linear code without any assumption on dual containment. The main additional task for EAQECCs is to give formulae to obtain the optimal number $c$ of maximally entangled pairs of qudits needed.

Moreover, both for QECCs and EAQECCs one can consider the asymmetric case \cite{ioffe,sarvepalliasymmetric,galindoasymmetric}. Asymmetric quantum codes have a different error-correction capability for phase-shift and qudit-flip errors. These two types of errors are not equally likely, and it is desirable to construct quantum codes with a higher correction capability for phase-shift errors \cite{ioffe}.

In this work, we provide EAQECCs with excellent parameters coming from different constructions. In the Euclidean case, we are able to obtain both symmetric and asymmetric EAQECCs with excellent parameters from subfield subcodes of projective Reed-Solomon codes. A key fact for the construction of these codes and the computation of their parameters is the knowledge of the parameters and structure of both the primary and dual codes. We also obtain QECCs, i.e. EAQECCs without entanglement assistance, from subfield subcodes of projective Reed-Solomon codes in some cases. By considering the Hermitian inner product we are also able to obtain codes with excellent parameters. In fact, we produce new parameters according to \cite{codetables}. Furthermore, as we are giving several different constructions using subfield subcodes of projective Reed-Solomon codes, this contributes to expanding the known constellation of parameters for EAQECC.

Finally, we consider the codes in \cite{fernandotrace}, Reed-Solomon, and BCH codes obtained by evaluating at the roots of a trace function. We consider the projective version of the codes in \cite{fernandotrace}, that is, the subfield subcodes of projective Reed-Solomon codes evaluating at the roots of a trace function and the point at infinity. This allows us to give classical linear codes which are record in \cite{codetables}, and new EAQECCs. 

Our results can be summarized as follows.
\begin{itemize}[noitemsep,topsep=0pt]
    \item We consider projective Reed-Solomon codes over the zero locus of $x^N-x$ (and the point at infinity), where we evaluate an arbitrary set of monomials. We obtain bases for the subfield subcodes of these codes in Theorem \ref{baseproyectiva}.
    \item When $p\mid N$, bases for the duals of the subfield subcodes are obtained in Theorem \ref{baseproyectivadual}.
    \item Considering sets of monomials whose exponents are a union of consecutive cyclotomic sets and the next minimal element, we obtain EAQECCs with entanglement parameter $c\leq 1$ in Theorem \ref{cuanticoeuclideo} and Theorem \ref{cuanticohermitico}. Some of the resulting codes improve the table for EAQECCs from \cite{codetables}.
    \item Assuming $p\mid N$, by considering the sets of monomials $\{0,1,\dots,d_i\}$, for some $1\leq d_1,d_2\leq N-1$, we obtain asymmetric EAQECCs with entanglement parameter $c=1$, which compare favorably with the current literature.
    \item By evaluating in the zeroes of the trace function, plus the point at infinity, and evaluating monomials whose exponents are a union of consecutive cyclotomic sets and the next minimal element, we obtain linear codes with good parameters in Theorem \ref{thmtraza}, some of which improve the best known parameters in \cite{codetables}, see Example \ref{ejemplocerostraza}. Moreover, we obtain EAQECCs with good parameters and entanglement parameter $c\leq 1$ in Theorem \ref{paramquantraza}.
\end{itemize}

\section{Preliminaries}

We consider a finite field $\fq$ of $q$ elements with characteristic $p$, and its degree $s$ extension $\F_{q^s}$, with $s\geq 1$. We consider the affine space $\mathbb{A}^1$ over $\F_{q^s}$ and the polynomial ring $R=\fqs[x]$. We choose a set of elements $Y=\{Q_1,\dots,Q_n\}\subset \mathbb{A}^1$ and its vanishing ideal $I(Y)=\langle \prod_{i=1}^n (x-Q_i)\rangle $, where we are regarding the points of $\mathbb{A}^1$ as elements in $\F_{q^s}$. We define the following evaluation map
$$
\ev_Y:R/I(Y) \rightarrow \fqs^{n},\:\: f\mapsto \left(f(Q_1),\dots,f(Q_n)\right)_{Q_i \in Y}.
$$
where we denote a polynomial and its class in the quotient ring $R/I(Y)$ in the same way. Let $\Delta$ be a subset of $\{0,1,\dots,n-1\}$. Then, the Reed-Solomon code associated to $\Delta$ and $Y$, denoted by $\RS(Y,\Delta)$, is the code generated by 
$$
\{\ev_Y(x^i)\mid i \in \Delta \}.
$$
The usual choices are $\Delta=\{0,1\dots,d\}$ and $Y=\F_{q^s}^*=\F_{q^s}\setminus \{0\}$, which give a Reed-Solomon code with parameters $[q^s-1,d+1,q^s-d-1]$. This code can be extended by evaluating at $0$ as well, obtaining a code with parameters $[q^s,d+1,q^s-d]$.

Let $N>1$ be such that $N-1\mid q^s-1$. We can consider the set of points $Y_N^*=\{Q_1,\dots,Q_N\}$ given by the zero locus of $I(Y_N^*)=\langle x^{N-1}-1\rangle$. In this case,  $Y_N^*$ forms a multiplicative subgroup of $\F_{q^s}^*$ and it is already known how to obtain bases for its subfield subcodes (see, for example, \cite{hattori,ssctoric}). Moreover, BCH codes can be defined as the duals of the subfield subcodes of Reed-Solomon codes when we evaluate in a subgroup $Y_N^*$ \cite{bierbrauercyclic}. Indeed, let $\alpha\in \F_{q^s}$ be a primitive $(N-1)$th root of unity. $C$ is a BCH code of designed distance $\delta$ if it has as generator polynomial the least common multiple of the minimal polynomials of the $\delta-1$ consecutive elements $\alpha^b,\alpha^{b+1},\dots,\alpha^{b+\delta-2}$, with $b\geq 1$, which implies that $C$ is formed by the vectors over $\F_q^{N-1}$ that are orthogonal to the rows of the matrix 
\begin{equation}\label{pseudoparity}
H=\begin{pmatrix}
1&\alpha^b&\alpha^{2b}&\cdots &\alpha^{(N-2)b}\\
1&\alpha^{b+1}&\alpha^{2(b+1)}&\cdots &\alpha^{(N-2)(b+1)}\\
\vdots & \vdots&\vdots & \ddots & \vdots \\
1& \alpha^{b+\delta-2} & \alpha^{2(b+\delta-2)}&\cdots & \alpha^{(N-2)(b+\delta-2)}
\end{pmatrix}.
\end{equation}
However, this is precisely the generator matrix of the Reed Solomon code over $\fqs$ with $\Delta=\{b,b+1,\dots,b+\delta-2 \}$ and $Y=Y_N^*$. Furthermore, the vectors in $\F_q^{N-1}$ that are orthogonal to the rows of $H$ are precisely the vectors of the subfield subcode of the dual code of this Reed-Solomon code, which is therefore equal to the aforementioned BCH code. In this situation, we say that $H$ is a \textit{pseudo parity check-matrix} for $C$.

Because of the previous discussion, throughout this work we will focus on evaluating in subgroups of the form $Y^*_N$ unless stated otherwise. As before, we can also include the evaluation of $0$, which corresponds to considering instead the set $Y_N$, the zero locus of $I(Y_N)=\langle x^{N}-x\rangle$. For the Reed-Solomon codes obtained by evaluating the associated monomials to $\Delta$ in $Y_N$ we will use the notation $\RS(N,\Delta)$. The subfield subcode of the code $\RS(N,\Delta)$ over $\fq$ is denoted by $\RS(N,\Delta)_q:= \RS(N,\Delta) \cap \F_q^{N}$. In this case, for the sake of simplicity, we are also going to denote $R_N:=R/I(Y_N)$.

Now we are going to introduce some necessary definitions in order to obtain bases for the codes $\RS(N,\Delta)_q$. We define $\mathbb{Z}_N=\{0\} \cup \mathbb{Z}/\langle N-1 \rangle$, where we represent the classes of $\mathbb{Z}/\langle N-1 \rangle$ by $\{1,\dots,N\}$. A subset $\mathfrak{I}$ of $\Z_N$ is called a \textit{cyclotomic set} with respect to $q$ if $q\cdot z \in \II$ for any $z\in \II$. $\II$ is said to be minimal (with respect to $q$) if it can be expressed as $\II=\{q^i\cdot z,i=1,2,\dots\}$ for a fixed $z\in \II$, and in that situation we will write $\II_z:=\II$ and $n_z=\abs{\II_z}$. We say $z$ is a \textit{minimal representative} of $\II_z$ if $z$ is the least element in $\II_z$, and we will say it is a \textit{maximal representative} of $\II_z$ if it is the biggest element. We will denote by $\mathcal{A}$ the set of minimal representatives of the minimal cyclotomic cosets, and by $\mathcal{B}$ the set of maximal representatives of the minimal cyclotomic cosets. 

\begin{ex}\label{excyclo}
Consider the extension $\F_9\supset \F_3$. We consider $N=9$ and we have $\Z_N=\{0\}\cup \Z/\langle 8 \rangle$. We have the following minimal cyclotomic sets:
$$
\II_0=\{0\},\II_1=\{1,3\},\II_2=\{2,6\},\II_4=\{4\},\II_5=\{5,7\},\II_8=\{8\}.
$$
The set of minimal representatives is $\mathcal{A}=\{ 0,1,2,4,5,8 \}$, and the set of maximal representatives is $\B=\{0,3,4,6,7,8\}$.
\end{ex}
 
The dimension of the subfield subcodes of Reed-Solomon codes is already present in \cite{hattori}. For the codes $\RS(N,\Delta)_q$ it is possible to obtain a basis given by the evaluation of some polynomials. For each $a\in \mathcal{A}$, we define the following trace map:
$$
\mathcal{T}_a:R_N\rightarrow R_N,\:\: f\mapsto f+f^q+\cdots + f^{q^{(n_a-1)}},
$$
and given $\Delta\subset \{0,1,\dots,N-1\}$, we denote $\Delta_\II:=\bigcup_{\II_a\subset \Delta}\II_a\subset \Delta$. The following result gives a basis for the code $\RS(N,\Delta)_q$ \cite[Thm. 11]{galindolcd}.

\begin{thm}\label{baseafin}
Let $\Delta$ be a subset of $\{0,1,\dots,N-1\}$ and set $\xi_a$ a primitive element of the field $\F_{q^{n_a}}$. Then, a basis of the vector space $\RS(N,\Delta)_q$ is given by the images under the map $\ev_{Y_N}$ of the set of classes in $R_N$
$$
\bigcup_{a\in \mathcal{A}\mid \II_a\subset \Delta}\{\mathcal{T}_a(\xi_a^r x^a)\mid 0\leq r\leq n_a-1 \}.
$$
\end{thm}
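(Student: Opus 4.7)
The plan is to verify that the proposed set is a basis by checking three properties: each listed element lies in $\RS(N,\Delta)_q$, they are $\fq$-linearly independent, and their cardinality coincides with the $\fq$-dimension of $\RS(N,\Delta)_q$. Throughout, the key technical tool is that the evaluation map $\ev_{Y_N}\colon R_N\to\fqs^N$ is an $\fqs$-algebra isomorphism since $N=|Y_N|=\dim_{\fqs}R_N$, which allows me to translate statements about codewords into statements about polynomial classes in $R_N$.

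For containment, I would show that each $\mathcal{T}_a(\xi_a^r x^a)$ is fixed by the $q$-Frobenius on $R_N$. The key observation is that the Frobenius sends $c\,x^i\in R_N$ to $c^q\, x^{\sigma(i)}$, where $\sigma$ denotes the cyclic permutation of $\II_a$ induced by multiplication by $q$ modulo $N-1$; together with the identity $\xi_a^{q^{n_a}}=\xi_a$, this makes the orbit sum $\mathcal{T}_a(\xi_a^r x^a)$ fixed. By the isomorphism above, the evaluation vector then lies in $\fq^N$, and since its support is contained in $\II_a\subset\Delta$, it belongs to $\RS(N,\Delta)_q$.

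For linear independence, I would argue one cyclotomic class at a time: polynomials attached to different $\II_a$'s have disjoint monomial support, so the global independence reduces to treating each $a$ separately. For fixed $a$, writing the $n_a$ polynomials $\mathcal{T}_a(\xi_a^r x^a)$ in the monomial basis $\{x^{aq^i}\}_{i}$ yields the matrix $(\xi_a^{rq^i})_{r,i}$, which is a Vandermonde matrix on the distinct $\fq$-conjugates $\xi_a^{q^i}$ of the primitive element $\xi_a\in\F_{q^{n_a}}$; its invertibility forces $\fq$-linear independence of the polynomials, and injectivity of $\ev_{Y_N}$ transports this to the evaluations.

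The remaining and most delicate step is the dimension count, which is the real heart of the argument. I would characterize $\RS(N,\Delta)_q$ as the $\fq$-subspace of $R_N$ consisting of classes $f=\sum_{i\in\Delta}c_ix^i$ satisfying $f^q\equiv f$. Since the $q$-Frobenius permutes monomial exponents cyclically inside each cyclotomic class, the fixed-point condition forces $c_i=0$ whenever $i\in\Delta$ belongs to some $\II_a$ with $\II_a\not\subset\Delta$ (any exponent mapped outside $\Delta$ must have vanishing coefficient, and iterating this propagates around the whole cycle), while whenever $\II_a\subset\Delta$ it only imposes the relations $c_{aq^i}=c_a^{q^i}$. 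Each such cyclotomic class therefore contributes one free parameter in $\F_{q^{n_a}}$, i.e.\ $n_a$ free parameters over $\fq$; summing gives $\dim_{\fq}\RS(N,\Delta)_q=\sum_{\II_a\subset\Delta}n_a=|\Delta_\II|$, matching the cardinality of the proposed basis.
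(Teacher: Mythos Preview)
The paper does not actually supply a proof of this statement: it is quoted verbatim from \cite[Thm.~11]{galindolcd} and used as a black box. So there is no ``paper's own proof'' to compare against.

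That said, your argument is correct and is essentially the standard one. The three steps (containment via Frobenius-invariance of the orbit sums, linear independence via disjoint monomial supports across cyclotomic classes together with the Vandermonde on Galois conjugates within a class, and the dimension count via the characterization of Frobenius-fixed elements of $V_\Delta$) are exactly the ingredients one expects. One small remark: your ``propagation'' argument for the dimension step can be phrased more cleanly by observing that $f=f^q$ forces $\mathrm{supp}(f)=\sigma(\mathrm{supp}(f))$, so the support is automatically a union of full cyclotomic classes contained in $\Delta$; this avoids the need to chase individual exponents outside $\Delta$. Also, it is worth making explicit that the $q$-power map on $R_N$ is well-defined (i.e.\ that $I(Y_N)$ is stable under it) and that $\ev_{Y_N}$ intertwines it with the coordinatewise $q$-power on $\fqs^N$, which is what converts ``$f^q=f$ in $R_N$'' into ``$\ev_{Y_N}(f)\in\fq^N$''. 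These are routine, but since they underpin both the containment and the dimension steps you may want to state them once at the outset.
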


As a consequence, we have that 
$$
\dim \RS(N,\Delta)_q=\sum_{\II_z:\II_z\subset \Delta}n_z=\abs{\Delta_{\II}}.
$$

Having seen the affine setting, we are now going to introduce the codes we are going to use throughout this work. We consider the projective line $\Pp$ over $\F_{q^s}$ and the polynomial ring $S=\fqs [x_0,x_1]$. Given a degree $d\geq 1$, we denote by $S_d$ the homogeneous polynomials of degree $d$. We are going to fix representatives for the points of $\Pp$ in the following way: for each point $[P]\in \Pp$, we choose the representative whose first nonzero coordinate is equal to 1. We will denote by $P^1$ this set of representatives, seen as points in the affine space $\mathbb{A}^{2}$, and we will call them \textit{standard representatives}. If we also consider a finite set of points $X=\{Q_1,\dots,Q_n\}\subset P^1$, we can define the following evaluation map
$$
\ev_X:S/I(X) \rightarrow \fqs^{n},\:\: f\mapsto \left(f(Q_1),\dots,f(Q_n)\right)_{Q_i \in X},
$$
where, as before, we denote a polynomial in $S$ and its class in $S/I(X)$ in the same way. Given $\Delta\subset \{0,1,\dots,n-1\}$, we define $d(\Delta):=\max\{i\mid i \in \Delta\}$. The \textit{projective Reed-Solomon} code associated to $\Delta$ and $X$ is the code generated by 
$$
\{\ev_X(x_0^{d(\Delta)-i}x_1^i) \mid i \in \Delta \},
$$
which will be denoted by $\PRS(X,\Delta)$. We note that we are only evaluating monomials of exactly degree $d(\Delta)$, which means that their linear combinations are homogeneous polynomials of degree $d(\Delta)$. If $0\not \in \Delta$, $\PRS(X,\Delta)$ is a degenerate code because all the previous monomials would evaluate to $0$ at the point $[1:0]$. Therefore, we are always going to assume in what follows that $0\in \Delta$. Some authors define these codes over the projective space
without fixing representatives, as in \cite{martinez}, but then they can only define the code up to monomial equivalence. Monomially equivalent codes do not necessarily have monomially equivalent subfield subcodes, for example in \cite{fernandoGRS} the authors see that the dimension of the subfield subcode of a generalised Reed-Solomon code depends on the twist vector chosen, and that is why we fix representatives from the beginning.

Given a degree $1\leq d \leq q^s$, the most standard definition of projective Reed-Solomon code in the literature is the code $\PRS(P^1,\Delta_d)$, where $\Delta_d:=\{0,1,\dots, d\}$. The code $\PRS(P^1,\Delta_d)$ is also called \textit{doubly extended Reed-Solomon code} and its parameters are $[q^s+1,d+1,q^s-d+1]$. 

In order to obtain bases for the subfield subcodes of the codes $\PRS(X,\Delta)$, we are going to evaluate in subgroups similarly to the affine case. The natural ideal is to add the point at infinity $[0:1]$ to the points that we were considering in the affine case. Therefore, given $N$ such that $N-1\mid q^s-1$, we define $\X_N^*=[\{1\}\times Y_N^*]\cup [0:1]\subset \Pp$ and $\X_N=[\{1\}\times Y_N]\cup [0:1]\subset \Pp$, where we recall that $Y_N^*$ and $Y_N$ are the zero locus of $\langle x^{N-1}-1\rangle$ and $\langle x^{N}-x\rangle$, respectively. However, it is easy to see that another set of representatives for $\X^*_N$ is $[Y_N\times \{1\}]$. Thus, the codes obtained when evaluating in this set would be monomially equivalent to the ones obtained in the affine case when evaluating in $Y_N$. As we said before, this does not mean that their subfield subcodes are monomially equivalent. Nevertheless, our experiments show that the parameters that we obtain when evaluating in the set $\mathbb{X}_N^*$ are strictly worse than the ones obtained in the affine case with $Y_N$. Hence, in what follows we are going to focus on evaluating in the set $\X_N$, although we note that the theory we are going to develop can be adapted for the set $\X^*_N$ as well.

We denote the standard representatives of $\X_N$ by $X_N$, and we also denote $\D:=\PRS(X_N,\Delta)$. With this notation, doubly extended Reed-Solomon codes are denoted by $\PRS(q^s,\Delta_d)$. Similarly to the case of doubly extended Reed-Solomon codes, given $1\leq d \leq N$, the parameters of the code $\PRS(N,\Delta_d)$ are $[N+1,d+1,N-d+1]$. In general, for the codes $\D$ we have the parameters $[N+1,\abs{\Delta},\geq N-d(\Delta)+1]$, where the bound for the minimum distance is given by the smallest doubly extended Reed-Solomon code that contains $\D$.

\section{Subfield subcodes of codes over the projective line}\label{secprimario}

Let $\F_{q^s}\supset \F_q$ and $N$ such that $N-1\mid q^s-1$. In this section we want to obtain bases for the subfield subcodes of the codes $\D$ with respect to this extension, which we will denote by $\DS:=\D\cap \fq$. Given $f\in S$, we say that $f$ evaluates to $\fq$ in $X_N$ whenever $f(Q)\in \fq$ for all $Q\in X_N$ (similarly for polynomials in $R$ evaluating in $Y_N$). The following lemma gives the key idea in order to obtain bases for $\DS$.

\begin{lem}\label{lemassc}
Let $X_N\subset P^1$. Then $f\in S$ evaluates to $\fq$ in $X_N$ $\iff$ $f(1,x_1)$ evaluates to $\fq$ in $Y_N$ and $f(0,1)$ is in $\fq$.
\end{lem}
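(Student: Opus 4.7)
The plan is to expand both sides of the equivalence using the explicit description of the set $X_N$. Recall that $\X_N = [\{1\}\times Y_N] \cup \{[0:1]\}$, and standard representatives are chosen so that the first nonzero coordinate equals $1$. The point $[0:1]$ already has first coordinate zero and second coordinate one, so its standard representative is $(0,1)$. The points of the form $[1:a]$ with $a \in Y_N$ have standard representative $(1,a)$. Hence
\[
X_N \;=\; \{(1,a) \mid a \in Y_N\}\; \cup\; \{(0,1)\}.
\]

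With this description in hand, the equivalence is essentially a definitional unpacking. For the forward implication, assume $f(Q) \in \fq$ for every $Q \in X_N$. Applying this to $Q = (0,1)$ directly yields $f(0,1) \in \fq$. Applying it to the points $Q = (1,a)$, $a \in Y_N$, gives $f(1,a) \in \fq$ for all $a \in Y_N$; but $f(1,a)$ is precisely the value at $a$ of the one-variable polynomial $f(1,x_1) \in R$, so $f(1,x_1)$ evaluates to $\fq$ on $Y_N$.

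For the reverse implication, the two hypotheses together supply exactly the values needed at every point of $X_N$: the condition on $f(1,x_1)$ handles the affine points $(1,a)$, and the condition $f(0,1) \in \fq$ handles the infinity point. Hence $f$ evaluates to $\fq$ on all of $X_N$.

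There is no serious obstacle here; the only point requiring care is matching the convention for standard representatives with the affine model used to define $Y_N$. Once the identification $X_N = \{(1,a) : a \in Y_N\} \cup \{(0,1)\}$ is made explicit, both implications follow by restriction and reassembly of the values of $f$.
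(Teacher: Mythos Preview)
Your proof is correct and is exactly the natural argument. The paper in fact states this lemma without proof, treating it as immediate from the explicit description $X_N=\{(1,a)\mid a\in Y_N\}\cup\{(0,1)\}$; your write-up simply makes that unpacking explicit.
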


We will see now that we can take advantage of the knowledge from the affine case in Theorem \ref{baseafin} by homogenizing and using Lemma \ref{lemassc}. Given a degree $d$ and a polynomial $f(x)\in R$ with $\deg(f)\leq d$, its homogenization up to degree $d$ is the homogeneous polynomial $f^h(x_0,x_1):=x_0^d f(x_1/x_0)\in S_d$. Unless stated otherwise, when we consider the code $\D$, we are always going to assume that we are homogenizing up to degree $d=d(\Delta)$.

For a polynomial $f\in \fq[x_1]$, we choose $\mathcal{T}_a(f)$ as the representative of the class in $\F_{q^s}[x_1]/I(Y_N)$ which has the exponents of each monomial reduced modulo $q^s-1$. Given $d\geq 1$, if the degree of $\mathcal{T}_a(f)$ is lower than $d$, then we define $\mathcal{T}_a^h(f):=\left(\mathcal{T}_a(f)\right)^h$, which we call homogenized trace. If we consider one of the traces that appear in Theorem \ref{baseafin}, its homogenized trace automatically satisfies that, when setting $x_0=1$, the resulting polynomial evaluates to $\fq$ in $Y_N$, i.e., the first condition from Lemma \ref{lemassc} is satisfied. However, the second condition, which means that the coefficient of $x_1^{d}$ in the homogenized trace must be in $\fq$, might not be satisfied. Because of this, the projective case is more involved than the affine case, as we will see in the next example.

\begin{ex}\label{extrazas}
We continue with Example \ref{excyclo}. By Theorem \ref{baseafin}, the following polynomial associated to $\II_1$ evaluates to $\F_3$:
$$
\mathcal{T}_{1}(x)=x+x^3.
$$
Let $d=3$ (the degree up to which we homogenize). If we consider the polynomial $f=\mathcal{T}^h_{1}(x_1)=x_0^2x_1+x_1^3$, this is a homogeneous polynomial of degree 3 such that $f(1,x_1)$ takes the same values as $\mathcal{T}_{1}(x_1)$ in $\F_9$, and $f(0,1)=1\in \F_3$. By Lemma \ref{lemassc}, we know that $f$ evaluates to $\F_3$ when evaluating in $P^1$. 

If $\xi$ is a primitive element in $\F_9$, by Theorem \ref{baseafin}, the following polynomial also evaluates to $\F_3$:
$$
\mathcal{T}_{1}(\xi x)=\xi x+\xi ^3x^3.
$$
However, if we consider $g=\mathcal{T}^h_{1}(\xi x_1)=\xi x_0^2x_1+\xi ^3x_1^3$, we see that $g(0,1)=\xi^3\not\in\F_3$. Therefore $g$ does not evaluate to $\F_3$.
\end{ex}

\begin{rem}
If we have $f\in S_d$ which evaluates to $\fq$, then $x_0f\in S_{d+1}$ also evaluates to $\fq$. Moreover, if $f(1,x_1)$ evaluates to $\fq$ in $Y_N$, then $g=x_0f\in S_{d+1}$ evaluates to $\fq$ in $X_N$, even if $f$ does not, because $g(1,x_1)=f(1,x_1)$, which evaluates to $\fq$, and $g(0,1)=0\in \fq$. This already gives a hint about the fact that the sequence of dimensions of the subfield subcodes is going to be non-decreasing.
\end{rem}

With Lemma \ref{lemassc}, we can consider polynomials in one variable that evaluate to $\fq$ in order to obtain polynomials in $S_d$ that evaluate to $\fq$ in $X_N$ in some cases. One could also consider the polynomials in two variables that evaluate to $\fq$ when evaluating in the points of $\mathbb{A}^2$. All of those polynomials are going to evaluate to $\fq$ when evaluating in points of $P^1$. However, there are bivariate polynomials that evaluate to $\fq$ in $P^1$, but not in $\mathbb{A}^2$. For example, in Example \ref{extrazas} we consider $f=x_0^2x_1+x_1^3$, which evaluates to $\F_3$ over $P^1$, but if we consider this polynomial over $\A^2$, then it is clear that it does not evaluate to $\F_3$. For example, if $\xi$ is a primitive element in $\F_9$, $f(0,\xi)=\xi^3\not\in \F_3$.

The following result shows how to use the previous ideas to obtain a basis for $\DS$.

\begin{thm}\label{baseproyectiva}
Let $\Delta$ be a nonempty subset of $\{0,1,\dots,N-1\}$ and let $d=d(\Delta)$. Set $\xi_b$ a primitive element of the field $\F_{q^{n_b}}$. A basis for $\PRS(N,\Delta)_q$ is given by the image by $\ev_{X_N}$ of the following polynomials.

If $\II_d\subset \Delta$:

$$
\bigcup_{b\in \mathcal{B}\mid \II_b\subset \Delta, b< d}\{\mathcal{T}_b^h(\xi_b^r x_1^b)\mid 0\leq r\leq n_b-1 \} \cup \{\mathcal{T}_d^h(x_1^d) \}.
$$

If $\II_d\not\subset\Delta$:

$$
\bigcup_{b\in \mathcal{B}\mid \II_b\subset \Delta}\{\mathcal{T}_b^h(\xi_b^r x_1^b)\mid 0\leq r\leq n_b-1 \}.
$$

\end{thm}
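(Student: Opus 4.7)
The plan is to use Lemma \ref{lemassc} to recast membership in $\DS$ as a pair of conditions on $f = \sum_{i \in \Delta} c_i x_0^{d-i}x_1^i$: the dehomogenization $g = f(1, x_1) = \sum c_i x^i$ must satisfy $\ev_{Y_N}(g) \in \fq^N$, and the leading coefficient $c_d = f(0, 1)$ must lie in $\fq$. The first condition is controlled by Theorem \ref{baseafin}: the admissible polynomials $g$ form an $\fq$-space $V$ of dimension $|\Delta_\II|$, and an explicit basis of $V$ is $\{\mathcal{T}_b(\xi_b^r x^b) : b \in \mathcal{B},\ \II_b \subset \Delta,\ 0 \leq r \leq n_b - 1\}$, where the reindexing from minimal representatives $\mathcal{A}$ to maximal representatives $\mathcal{B}$ amounts to choosing $\xi_b = \xi_a^{q^k}$ whenever $b = aq^k \bmod (N-1)$, so that the sets of polynomials coincide.

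The second condition will be analyzed by studying the $\fq$-linear form $\Phi\colon V \to \fqs$, $g \mapsto c_d$. For a basis element with $b < d$, every exponent appearing in $\mathcal{T}_b(\xi_b^r x^b)$ belongs to $\II_b$ and is strictly smaller than $d$, so $\Phi$ vanishes. For $b = d$---which lies in $\mathcal{B}$ with $\II_d \subset \Delta$ exactly in the first case of the theorem---one has $\Phi(\mathcal{T}_d(\xi_d^r x^d)) = \xi_d^r$. If $\II_d \not\subset \Delta$, then $\Phi \equiv 0$ and the condition $c_d \in \fq$ is automatic, so the subfield subcode is isomorphic to $V$ and is spanned (after homogenization) by the second displayed set. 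If $\II_d \subset \Delta$, then since $\xi_d$ is a primitive element of $\F_{q^{n_d}}$ its powers $\{1, \xi_d, \ldots, \xi_d^{n_d-1}\}$ form an $\fq$-basis of $\F_{q^{n_d}}$, so the restriction of $\Phi$ to the block indexed by $b = d$ is an $\fq$-isomorphism onto $\F_{q^{n_d}}$. Intersecting with $\fq$ cuts this block down to a one-dimensional subspace spanned by $\mathcal{T}_d(x^d)$, while the blocks with $b < d$ remain untouched.

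Homogenizing to degree $d$ then produces the claimed generators in $S_d$, and Lemma \ref{lemassc} guarantees that their evaluations on $X_N$ lie in $\fq^{N+1}$: the coordinate at infinity is $0$ when $b < d$ and equals $1$ for the extra element $\mathcal{T}_d^h(x_1^d)$, and the remaining coordinates are in $\fq$ because they coincide with the affine evaluations of Theorem \ref{baseafin}. Linear independence over $\fq$ follows from the linear independence of the restrictions to the $N$ affine coordinates indexed by $\{1\}\times Y_N$, which form a subset of the affine basis. Finally, a cardinality check matches the proposed set with the dimension computed from $V$ and $\Phi$: $|\Delta_\II|-n_d+1$ in the first case, $|\Delta_\II|$ in the second.

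The main obstacle will be the case $\II_d \subset \Delta$, where the projective constraint at infinity genuinely reduces the dimension compared to the affine subfield subcode. One has to identify, among the $n_d$ traces $\mathcal{T}_d(\xi_d^r x^d)$, the unique $\fq$-line whose $x^d$-coefficient lies in $\fq$, and this is exactly where the $\fq$-basis property of $\{1, \xi_d, \ldots, \xi_d^{n_d-1}\}$ is indispensable; the remainder of the argument is bookkeeping that transfers Theorem \ref{baseafin} across the homogenization together with Lemma \ref{lemassc}.
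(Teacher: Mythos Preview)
Your proof is correct and follows essentially the same approach as the paper: both reduce via Lemma~\ref{lemassc} to the affine description from Theorem~\ref{baseafin} and then analyze the extra constraint $c_d=f(0,1)\in\fq$ at the point at infinity. Your organization through the $\fq$-linear form $\Phi\colon V\to\fqs$, $g\mapsto c_d$, is a clean repackaging of the paper's argument (which subtracts $f(0,1)\,\mathcal{T}_d^h(x_1^d)$ in the case $\II_d\subset\Delta$ and argues $c_d=0$ by support considerations when $\II_d\not\subset\Delta$), but the underlying ideas and the key inputs are identical.
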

\begin{proof}
If we consider 
$$
\bigcup_{b\in \mathcal{B}\mid \II_b\subset \Delta, b< d}\{\mathcal{T}_b^h(\xi_b^r x_1^b)\mid 0\leq r\leq n_b-1 \},
$$
these are functions which have linearly independent evaluations, because when evaluating in $[\{1\}\times X_N]$ they are linearly independent by Theorem \ref{baseafin}. These polynomials do not have the monomial $x_1^d$ in their support. Therefore, by Lemma \ref{lemassc}, they evaluate to $\fq$ in $X_N$. 

If $\II_d\not\subset \Delta$, we are going to see that the evaluation of these polynomials generates the whole subfield subcode. Let $S_{d,\Delta}\subset S_d$ be the linear space generated by $\{x_0^{d-i}x_1^i\mid i \in \Delta\}$, and let $f\in S_{d,\Delta}$ be such that its evaluation is in $\DS$. If $f(0,1)=0$, then, using Theorem \ref{baseafin}, we know that we can generate the evaluation of $f$ with these polynomials. On the other hand, we claim that $f(0,1)\neq 0$ cannot happen in this case, which means that the image by the evaluation map of the stated polynomials generate the whole subfield subcode. If we had $f(0,1)\neq 0$, that would imply that $f(1,x_1)$ has the monomial $x_1^d$ in its support. However, if $\II_d\not\subset\Delta$, then we know that there is at least one $a_1\in \II_d$ which is not in $\Delta$. Therefore, we cannot obtain the monomial $x_1^{a_1}$ in the support of $f(1,x_1)$ because using Theorem \ref{baseafin} in $Y_N$, once you have $x_1^d$ in the support of $f(1,x_1)$, you should have $x_1^a$ in its support for all $a\in\II_d$ because $f(1,x_1)$ should be a linear combination of traces. Therefore, $f(0,1)\neq 0$ is not possible in this case, and the stated polynomials generate the whole subfield subcode.

In the case $\II_d\subset \Delta$ we have that $d\in\B$, i.e., there is a minimal cyclotomic set whose maximal representative is equal to $d$. By Lemma \ref{lemassc}, we have that $\mathcal{T}_d^h(x_1^d)$ evaluates to $\fq$, and it is linearly independent from the other polynomials that we consider because it is the only one that takes a nonzero value at $[0:1]$.

We are going to show now that the evaluation of the given set of polynomials generates the whole code in this case. Let $f\in  S_{d,\Delta} $ such that $f$ evaluates to $\fq$. By Lemma \ref{lemassc}, $f(0,1)$ is in $\fq$. Hence, we can subtract $\mathcal{T}_d^h(x_1^d)$ multiplied by $f(0,1)\in \fq$ and the evaluation would still be in $\fq$. Therefore, we can assume that $f$ does not have the monomial $x_1^d$ in its support, i.e., $f(0,1)=0$. Then we can use the affine case and argue that if $f(1,x_1)$ evaluates to $\fq$, by Theorem \ref{baseafin} it must be a linear combination of the polynomials in 
$$
\bigcup_{b\in \mathcal{B}\mid \II_b\subset \Delta, b< d}\{\mathcal{T}_b(\xi_b^r x_1^b)\mid 0\leq r\leq n_b-1 \}.
$$
The homogenized polynomials that we consider have the same evaluation as these polynomials in $[\{1\}\times Y_N]$, which completes the proof.
\end{proof}

\begin{rem}
We note that we are obtaining a basis which is the image by the evaluation map of some homogeneous polynomials of degree $d$, which we already knew that should be possible because $\DS\subset \D$. 
\end{rem}

\begin{ex}\label{exbaseproyectiva}
We continue with Examples \ref{excyclo} and \ref{extrazas}. We consider $N=9$ and $\Delta=\{0,1,2,3\}$, which means that we have $d(\Delta)=3$. Looking at the cyclotomic sets from Example \ref{excyclo}, we see that $\II_0\cup \II_1\subset \Delta$ (and these are the only complete minimal cyclotomic sets in $\Delta$). By Theorem \ref{baseproyectiva}, taking into account that in this case we have $\II_3=\II_d\subset \Delta$, we see that the evaluation of the following polynomials is a basis for $\PRS(9,\Delta)_3$:
$$
\mathcal{T}_0^h(x_1^0)=x_0^3, \mathcal{T}_3^h(x_1^3)=x_0^2x_1+x_1^3.
$$
We note that the second polynomial is precisely the polynomial $f$ in Example \ref{extrazas}.

If we take $\Delta=\{0,1,2,3,4\}$, then we have $d(\Delta)=4$ and $\II_0\cup \II_1\cup \II_4\subset \Delta$. By Theorem \ref{baseproyectiva}, the evaluation of the following polynomials is a basis for $\PRS(9,\Delta)_3$:
$$
\mathcal{T}_0^h(x_1^0)=x_0^4, \mathcal{T}_3^h(x_1^3)=x_0^3x_1+x_0x_1^3,\mathcal{T}_3^h(\xi x_1^3)=\xi^3 x_0^3x_1+\xi x_0 x_1^3, \mathcal{T}_4^h(x_1^4)=x_1^4.
$$
\end{ex}

\begin{cor}\label{dimprimario}
The dimension of $\PRS(N,\Delta)_q$ is the following:
$$\dim \PRS(N,\Delta)_q=
\begin{cases}
\displaystyle \sum_{b\in \mathcal{B}:\II_b\subset \Delta} n_b - (n_d-1)= \sum_{b\in \mathcal{B}:\II_b\subset \Delta,b<d} n_b +1 &\text{ if }  \II_d\subset \Delta\\
\displaystyle \sum_{b\in \mathcal{B}:\II_b\subset \Delta} n_b &\text{ otherwise }
\end{cases}
$$
\end{cor}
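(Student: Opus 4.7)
The plan is to read off the dimension directly from Theorem \ref{baseproyectiva}, since we already have an explicit basis in both cases and the evaluation map $\ev_{X_N}$ is injective on the subspace spanned by them (by construction of the basis).

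First, I would observe that for every $b \in \B$ with $\II_b \subset \Delta$, the family $\{\mathcal{T}_b^h(\xi_b^r x_1^b) \mid 0 \leq r \leq n_b - 1\}$ has exactly $n_b$ elements, because $\xi_b$ is a primitive element of $\F_{q^{n_b}}$, so the scalars $\xi_b^r$ are distinct. Hence in the case $\II_d \not\subset \Delta$, summing over all such $b$ yields the total count
\[
\dim \PRS(N,\Delta)_q = \sum_{b \in \B:\, \II_b \subset \Delta} n_b,
\]
which is the second branch of the formula.

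For the case $\II_d \subset \Delta$, the small extra observation needed is that $d \in \B$, i.e., $d$ is the maximal representative of its own cyclotomic set $\II_d$. This is immediate: since $d = d(\Delta) = \max \Delta$ and $\II_d \subset \Delta$, every element of $\II_d$ is $\leq d$, so $d = \max \II_d \in \B$. Thus the basis from Theorem \ref{baseproyectiva} contains $n_b$ elements for each $b \in \B$ with $\II_b \subset \Delta$ and $b < d$, together with the single extra polynomial $\mathcal{T}_d^h(x_1^d)$. Adding these gives
\[
\sum_{b \in \B:\, \II_b \subset \Delta,\, b < d} n_b + 1 = \Bigl(\sum_{b \in \B:\, \II_b \subset \Delta} n_b\Bigr) - n_d + 1,
\]
which matches both expressions in the first branch.

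There is no real obstacle here; the proof is a bookkeeping exercise once Theorem \ref{baseproyectiva} is available. The only point worth stating explicitly is the identification $d \in \B$ when $\II_d \subset \Delta$, so that the two equivalent closed forms in the first branch of the formula can be seen to coincide.
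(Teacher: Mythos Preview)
Your proof is correct and is exactly the approach the paper takes: the corollary is stated without proof as an immediate consequence of counting the basis elements provided by Theorem~\ref{baseproyectiva}. The only point you spell out that the paper leaves implicit in the corollary (though it does appear in the proof of Theorem~\ref{baseproyectiva}) is the observation that $\II_d\subset\Delta$ forces $d\in\mathcal{B}$, which is what makes the two expressions in the first branch agree.
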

\begin{rem}
Let $d=d(\Delta)$. If $\II_d\subset\Delta $, we have dimension 1 more than in the affine case with $\Delta\setminus \{d\}$. On the other hand, if  $\II_d\not\subset\Delta$, we obtain a degenerate code with a 0 at the point $[0:1]$. Therefore, the interesting case is when $\II_d\subset\Delta $, which is the one in which we are going to mainly focus in what follows.
\end{rem}

With respect to the minimum distance, if we denote by $\wt(C)$ the minimum distance of a code $C\subset \F_{q^s}^n$, we have $\wt(\D)\geq N-d(\Delta)+1$, which implies that $\wt(\DS)\geq N-d(\Delta)+1$ because $\DS\subset \D$. For the case of subfield subcodes of doubly extended Reed-Solomon codes we obtain the following corollary. 

\begin{cor}\label{parametersPRS}
Let $d\in \mathcal{B}$. The parameters of $\PRS(q^s,\Delta_d)_q$ are $[q^s+1,\displaystyle\sum_{b\in \mathcal{B}:b<d} n_b +1,\geq q^s-d+1]$. Moreover, the first nontrivial (dimension higher than 1) subfield subcode is obtained when $d=q^{s-1}$.
\end{cor}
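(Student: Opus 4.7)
The plan is to derive the length, dimension, and minimum distance bound by direct specialization of the results already established to $N = q^s$ and $\Delta = \Delta_d$, and then to separately identify the smallest $d$ for which the subfield subcode has dimension larger than one. The first three claims are essentially bookkeeping; the real content is in the last statement.

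For the length, the set $X_N$ with $N = q^s$ has $N+1 = q^s+1$ standard representatives by construction. For the dimension, I would check that we are in the first case of Corollary \ref{dimprimario}: since $d \in \mathcal{B}$ is the maximum of its minimal cyclotomic coset $\II_d$, every element of $\II_d$ is at most $d$ and hence lies in $\Delta_d = \{0, 1, \dots, d\}$, giving $\II_d \subset \Delta_d$. By the same maximality, a coset $\II_b$ with $b \in \mathcal{B}$ is contained in $\Delta_d$ if and only if $b \leq d$. Substituting into Corollary \ref{dimprimario} yields the claimed formula. The lower bound on the minimum distance is then the remark preceding the corollary, specialized: $\wt(\PRS(q^s,\Delta_d)_q) \geq \wt(\PRS(q^s,\Delta_d)) \geq q^s - d + 1$, where the last inequality uses the MDS parameters of the doubly extended Reed-Solomon code recalled in the preliminaries.

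The one step that needs a short argument is the identification of $q^{s-1}$ as the smallest positive element of $\mathcal{B}$. Here I would use the classical observation that multiplication by $q$ modulo $q^s - 1$ cyclically shifts the $s$-digit base-$q$ expansion of a nonzero residue. Consequently each minimal cyclotomic coset $\II_z$ (for $z > 0$) is precisely the set of all cyclic rotations of the base-$q$ digit string of $z$, and its maximum is the numerically largest such rotation. For any $z > 0$ at least one digit is nonzero, and placing such a digit in the top position yields a value at least $q^{s-1}$, with equality precisely when the digit string is a single $1$ followed by zeros, i.e., when $\II_z = \II_1 = \{1,q,\dots,q^{s-1}\}$. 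Hence no element of $\mathcal{B}$ lies strictly between $0$ and $q^{s-1}$: any $d \in \mathcal{B}$ with $d < q^{s-1}$ must equal $0$, which gives dimension $1$, while $d = q^{s-1}$ gives dimension $n_0 + 1 = 2$, the first nontrivial value. I do not foresee any serious obstacle; the digit-shift interpretation is standard, and everything else is an immediate consequence of the results already in place.
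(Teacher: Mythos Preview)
Your proposal is correct and follows essentially the same route as the paper: specialize the length, dimension, and minimum distance from the preceding general results, then identify $q^{s-1}$ as the smallest positive element of $\mathcal{B}$. The paper simply asserts the last point as ``clear'' (noting $q^{s-1}=q^s/q$ and that $\II_1=\{1,q,\dots,q^{s-1}\}$ first fits inside $\Delta_d$ at $d=q^{s-1}$), whereas you supply the standard digit-rotation justification; this is more detail, not a different argument.
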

\begin{proof}
The parameters are a special case of the previous results and discussions. For the last statement, it is clear that $q^s/q=q^{s-1}$ is the lowest possible element in $\mathcal{B}$ (besides 0), and $d=q^{s-1}$ is the first degree such that $\II_1=\{1,q,q^2,\dots,q^{s-1}\}\subset \Delta_d$.
\end{proof}

The bound used for the minimum distance of the subfield subcodes of doubly extended Reed-Solomon codes is sharp in all cases we have checked with $d\in \B$. The codes obtained in this way have one more length and dimension than in the affine case, with the same minimum distance.

\begin{ex}
If we look at the results from Example \ref{exbaseproyectiva}, we see that we obtained dimension $2$ and $4$ for $\PRS(9,\Delta_3)_3$ and $\PRS(9,\Delta_4)_3$. These are the values obtained with Corollary \ref{parametersPRS}, because $2=n_0+1$ and $4=n_0+n_3+1$. We would obtain codes with parameters $[10,2,7]$ and $[10,4,6]$ over $\F_3$.
\end{ex}

\section{Dual codes of the previous subfield subcodes}\label{secdual}

In order to compute the dual codes of the previous subfield subcodes, we are going to use Delsarte's Theorem \cite{delsarte}. 

\begin{thm}\label{delsarte}
Let $C\subset \F_{q^s}^n$ be a linear code.
$$
(C\cap \fq^n)^\perp=\Tr(C^\perp),
$$
where $\Tr:\F_{q^s} \rightarrow \fq$, which maps $x$ to $x+x^q+\cdots + x^{q^{s-1}}$, is applied componentwise to $C^\perp$.
\end{thm}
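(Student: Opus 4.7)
The plan is to prove equality of $(C\cap \fq^n)^\perp$ and $\Tr(C^\perp)$ as $\fq$-subspaces of $\fq^n$ by establishing both inclusions, using an auxiliary $\fq$-bilinear trace form on $\F_{q^s}^n$ to deduce the nontrivial direction by a clean dimension / double-duality argument.

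First I would dispose of the easy inclusion $\Tr(C^\perp) \subseteq (C\cap \fq^n)^\perp$. Given $c\in C^\perp$ and $b\in C\cap \fq^n$, the entries of $b$ lie in $\fq$, so $\fq$-linearity of $\Tr$ yields
$$
\sum_{i=1}^n \Tr(c_i)\, b_i \;=\; \Tr\!\left(\sum_{i=1}^n c_i b_i\right) \;=\; \Tr(0) \;=\; 0,
$$
which says that $\Tr(c)$ is Euclidean-orthogonal in $\fq^n$ to every element of $C\cap \fq^n$.

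For the reverse inclusion the key tool is the $\fq$-bilinear form $B_T\colon \F_{q^s}^n\times \F_{q^s}^n\to \fq$ given by $B_T(x,y)=\Tr\!\left(\sum_{i=1}^n x_iy_i\right)$. This form is non-degenerate because the trace $\Tr\colon \F_{q^s}\to \fq$ already induces a non-degenerate $\fq$-bilinear pairing on $\F_{q^s}$. Writing $\perp_T$ for $B_T$-orthogonality inside $\F_{q^s}^n$, I would show $(C^\perp)^{\perp_T}=C$: the inclusion $C\subseteq (C^\perp)^{\perp_T}$ is immediate, and the two sides have the same $\fq$-dimension, because $\dim_{\fq}C=s\dim_{\F_{q^s}}C$ while non-degeneracy of $B_T$ gives $\dim_{\fq}(C^\perp)^{\perp_T}=sn-\dim_{\fq}C^\perp=sn-s(n-\dim_{\F_{q^s}}C)$. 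Using the identity that produced the easy direction, for any $b\in \fq^n$ one has $\sum_i b_i\Tr(c_i)=B_T(b,c)$, and therefore
$$
b\in \Tr(C^\perp)^\perp \iff B_T(b,c)=0\ \text{ for all } c\in C^\perp \iff b\in (C^\perp)^{\perp_T}\cap \fq^n = C\cap \fq^n.
$$
Taking the Euclidean $\perp$ in $\fq^n$ one more time then yields $\Tr(C^\perp)=(C\cap \fq^n)^\perp$, using only that $\Tr(C^\perp)$ is an $\fq$-subspace of $\fq^n$ (which is clear because $\Tr$ is $\fq$-linear).

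I expect the main obstacle to be bookkeeping rather than any genuine mathematical hurdle: three orthogonal complements coexist in the argument (the standard $\F_{q^s}$-form on $\F_{q^s}^n$, the standard $\fq$-form on $\fq^n$, and the $\fq$-valued trace form $B_T$ on $\F_{q^s}^n$), and one has to keep them cleanly separated while verifying that the $\fq$-dimension count giving $(C^\perp)^{\perp_T}=C$ is correct, being careful that $C^\perp$ is viewed as an $\fq$-subspace of dimension $s(n-\dim_{\F_{q^s}}C)$ when applying non-degeneracy of $B_T$.
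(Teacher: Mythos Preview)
Your argument is correct. The easy inclusion is handled cleanly, and for the reverse inclusion your use of the non-degenerate $\fq$-bilinear trace form $B_T$ on $\F_{q^s}^n$, together with the $\fq$-dimension count giving $(C^\perp)^{\perp_T}=C$, is a standard and perfectly valid route to Delsarte's theorem. The final double-duality step in $\fq^n$ is justified exactly as you note, since $\Tr(C^\perp)$ is an $\fq$-subspace.

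As for comparison with the paper: there is nothing to compare. The paper does not prove this statement; it quotes it as Delsarte's Theorem with a citation to the original source and uses it as a black box. So your proposal goes strictly beyond what the paper does here, supplying a self-contained proof where the paper simply invokes the literature.
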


In order to use this result, we need to compute the dual of the codes $\PRS(N,\Delta)$. It is well known that $\PRS(q^s,\Delta_d)^\perp=\PRS(q^s,\Delta_{q^s-1-d})$ (the dual of a doubly extended Reed-Solomon code is another doubly extended Reed-Solomon code). However, computing the dual of the codes $\PRS(N,\Delta)$ in general can be involved. Nevertheless, we can easily compute the dual in some cases. In order to do so, we are going to state the metric structure of these codes first. Part of the following result already appears in  \cite[Prop. 1]{galindostabilizer} and \cite[Lem. 7.1]{hiramdualevaluation}.

\begin{lem}\label{sumasubgrupo}
Let $\gamma$ be a non-negative integer, and $N$ such that $N-1\mid q^s-1$. We consider the monomial $x^\gamma\in\F_{q^s}[x]$. We have the following:
$$
\sum_{z\in Y_N}x^\gamma(z)=
\begin{cases}
N &\text{ if } \gamma=0,\\
0 &\text{ if } \gamma>0 \text{ and } \gamma\not\equiv 0 \bmod (N-1),\\
N-1 &\text{ if } \gamma>0 \text{ and } \gamma\equiv 0 \bmod (N-1).
\end{cases}
$$
\end{lem}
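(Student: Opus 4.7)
The plan is to split the sum over $Y_N$ into two parts using the decomposition $Y_N = \{0\} \cup Y_N^*$, where $Y_N^*$ is the zero locus of $x^{N-1}-1$, i.e., the cyclic subgroup of $\F_{q^s}^*$ of order $N-1$ (which exists because $N-1 \mid q^s-1$). The point $0$ contributes $1$ when $\gamma=0$ and $0$ when $\gamma>0$, so in the latter case the computation reduces to the character-sum style identity $\sum_{z\in Y_N^*} z^\gamma$ over a cyclic group.

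For the case $\gamma=0$, every term in the sum equals $1$ and $\abs{Y_N}=N$, which immediately gives $N$. For $\gamma>0$, I would let $\alpha\in \F_{q^s}$ be a generator of the cyclic group $Y_N^*$ and rewrite
$$
\sum_{z\in Y_N} z^\gamma = \sum_{i=0}^{N-2} \alpha^{i\gamma} = \sum_{i=0}^{N-2} (\alpha^\gamma)^i.
$$
If $(N-1)\mid \gamma$, then $\alpha^\gamma=1$ and each summand equals $1$, giving $N-1$. Otherwise $\alpha^\gamma\neq 1$, and the standard geometric sum formula yields
$$
\sum_{i=0}^{N-2}(\alpha^\gamma)^i = \frac{(\alpha^\gamma)^{N-1}-1}{\alpha^\gamma-1} = \frac{1-1}{\alpha^\gamma-1}=0,
$$
using $(\alpha^\gamma)^{N-1}=(\alpha^{N-1})^\gamma=1$.

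There is no real obstacle here; the lemma is essentially the orthogonality of multiplicative characters for the cyclic group $Y_N^*$, together with the observation that adjoining $0$ only affects the $\gamma=0$ case. The main bookkeeping point is to keep track of the two boundary conventions: the inclusion of the extra point $0$ (which changes $N-1$ into $N$ only when $\gamma=0$), and the fact that the condition $(N-1)\mid\gamma$ is what makes $\alpha^\gamma=1$, so that the geometric-series denominator vanishes and one must evaluate the sum directly rather than via the closed form.
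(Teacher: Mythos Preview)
Your proof is correct and follows essentially the same approach as the paper: both decompose $Y_N=\{0\}\cup Y_N^*$, pick a primitive $(N-1)$th root of unity to parametrize $Y_N^*$, and evaluate the resulting geometric series, distinguishing the cases $\gamma=0$, $(N-1)\mid\gamma$, and $(N-1)\nmid\gamma$ exactly as you do.
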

\begin{proof}
Let $\xi\in \F_{q^s}$ be an element of order $N-1$, which exists because $N-1\mid q^s-1$. Then $Y_N=\{\xi^0,\xi^1,\dots,\xi^{N-2}\}\cup\{0\}$. If $\gamma=0$, $x^\gamma=1$, and the sum is equal to $\abs{Y_N}=N$. If $\gamma>0$ and $\gamma\equiv 0 \bmod (N-1)$, then $x^\gamma(z)=1$ for all $z\in Y_N\setminus \{0\}$, and $\sum_{z\in Y_N}x^\gamma(z)=\abs{Y_N}-1=N-1$. Finally, if $\gamma>0$ and $\gamma\not \equiv 0 \bmod (N-1)$, we have
$$
\sum_{z\in Y_N}x^\gamma(z)=\sum_{i=0}^{N-2}(\xi^{i})^\gamma=\frac{\xi^{\gamma(N-1)}-1}{\xi^\gamma -1}=0.
$$
\end{proof}

\begin{prop}\label{metric}
Let $x_0^{\alpha_0}x_1^{\alpha_1}$ and $x_0^{\beta_0}x_1^{\beta_1}$ be two monomials in $\fqs[x_0,x_1]$ of degree $d_\alpha$ and $d_\beta$, respectively. Then  we have the following for the product of the evaluations over $X_N$.
If $\alpha_1+\beta_1=0$:
$$
\ev_{X_N}(x_0^{\alpha_0}x_1^{\alpha_1})\cdot \ev_{X_N}(x_0^{\beta_0}x_1^{\beta_1})=
\begin{cases}
N+1 &\text{ if }  \alpha_0+\beta_0=0, \\
N &\text{ if } \alpha_0+\beta_0>0 .
\end{cases}
$$
If $\alpha_1+\beta_1>0$:
$$
\ev_{X_N}(x_0^{\alpha_0}x_1^{\alpha_1})\cdot \ev_{X_N}(x_0^{\beta_0}x_1^{\beta_1})=
\begin{cases}
N &\text{ if } \alpha_1+\beta_1\equiv 0\mod (N-1), \alpha_0+\beta_0=0, \\
N-1 &\text{ if } \alpha_1+\beta_1\equiv 0\mod (N-1), \alpha_0+\beta_0>0, \\
1 &\text{ if } \alpha_1+\beta_1\not\equiv 0\mod (N-1), \alpha_0+\beta_0=0 ,\\
0 &\text{ if } \alpha_1+\beta_1\not\equiv 0\mod (N-1), \alpha_0+\beta_0>0. 
\end{cases}
$$
\end{prop}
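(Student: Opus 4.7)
The proof is a direct bookkeeping exercise that splits the evaluation over $X_N$ into its affine and infinity parts and then invokes Lemma \ref{sumasubgrupo}. Here is the plan.

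First, I would observe that by the choice of standard representatives, the set $X_N$ decomposes as $\{(1,z) : z\in Y_N\}\cup\{(0,1)\}$, a disjoint union of $N$ points of the form $(1,z)$ and the single point at infinity $[0:1]$. Consequently, for any two monomials $x_0^{\alpha_0}x_1^{\alpha_1}$ and $x_0^{\beta_0}x_1^{\beta_1}$,
\begin{equation*}
\ev_{X_N}(x_0^{\alpha_0}x_1^{\alpha_1})\cdot \ev_{X_N}(x_0^{\beta_0}x_1^{\beta_1})
= \sum_{z\in Y_N} z^{\alpha_1+\beta_1}\; +\; 0^{\alpha_0+\beta_0}\cdot 1^{\alpha_1+\beta_1}.
\end{equation*}
Since $\alpha_0,\beta_0$ are non-negative integers, the extra term contributes $1$ precisely when $\alpha_0+\beta_0=0$ (i.e.\ $\alpha_0=\beta_0=0$) and $0$ otherwise.

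Next, I would apply Lemma \ref{sumasubgrupo} with $\gamma=\alpha_1+\beta_1$ to evaluate the affine sum $\sum_{z\in Y_N} z^\gamma$, which equals $N$, $N-1$, or $0$ depending on whether $\gamma=0$, $\gamma>0$ with $\gamma\equiv 0 \pmod{N-1}$, or $\gamma>0$ with $\gamma\not\equiv 0 \pmod{N-1}$ respectively.

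Finally, I would combine the two contributions in each of the three cases, producing the six subcases in the statement. For instance, in the first block where $\alpha_1+\beta_1=0$ the affine sum is $N$ and the infinity contribution is $1$ or $0$ according as $\alpha_0+\beta_0=0$ or $>0$, giving $N+1$ and $N$. In the block $\alpha_1+\beta_1>0$ with $\alpha_1+\beta_1\equiv 0\pmod{N-1}$, the affine sum is $N-1$ and we similarly add $1$ or $0$, producing $N$ and $N-1$; and in the remaining block the affine sum vanishes and the result is $1$ or $0$.

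There is no substantive obstacle: the only care needed is to remember the convention that $0^0=1$ when interpreting the value of a degree-$d$ monomial at $(0,1)$, and to note that $\alpha_0+\beta_0=0$ forces both exponents to be $0$ because they are non-negative integers. Once these are in place, the case split follows mechanically from Lemma \ref{sumasubgrupo}.
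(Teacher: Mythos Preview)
Your proposal is correct and follows essentially the same approach as the paper: decompose $X_N$ into the affine part $\{(1,z):z\in Y_N\}$ and the point at infinity, write the inner product as $\sum_{z\in Y_N} z^{\alpha_1+\beta_1}+\epsilon$ with $\epsilon\in\{0,1\}$ determined by whether $\alpha_0+\beta_0=0$, and then invoke Lemma~\ref{sumasubgrupo}. The paper's proof is more terse but otherwise identical in structure.
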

\begin{proof}
First, we expand the scalar product as a sum over $X_N=\{[1:z]\mid z\in Y_N\}\cup \{[0:1]\}\subset P^1$:
$$
\ev_{X_N}(x_0^{\alpha_0}x_1^{\alpha_1})\cdot \ev_{X_N}(x_0^{\beta_0}x_1^{\beta_1})=\sum_{P\in X_N}x_0^{\alpha_0+\beta_0}x_1^{\alpha_1+\beta_1}(P)=\sum_{z\in Y_N}z^{\alpha_1+\beta_1}+\epsilon,
$$
where $\epsilon$ is equal to $1$ if $\alpha_0+\beta_0=0$ and equal to $0$ if $\alpha_0+\beta_0>0$ (corresponding to the evaluation at $[0:1]$). The result is obtained by using Lemma \ref{sumasubgrupo}.
\end{proof}

If $p$ does not divide $N$, we have that the evaluation of $x_1^{\alpha_1}$ with $\alpha_1>0$ is not orthogonal to the evaluation of $x_1^{\beta_1}$ for any $\beta_1$. This means that the dual code $\Dtt$ does not have a basis obtained by the evaluation of monomials unless $\wt(\D)=1$. This is because if we have $\wt(\D)>1$, then $\Dtt$ cannot be degenerate. In particular, there must be a vector in $\Dtt$ such that the coordinate associated to the point $[0:1]$ is nonzero, which is obtained by evaluating a polynomial with some power of $x_1$ in its support, but it cannot be just a single power of $x_1$ because its evaluation would not be orthogonal to the evaluation of $x_1^{d}$. 
Hence, the dual code is not generated by the image by the evaluation map of monomials.

When $p\mid N$, as the next result shows, the previous result gets simplified, and in Proposition \ref{propdual} we will see that in this case the dual code can be generated by the evaluation of monomials.

\begin{cor}\label{metricfacil}
If $p\mid N$, then:
$$
\ev_{X_N}(x_0^{\alpha_0}x_1^{\alpha_1})\cdot \ev_{X_N}(x_0^{\beta_0}x_1^{\beta_1})=
\begin{cases}
1 &\text{ if } \alpha_1+\beta_1=0, \alpha_0+\beta_0=0 \text{ or } \\
&\hspace{0.5cm}\alpha_1+\beta_1\not\equiv 0\mod (N-1), \alpha_0+\beta_0=0, \\
-1 &\text{ if } \alpha_1+\beta_1\equiv 0\mod (N-1), \alpha_i+\beta_i>0,i=0,1,\\
0 &\text{ otherwise.} 
\end{cases}
$$
\end{cor}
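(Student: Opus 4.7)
The plan is to reduce Proposition \ref{metric} modulo $p$, using only the arithmetic facts $N \equiv 0 \pmod{p}$, $N+1 \equiv 1 \pmod{p}$, and $N-1 \equiv -1 \pmod{p}$, which follow from the hypothesis $p \mid N$. Since all scalar products take values in $\F_{q^s}$, a field of characteristic $p$, these congruences become equalities in the ambient field and collapse the six-case statement of Proposition \ref{metric} into the three-case statement of the corollary.

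Concretely, I would go through the cases of Proposition \ref{metric} in order. If $\alpha_1+\beta_1=0$, the proposition gives $N+1=1$ when $\alpha_0+\beta_0=0$ and $N=0$ when $\alpha_0+\beta_0>0$, matching the first branch of the ``value $1$'' clause and the ``otherwise'' clause respectively. If $\alpha_1+\beta_1>0$ and $\alpha_1+\beta_1\equiv 0\pmod{N-1}$, the proposition gives $N=0$ when $\alpha_0+\beta_0=0$ and $N-1=-1$ when $\alpha_0+\beta_0>0$; the latter reproduces the ``$-1$'' case of the corollary, in which both $\alpha_i+\beta_i$ are strictly positive, and the former lands in ``otherwise''. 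Finally, if $\alpha_1+\beta_1>0$ and $\alpha_1+\beta_1\not\equiv 0\pmod{N-1}$, the values $1$ and $0$ from Proposition \ref{metric} carry over unchanged, matching the second branch of the ``value $1$'' clause and the ``otherwise'' clause.

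I do not anticipate any genuine obstacle beyond bookkeeping: the one subtlety to flag is that the second branch of the ``value $1$'' clause, stated as $\alpha_1+\beta_1\not\equiv 0\pmod{N-1}$ with $\alpha_0+\beta_0=0$, implicitly forces $\alpha_1+\beta_1>0$ because $0\equiv 0$. Hence there is no overlap with the first branch and no case of Proposition \ref{metric} left uncovered, so the reduction is exhaustive.
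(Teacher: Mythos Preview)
Your proposal is correct and matches the paper's intended argument: the corollary is stated without proof precisely because it is the straightforward reduction of Proposition~\ref{metric} modulo $p$, and you have spelled out that reduction carefully and exhaustively. The subtlety you flag about the second branch of the ``value $1$'' clause is well observed and confirms the case split is complete.
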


\begin{rem}
One way to have $p\mid N$ is to consider a subfield of $\fqs$, in which case we are going to obtain a doubly extended Reed-Solomon code over that subfield. However, we may also have $p\mid N$ for different subgroups of $\fqs^*$. For example, if we consider $q^s=2^4=16$, then $5$ divides $q^s-1$. Therefore, we can take $N=6$, which is divisible by $2$, but $Y_6$ is not a subfield of $\F_{16}$.
\end{rem}

For obtaining a basis for the dual code we will need to work with non-homogeneous polynomials. In order to understand linear independence in that situation we are going to introduce now a universal Gröbner basis for the vanishing ideal $I(X_N)$. Particular cases of the following result were already present in \cite{decodingRMP}.

\begin{prop}\label{grobnerP1}
A universal Gröbner basis for the ideal $I(X_N)$ is:
$$
I(X_N)=\langle x_0^2-x_0,x_1^N-x_1,(x_0-1)(x_1-1) \rangle.
$$
Therefore, $\ini(I(X_N))=\langle x_0^2,x_1^N,x_0x_1\rangle$ and $\{1,x_0,x_1,x_1^2,\dots,x_1^{N-1} \}$ is a basis for the quotient ring $S/I(X_N)$.
\end{prop}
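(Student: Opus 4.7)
The plan is to verify the ideal equality by a dimension count, which will simultaneously establish the Gröbner basis property for every term order. Let $J=\langle x_0^2-x_0, x_1^N-x_1, (x_0-1)(x_1-1)\rangle$.

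First I would check the easy inclusion $J\subseteq I(X_N)$ by direct evaluation at the $N+1$ points of $X_N$. Recall $X_N=\{(1,z) : z\in Y_N\}\cup\{(0,1)\}$: at every point we have $x_0\in\{0,1\}$ so $x_0^2-x_0$ vanishes; on the affine part $x_1\in Y_N$ kills $x_1^N-x_1$, and at $(0,1)$ we have $1^N-1=0$; finally $(x_0-1)(x_1-1)$ vanishes at $(1,z)$ because of the first factor and at $(0,1)$ because of the second.

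Next, for an arbitrary monomial order, the leading terms of the three generators are $x_0^2$, $x_1^N$ and $x_0x_1$ (the latter since $(x_0-1)(x_1-1)=x_0x_1-x_0-x_1+1$ and $x_0x_1$ dominates any of the lower-degree terms in any order). Hence $\ini(J)\supseteq M:=\langle x_0^2, x_1^N, x_0x_1\rangle$ for every term order, and so the standard monomials with respect to $M$, namely $\{1,x_0,x_1,x_1^2,\dots,x_1^{N-1}\}$, span $S/J$ as an $\F_{q^s}$-vector space. This gives $\dim_{\F_{q^s}} S/J\leq N+1$.

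To close the argument I would invoke the fact that $X_N$ consists of $N+1$ distinct points of $\A^2$, so $\dim_{\F_{q^s}} S/I(X_N)=N+1$. Combined with $J\subseteq I(X_N)$, the inequalities $N+1=\dim S/I(X_N)\leq \dim S/J\leq N+1$ force $J=I(X_N)$, and simultaneously $\ini(J)=M$ with the $N+1$ listed monomials forming a basis of $S/I(X_N)$. Since the choice of term order was arbitrary, this is a universal Gröbner basis.

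The computations are all routine; the only subtle point is handling the monomial order uniformly. The key observation that makes the argument clean is that for the polynomial $(x_0-1)(x_1-1)$ the leading term is $x_0x_1$ in \emph{every} term order (because $x_0x_1$ has strictly larger degree than $x_0$, $x_1$ and $1$, which forces it to be the maximum regardless of how the variables are compared). Once that is noted, the universality follows automatically from the dimension count, which is the main workhorse of the proof.
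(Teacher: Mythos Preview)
Your proof is correct and takes a genuinely different route from the paper. The paper first establishes $J=I(X_N)$ by showing that $J$ is radical (via Seidenberg's Lemma) and then applying the Nullstellensatz, and afterwards verifies the Gr\"obner basis property separately by checking that all $S$-polynomials reduce to zero (using the coprime-leading-term criterion together with the common-factor trick $S(f,g)=w\,S(f/w,g/w)$). You instead bypass both radicality and Buchberger's criterion entirely: the observation that the three leading terms $x_0^2$, $x_1^N$, $x_0x_1$ are order-independent lets a single dimension count $N+1=\dim S/I(X_N)\le \dim S/J\le N+1$ simultaneously force $J=I(X_N)$, pin down $\ini(J)$, and deliver the monomial basis. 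Your argument is more elementary and self-contained (no Seidenberg, no Nullstellensatz, no $S$-polynomial computations); the paper's argument, on the other hand, makes explicit why the ideal is radical and verifies the Gr\"obner property constructively, which is closer to the standard textbook template and perhaps more transparent if one later wants to generalize to more variables.
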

\begin{proof}
First, we are going to show that these polynomials generate the vanishing ideal $I(X_N)$. Given any point in $X_N$, it is clear that it satisfies the equations. Reciprocally, any point satisfying this equations, because of the generator $x_0^2-x_0$, must have the first coordinate equal to 0 or 1. If the first coordinate is equal to 0, because of the generator $(x_0-1)(x_1-1)$, the last coordinate must be 1, i.e., it must be the point $[0:1]\in X_N$. If the first coordinate is equal to 1, then, because of the generator $x_1^N-x_1$, the second coordinate is in $Y_N$, which means that the point is in $X_N$ as well.

We have proved that the variety defined by this ideal is $X_N$. It is clear that the variety defined by this ideal over the algebraic closure $\overline{\F_{q^s}}$ is the same as the variety defined over $\F_{q^s}$. By Seidenberg's Lemma \cite[Prop. 3.7.15]{kreuzer1}, this ideal is radical. Therefore, by Hilbert's Nullstellensatz applied in the algebraic closure, we have that this ideal is the vanishing ideal of the variety that it defines, i.e., is the vanishing ideal of $X_N$.

In order to show that all the S-polynomials of the generators reduce to 0, we just need to use that if the greatest common divisor of the initial monomials of two polynomials is $1$, then their $S$-polynomial reduces to 0 by \cite[Prop. 4, Chapter 2, Section 9]{cox}. In particular, if two polynomials depend on different variables, their $S$-polynomial reduces to 0. And if $f$ and $g$ share a common factor $w$, then $S(f,g)=wS(f/w,g/w)$. Using this, it is easy to see that all the S-polynomials reduce to 0 in this case, for any monomial order. Thus, these generators form a universal Gröbner basis. The initial ideal follows from this fact, and by Macaulay's classical result \cite[Thm. 15.3]{eisenbud} we obtain that the monomials not contained in the initial ideal form a basis for the quotient ring.
\end{proof}

\begin{rem}
Because of the first generator of the previous ideal, any power of $x_0$ is equivalent to $x_0$ in the quotient ring. Therefore, we have $x_0^{\alpha_0}x_1^{\alpha_1}\equiv x_0x_1^{\alpha_1}\bmod I(X_N)$ if $\alpha_0>0$. This is why we are going to assume $\alpha_0=1$ for any monomial divisible by $x_0$ in what follows, except when we want to remark that we can obtain a code by evaluating homogeneous polynomials of a certain degree.
\end{rem}

The following result allows us to express any polynomial in $S/I(X_N)$ in terms of the basis in Proposition \ref{grobnerP1}.

\begin{lem}\label{divisionP1}
Let $a_0,a_1$ be integers, with $a_0>0$. We have that
$$
x_0^{a_0}x_1^{a_1}\equiv x_0+x_1^{a_1}-1 \mod I(X_N).
$$
\end{lem}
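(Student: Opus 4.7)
The plan is to reduce the statement to a single simple identity using the first generator of the Gröbner basis, then derive that identity directly from the third generator $(x_0-1)(x_1-1)$.

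First, I would use that $x_0^2 - x_0 \in I(X_N)$, so by a trivial induction $x_0^{a_0} \equiv x_0 \pmod{I(X_N)}$ for every $a_0 \geq 1$. Multiplying by $x_1^{a_1}$, the claim reduces to showing the single congruence
\[
x_0 x_1^{a_1} \equiv x_0 + x_1^{a_1} - 1 \pmod{I(X_N)}.
\]
The case $a_1 = 0$ is trivial (both sides equal $x_0$), so we may assume $a_1 \geq 1$.

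For this remaining identity, I would observe that in $S = \F_{q^s}[x_0,x_1]$ we have the polynomial factorization
\[
(x_0 - 1)(x_1^{a_1} - 1) = (x_0 - 1)(x_1 - 1)\bigl(1 + x_1 + x_1^2 + \cdots + x_1^{a_1 - 1}\bigr).
\]
Since $(x_0 - 1)(x_1 - 1)$ is one of the generators of $I(X_N)$ listed in Proposition \ref{grobnerP1}, the right-hand side is a multiple of an element of $I(X_N)$, so the entire product $(x_0-1)(x_1^{a_1}-1)$ lies in $I(X_N)$. Expanding this product yields
\[
x_0 x_1^{a_1} - x_0 - x_1^{a_1} + 1 \in I(X_N),
\]
which is exactly the desired congruence.

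There is essentially no obstacle: the whole argument is a direct manipulation of the three generators of $I(X_N)$, and the only observation one needs is the elementary fact that $(x_1 - 1)$ divides $x_1^{a_1} - 1$, so the single syzygy $(x_0-1)(x_1-1) \equiv 0$ propagates to all powers of $x_1$.
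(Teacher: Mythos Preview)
Your proof is correct, but it takes a different route from the paper. The paper's argument is a one-liner: since $I(X_N)$ is by definition the vanishing ideal of $X_N$, it suffices to check that both polynomials take the same value at every point of $X_N$, which is immediate (at $[1:z]$ both sides equal $z^{a_1}$; at $[0:1]$ both sides equal $0$ because $a_0>0$).

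Your approach instead exhibits the difference $x_0^{a_0}x_1^{a_1}-(x_0+x_1^{a_1}-1)$ explicitly as an element of the ideal generated by the three polynomials listed in Proposition~\ref{grobnerP1}. This is a bit longer, but it is more constructive: you never use that those three polynomials \emph{generate} $I(X_N)$ (the part of Proposition~\ref{grobnerP1} that relies on Seidenberg's Lemma and the Nullstellensatz), only that they \emph{belong} to it, which is obvious. So your argument is logically independent of the harder half of Proposition~\ref{grobnerP1}, whereas the paper's evaluation argument implicitly uses that $I(X_N)$ is radical.
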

\begin{proof}
It is easy to check that both polynomials have the same evaluation in $X_N$, which implies that they are in the same class modulo $I(X_N)$.
\end{proof}

\begin{cor}\label{baseP1}
The following monomials constitute a basis for the quotient $S/I(X_N)$:
$$
\{x_1^N,x_0,x_0x_1,\dots,x_0x_1^{N-1} \}.
$$
Moreover, every set of the form $\{ x_1^d,x_0,x_0x_1,\dots,x_0x_1^{d-1}\} $ with $1\leq d \leq N$ is linearly independent.
\end{cor}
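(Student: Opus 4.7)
The plan is to bootstrap off the basis $\{1,x_0,x_1,x_1^2,\dots,x_1^{N-1}\}$ from Proposition \ref{grobnerP1} together with the change-of-basis formulas supplied by Lemma \ref{divisionP1}. Since $\dim_{\fqs} S/I(X_N) = N+1$ and the proposed set has $N+1$ elements, it suffices to prove only linear independence; in fact I would prove the ``moreover'' statement first and then deduce the main statement as the case $d=N$.

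The key algebraic inputs are two congruences modulo $I(X_N)$: from Lemma \ref{divisionP1}, $x_0 x_1^i \equiv x_0 + x_1^i - 1$ for $i\geq 1$, and from the Gröbner basis generator $x_1^N - x_1$, $x_1^N\equiv x_1$. Given a relation
\[
\lambda\, x_1^d + \mu_0 x_0 + \sum_{i=1}^{d-1}\mu_i\, x_0 x_1^i \equiv 0 \pmod{I(X_N)},
\]
substituting the first congruence rewrites the left-hand side as
\[
\lambda\, x_1^d + \Bigl(\mu_0 + \sum_{i=1}^{d-1}\mu_i\Bigr)x_0 + \sum_{i=1}^{d-1}\mu_i\, x_1^i - \Bigl(\sum_{i=1}^{d-1}\mu_i\Bigr)\cdot 1.
\]
If $1\leq d\leq N-1$, the monomials $x_1^d,\,x_1^{d-1},\dots,x_1,\,x_0,\,1$ are all distinct elements of the Proposition \ref{grobnerP1} basis, so matching coefficients in that basis forces $\lambda=0$, $\mu_i=0$ for $1\leq i\leq d-1$, and then $\mu_0=0$.

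The only place a little care is needed is the boundary case $d=N$, which is also the one yielding the main statement; this is the step I expect to be the main obstacle (all others are routine). Here $x_1^d=x_1^N\equiv x_1$ collapses onto the $x_1^1$-slot, so the expression in the standard basis becomes
\[
(\lambda + \mu_1)x_1 + \sum_{i=2}^{N-1}\mu_i\, x_1^i + \Bigl(\mu_0 + \sum_{i=1}^{N-1}\mu_i\Bigr)x_0 - \Bigl(\sum_{i=1}^{N-1}\mu_i\Bigr)\cdot 1.
\]
Matching coefficients in the basis of Proposition \ref{grobnerP1} yields, in order: $\mu_i=0$ for $2\leq i\leq N-1$ (from the $x_1^i$ slots), then $\mu_1=0$ (from the constant term, using the previous vanishings), then $\lambda=0$ (from the $x_1$ slot), and finally $\mu_0=0$ (from the $x_0$ slot). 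This finishes linear independence for every $1\leq d\leq N$, and the $d=N$ case produces $N+1$ independent elements in an $(N+1)$-dimensional space, hence a basis.
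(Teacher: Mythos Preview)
Your proof is correct and follows essentially the same approach as the paper: both use Lemma~\ref{divisionP1} to rewrite $x_0x_1^i$ in terms of the standard basis of Proposition~\ref{grobnerP1}, then read off linear independence by comparing coefficients, with the $d=N$ case giving a basis by a dimension count. The paper merely sketches this as ``easy to check,'' whereas you have spelled out the computation in full, including the slight collision at $d=N$ where $x_1^N\equiv x_1$.
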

\begin{proof}
It is easy to check that these monomials are linearly independent by Lemma \ref{divisionP1} and Proposition \ref{grobnerP1}. The fact that for $d=N$ this set is a basis follows from the cardinality of the set and the dimension of the quotient ring.
\end{proof}

Now we have the tools necessary to deal with the dual as an evaluation code over the projective line. In what follows we are going to assume that $p\mid N$. This is because, by Corollary \ref{metricfacil}, the metric structure is going to be similar to the one of doubly extended Reed-Solomon codes, and in this case the dual code will be generated by the evaluation of monomials. For the following result it will be useful to introduce the definition $\Delta^\perp=\{ \alpha\in \{0,1,\dots,N-1\}\mid \alpha\neq N-1-h,h\in \Delta \}$. 

\begin{prop}\label{propdual}
Let $N$ be a non-negative integer such that $N-1\mid q^s-1$ and $p\mid N$. Let $\Delta\subset \{0,1,\dots,N-1\}$ and let $d=d(\Delta)$. Then $\PRS(N,\Delta)^\perp$ has a basis obtained by taking the image by $\ev_{X_N}$ of the following monomials:
\begin{equation}\label{basedual}
\{x_0x_1^\alpha\mid \alpha\in \Delta^\perp \}\cup \{x_1^{N-1-d} \}.
\end{equation}
Moreover, if $N-1\not \in \Delta$, we can also obtain the same basis by taking the image by $\ev_{X_N}$ of the following monomials of degree $2(N-1)-d$ (which allows us to get the dual code as an evaluation code of homogeneous polynomials):
\begin{equation}\label{basedual2}
\{x_0^{2(N-1)-d-\alpha}x_1^\alpha\mid \alpha\in \Delta^\perp \}\cup \{x_1^{2(N-1)-d} \}.
\end{equation}
If $N-1\in\Delta$, then the following set of homogeneous polynomials of degree $2N-1$ give the same image as the set in item \emph{(\ref{basedual})}:
\begin{equation}\label{basedual3}
\{x_0^{2N-1-\alpha}x_1^\alpha\mid \alpha\in \Delta^\perp \}\cup \{x_1^{2N-1}+x_0^{2N-1}-x_0^{N-1}x_1^N \}.
\end{equation}
\end{prop}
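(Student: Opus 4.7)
My plan is to prove (\ref{basedual}) directly by a dimension/orthogonality/independence argument, and then deduce (\ref{basedual2}) and (\ref{basedual3}) by reducing the listed homogeneous polynomials modulo $I(X_N)$ to the monomials in (\ref{basedual}). First I will verify the dimension count. Since the generators $\{x_0^{d-i}x_1^i : i \in \Delta\}$ of $\PRS(N,\Delta)$ are easily seen to give independent evaluations (their values at the $N$ points $[1:z]$ for $z\in Y_N$ form a Vandermonde-type matrix, and the coordinate at $[0:1]$ separates $i=d$ from $i<d$), we have $\dim \PRS(N,\Delta)=|\Delta|$ and hence $\dim \PRS(N,\Delta)^\perp=(N+1)-|\Delta|=|\Delta^\perp|+1$, matching the cardinality of the set in (\ref{basedual}).

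Next I will check orthogonality against the generators of $\PRS(N,\Delta)$ using Corollary \ref{metricfacil}. For the pairing $\ev_{X_N}(x_0x_1^\alpha)\cdot \ev_{X_N}(x_0^{d-\beta}x_1^\beta)$ with $\alpha\in\Delta^\perp$ and $\beta\in\Delta$, the sum $\alpha_0+\beta_0=1+d-\beta$ is always positive, so a nonzero product requires $\alpha+\beta>0$ and $\alpha+\beta\equiv 0\pmod{N-1}$. Since $\alpha,\beta\leq N-1$, this leaves only $\alpha+\beta=N-1$, ruled out by the definition of $\Delta^\perp$, or $\alpha=\beta=N-1$, ruled out because $0\in\Delta$ forces $N-1\notin\Delta^\perp$. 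A parallel case analysis for $\ev_{X_N}(x_1^{N-1-d})\cdot \ev_{X_N}(x_0^{d-\beta}x_1^\beta)$, split into $\beta<d$ and $\beta=d$, will show that each product lands in a vanishing case: either $0<\alpha_1+\beta_1<N-1$, or the sum is $N\equiv 0\pmod p$.

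For linear independence I will use the simplest possible witness: evaluation at $[0:1]$. Every $x_0x_1^\alpha$ vanishes there, while $x_1^{N-1-d}$ takes value $1$; any linear relation therefore forces the coefficient of $x_1^{N-1-d}$ to be zero, and the remaining relation among $\{x_0x_1^\alpha:\alpha\in\Delta^\perp\}$ is impossible by Corollary \ref{baseP1}. This finishes (\ref{basedual}).

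To deduce (\ref{basedual2}) and (\ref{basedual3}), the task reduces to equivalences in $S/I(X_N)$. When $N-1\notin\Delta$, we have $d\leq N-2$ and the exponent $2(N-1)-d-\alpha$ is positive for all $\alpha\leq N-1$; Lemma \ref{divisionP1} then gives $x_0^{2(N-1)-d-\alpha}x_1^\alpha\equiv x_0x_1^\alpha$, while $x_1^N\equiv x_1$ (from Proposition \ref{grobnerP1}) yields $x_1^{2(N-1)-d}\equiv x_1^{N-1-d}$ by writing the exponent as $(N-2-d)+N$. When $N-1\in\Delta$, so $d=N-1$ and $x_1^{N-1-d}=1$, I will verify by direct evaluation at each point of $X_N$ (using $z^N=z$ for $z\in Y_N$) that $x_1^{2N-1}+x_0^{2N-1}-x_0^{N-1}x_1^N$ is identically $1$ on $X_N$. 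The main obstacle, and the delicate point of the whole argument, is the correct bookkeeping of the exceptional congruence $\alpha+\beta\equiv 0\pmod{N-1}$ in Corollary \ref{metricfacil}: it is exactly the definition of $\Delta^\perp$, together with the standing assumption $0\in\Delta$, that kills every exceptional case.
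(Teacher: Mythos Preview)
Your proof is correct and follows essentially the same approach as the paper's: orthogonality via Corollary~\ref{metricfacil}, linear independence plus a dimension count for (\ref{basedual}), and then reducing the homogeneous sets (\ref{basedual2}) and (\ref{basedual3}) modulo $I(X_N)$ to recover (\ref{basedual}). The only cosmetic difference is that you argue independence by separating at the coordinate $[0:1]$ and invoking Corollary~\ref{baseP1}, whereas the paper points to Lemma~\ref{divisionP1}; both amount to the same normal-form argument in $S/I(X_N)$.
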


\begin{proof}
Using Corollary \ref{metricfacil} it is easy to see that the evaluation of the monomials in (\ref{basedual}) is orthogonal to the vectors in $\D$. When $N-1\not \in \Delta$, using Lemma \ref{divisionP1} it is easy to see that the evaluation of these monomials is linearly independent, and the dimension of this subspace is the same as the dimension of the dual code. If $N-1\in \Delta$, then $x_1^{N-1-d}=1$, and it is easy to see that the monomials that we obtain are linearly independent and generate the dual code. When $N-1\not\in \Delta$, the evaluation of the set (\ref{basedual2}) is clearly the same. Finally, if $N-1\in\Delta$, we have that
$$
x_1^{2N-1}+x_0^{2N-1}-x_0^{N-1}x_1^N\equiv x_1+x_0-x_0x_1\equiv 1\mod I(P^1).
$$
Therefore, the evaluation of the set (\ref{basedual3}) is the same as the one obtained with (\ref{basedual}).
\end{proof}

We have the next result for the case when $p\mid N$, which generalizes what we know about the duality in the case of doubly extended Reed-Solomon codes. We note that, as we are evaluating all the monomials of degree $d$ in the next result, and the set of evaluation points is a complete intersection, the theory from \cite{duursmacompleteintersection} and \cite{gonzalezdualsomereedmuller} could also be used to study the codes $\PRS(N,\Delta_d)$ and their duals.

\begin{cor}\label{dualfacil}
Let $\Delta_d=\{0,1,\dots,d\}$ and $\Delta_{N-1-d}=\{0,1,\dots,N-1-d\}$. If $p\mid N$, then we have that $\PRS(N,\Delta_d)^\perp=\PRS(N,\Delta_{N-1-d})$.
\end{cor}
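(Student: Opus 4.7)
The plan is to apply Proposition \ref{propdual} to the specific set $\Delta = \Delta_d$ and then recognize the resulting basis as (a reduction of) the natural generating set for $\PRS(N,\Delta_{N-1-d})$. Since we are assuming $p \mid N$, Proposition \ref{propdual} gives us an explicit basis for the dual in terms of monomial evaluations, which is exactly the form in which the right-hand side is presented.

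First I would compute $\Delta_d^\perp$ explicitly. By definition, $\Delta_d^\perp = \{\alpha \in \{0,1,\dots,N-1\} \mid \alpha \neq N-1-h \text{ for any } h \in \Delta_d\}$. As $h$ ranges over $\{0,1,\dots,d\}$, the forbidden values $N-1-h$ range over $\{N-1-d,N-d,\dots,N-1\}$, so $\Delta_d^\perp = \{0,1,\dots,N-2-d\}$. Noting that $d(\Delta_d)=d$, the basis \eqref{basedual} from Proposition \ref{propdual} for $\PRS(N,\Delta_d)^\perp$ becomes
\[
\ev_{X_N}\bigl(\{x_0 x_1^\alpha \mid 0\leq \alpha \leq N-2-d\}\cup \{x_1^{N-1-d}\}\bigr).
\]

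Next I would write out the defining generators of the code on the right-hand side. By definition, $\PRS(N,\Delta_{N-1-d})$ is generated by $\ev_{X_N}(x_0^{N-1-d-i}x_1^i)$ for $i\in\{0,1,\dots,N-1-d\}$. By the remark following Proposition \ref{grobnerP1}, any positive power of $x_0$ is equivalent to $x_0$ modulo $I(X_N)$; hence for $0\leq i \leq N-2-d$ (the range where $N-1-d-i \geq 1$) we have $x_0^{N-1-d-i}x_1^i \equiv x_0 x_1^i$, while for $i=N-1-d$ we get the monomial $x_1^{N-1-d}$. Thus the generating set reduces modulo $I(X_N)$ to exactly the set displayed above, so the two codes coincide.

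There is essentially no obstacle here beyond careful bookkeeping: the content is already inside Proposition \ref{propdual} and the quotient-ring simplification from Proposition \ref{grobnerP1}. The only subtle point worth verifying is that the set in \eqref{basedual} truly is a basis (not merely a spanning set) for the dual, so that matching generators forces the codes to be equal rather than one being contained in the other; but this is exactly the content of Proposition \ref{propdual}, whose dimension count already matches the expected dimension $N-d$ of $\PRS(N,\Delta_{N-1-d})$.
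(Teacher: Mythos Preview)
Your proof is correct and is essentially the same as the paper's: both apply Proposition \ref{propdual} to $\Delta_d$, compute $\Delta_d^\perp=\{0,\dots,N-2-d\}$, and identify the resulting monomials with the generators of $\PRS(N,\Delta_{N-1-d})$ via the relation $x_0^k\equiv x_0$ for $k\geq 1$ in $S/I(X_N)$. The only cosmetic difference is direction: the paper homogenizes the basis \eqref{basedual} up to degree $N-1-d$ to obtain the standard generators of $\PRS(N,\Delta_{N-1-d})$, whereas you start from those generators and reduce them modulo $I(X_N)$ to recover \eqref{basedual}.
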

\begin{proof}
We can consider the monomials in (\ref{basedual}), homogenizing up to degree $N-1-d$ with the variable $x_0$. Taking into account that in this case $\Delta^\perp\cup \{N-1-d\}=\Delta_{N-1-d}$ we obtain the result.
\end{proof}

With the evaluation map $\ev_{X_N}$, if we consider the trace function $T:S\rightarrow S$, defined by $f\rightarrow f+f^q+\cdots+f^{q^{s-1}}$, then it is easy to verify that $\ev_{X_N} \circ\: T=\Tr \circ \ev_{X_N}$ ($\Tr$ was defined in Theorem \ref{delsarte}). Then we see that if $\PRS(N,\Delta)^\perp=\ev_{X_N} (\langle \{  f_1,f_2,\dots,f_l  \} \rangle)$, using Theorem \ref{delsarte} and the previous observation we get that 
$$
\DT:=(\PRS(N,\Delta)_q)^\perp=\Tr(\ev_{X_N} (\langle \{  f_1,f_2,\dots,f_l  \} \rangle))=\ev_{X_N}(T( \langle \{  f_1,f_2,\dots,f_l  \} \rangle)).
$$

\begin{rem}\label{basetrazas}
Taking into account that $T$ is linear, then it is clear that in this situation $\DT$ is spanned by the image by the evaluation of the polynomials $T(\gamma f_i)$, $\gamma \in \fqs$, $i=1,\dots,l$. 
\end{rem}

Even if $f_i$, for $i=1,\dots,l$, are monomials, the dual code will be generated by traces of those monomials by Remark \ref{basetrazas}, which in general are going to be non-homogeneous polynomials. We have introduced the vanishing ideal from Proposition \ref{grobnerP1} precisely to understand linear independence of sets of monomials of different degree over $X_N$. In order to state a basis for $\DT$ we will need the following lemma.

\begin{lem}\label{lemaciclotomicos}
Let $\Delta\subset \{0,1,\dots,N-1\}$ and $0<a\neq N-1$. Then we have that $\II_a\subset \Delta \iff \abs{\II_{N-1-a}\cap \Delta^\perp}=0$.
\end{lem}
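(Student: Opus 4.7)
The plan is to unfold the definition of $\Delta^\perp$ and then exhibit an explicit bijection between $\II_a$ and $\II_{N-1-a}$ given by the involution $\alpha\mapsto N-1-\alpha$.

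First, I would translate the statement. By the definition $\Delta^\perp=\{\alpha\in\{0,1,\dots,N-1\}\mid \alpha\neq N-1-h,\,h\in \Delta\}$, the map $\sigma\colon \alpha\mapsto N-1-\alpha$ is an involution on $\{0,1,\dots,N-1\}$, and $\alpha\in \Delta^\perp$ if and only if $\sigma(\alpha)=N-1-\alpha\notin \Delta$. Consequently
$$
\II_{N-1-a}\cap \Delta^\perp=\emptyset\iff N-1-\alpha\in \Delta\text{ for every }\alpha\in \II_{N-1-a}\iff \sigma(\II_{N-1-a})\subset \Delta.
$$
Thus the lemma reduces to the identity $\sigma(\II_{N-1-a})=\II_a$.

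Second, I would verify this identity by a direct orbit calculation. Because $0<a<N-1$, one has $0<N-1-a<N-1$, so both $\II_a$ and $\II_{N-1-a}$ are ``ordinary'' cyclotomic orbits in $\Z/\langle N-1\rangle$, avoiding the distinguished symbols $0$ and $N-1$ (note that $\gcd(q,N-1)=1$, since $N-1\mid q^s-1$, so multiplication by $q$ preserves the subset $\{1,\dots,N-2\}$). On this subset the literal subtraction $N-1-\alpha$ agrees with $-\alpha$ modulo $N-1$, and the involution $\alpha\mapsto -\alpha$ commutes with multiplication by $q$. Hence
$$
\sigma(\II_{N-1-a})=\{-q^i(N-1-a)\bmod (N-1):i\geq 0\}=\{q^i a\bmod (N-1):i\geq 0\}=\II_a.
$$
Combining the two steps gives the equivalence $\II_a\subset \Delta\iff \II_{N-1-a}\cap \Delta^\perp=\emptyset$.

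There is no real obstacle here beyond bookkeeping about representatives: one has to check that when $a\neq 0$ and $a\neq N-1$ the orbit $\II_{N-1-a}$ never hits the boundary symbols $0$ or $N-1$, so that ``$N-1-\alpha$'' computed as an integer coincides with $-\alpha$ in $\Z/\langle N-1\rangle$ and the involution $\sigma$ respects the $q$-multiplication structure. Once this is observed, the identity $\sigma(\II_{N-1-a})=\II_a$ is immediate and the lemma follows.
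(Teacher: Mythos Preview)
Your proof is correct and follows essentially the same approach as the paper's own proof: both exhibit the bijection $h\mapsto N-1-h$ (i.e.\ negation in $\Z/\langle N-1\rangle$) between $\II_a$ and $\II_{N-1-a}$, and then read off the equivalence directly from the definition of $\Delta^\perp$. Your write-up is slightly more careful in isolating the bookkeeping step---that the hypothesis $0<a<N-1$ together with $\gcd(q,N-1)=1$ keeps the orbits away from the special symbols $0$ and $N-1$, so that $N-1-\alpha$ literally equals $-\alpha\bmod(N-1)$---which the paper uses implicitly.
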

\begin{proof}
It is clear that we have a bijection between $\II_a$ and $\II_{-a}$, given by $h\mapsto -h$. In $\Z_N$ we have that $-h\equiv N-1-h \mod N-1$ if $h\neq 0$. Hence, we get a bijection between $\II_a$ and $\II_{N-1-a}$ given by $h\mapsto N-1-h$. Because of the definition of $\Delta^\perp$, we see that if $h\in \Delta$, then $N-1-h\not \in \Delta^\perp$. Thus, it is clear that if $\II_a\subset \Delta$, then $\abs{\II_{N-1-a}\cap \Delta^\perp}=0$, and vice versa.
\end{proof}

\begin{thm}\label{baseproyectivadual}
Let $\Delta$ be a nonempty subset of $\{0,1,\dots,N-1\}$ and let $d=d(\Delta)$. Set $\xi_a$ a primitive element of the field $\F_{q^{n_a}}$ with $\mathcal{T}_a(\xi_a)\neq 0$ (this can always be done \cite{cohen}). A basis for $\DT$ is given by the image by $\ev_{X_N}$ of the following polynomials.

If $\II_d\subset \Delta$:
$$
\bigcup_{a\in \mathcal{A}\mid \II_a\cap \Delta^\perp \neq \emptyset}\{\mathcal{T}_a(\xi_a^r x_0 x_1^a)\mid 0\leq r\leq n_a-1 \}\cup \{ \mathcal{T}_{N-1-d}(\xi_{N-1-d}^r x_1^{N-1-d}) \mid 0\leq r \leq n_{N-1-d}-1\}.
$$

If $\II_d\not\subset \Delta$:

$$
\bigcup_{a\in \mathcal{A}\mid \II_a\cap \Delta^\perp \neq \emptyset}\{\mathcal{T}_a(\xi_a^r x_0 x_1^a)\mid 0\leq r\leq n_a-1 \}\cup \{ \mathcal{T}_{N-1-d}(\xi_{N-1-d} x_1^{N-1-d})\}.
$$
\end{thm}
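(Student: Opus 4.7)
The approach is to combine Delsarte's Theorem \ref{delsarte} with the description of $\PRS(N,\Delta)^\perp$ from Proposition \ref{propdual} (recall that throughout this section we assume $p\mid N$), and then decompose each full trace $T$ in terms of the sub-traces $\mathcal{T}_a$ along the cyclotomic cosets. By Theorem \ref{delsarte} and Remark \ref{basetrazas}, $\DT$ is spanned by the images under $\ev_{X_N}$ of the polynomials $T(\gamma f)$ with $\gamma\in \fqs$ and $f\in \{x_0 x_1^\alpha \mid \alpha\in \Delta^\perp\}\cup\{x_1^{N-1-d}\}$. Since Proposition \ref{grobnerP1} gives $x_0^{q^j}\equiv x_0 \bmod I(X_N)$, the traces of the first family factor as $T(\gamma x_0 x_1^\alpha)=x_0\, T(\gamma x_1^\alpha)$, so the analysis reduces to computing $T(\gamma x_1^\alpha)$ and $T(\gamma x_1^{N-1-d})$.

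The key step is the identity $T(\gamma x_1^\alpha)=\mathcal{T}_a(\eta_\gamma x_1^a)$, valid whenever $\alpha\in \II_a$ for $a\in \mathcal{A}$. Writing $s=n_a m_a$ and expanding $T(\gamma x_1^\alpha)=\sum_{j=0}^{s-1}\gamma^{q^j}x_1^{q^j\alpha}$, the relation $q^{n_a}\alpha\equiv \alpha\pmod{N-1}$ collapses the $s$ exponents of $x_1$ into the $n_a$ distinct ones $\{q^k a:0\leq k\leq n_a-1\}$, and the $m_a$ scalar coefficients collected for each $k$ assemble into a Frobenius twist of the partial trace $\Tr_{\fqs/\F_{q^{n_a}}}(\gamma)$; this produces the required $\eta_\gamma\in \F_{q^{n_a}}$, which ranges over all of $\F_{q^{n_a}}$ as $\gamma$ varies, by surjectivity of the partial trace. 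Since $\{1,\xi_a,\dots,\xi_a^{n_a-1}\}$ is a $\fq$-basis of $\F_{q^{n_a}}$, the $\fq$-span of $\mathcal{T}_a(\eta x_1^a)$ over $\eta\in \F_{q^{n_a}}$ coincides with the $\fq$-span of $\{\mathcal{T}_a(\xi_a^r x_1^a):0\leq r\leq n_a-1\}$, and the same holds after multiplication by $x_0$. Aggregating over all $\alpha\in \Delta^\perp$ yields exactly the first union in the statement, indexed by the $a\in\mathcal{A}$ with $\II_a\cap \Delta^\perp\neq \emptyset$.

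The generator $x_1^{N-1-d}$ produces traces of the form $\mathcal{T}_{N-1-d}(\eta x_1^{N-1-d})$ \emph{without} the factor $x_0$, and the dichotomy in the statement arises via Lemma \ref{lemaciclotomicos}: $\II_d\subset \Delta$ is equivalent to $\II_{N-1-d}\cap \Delta^\perp=\emptyset$. If $\II_d\subset \Delta$, then no generator in $x_0 x_1^{N-1-d}$ appeared in the first union, so all $n_{N-1-d}$ polynomials $\mathcal{T}_{N-1-d}(\xi_{N-1-d}^r x_1^{N-1-d})$ must be adjoined. If $\II_d\not\subset \Delta$, then $\II_{N-1-d}$ already contributes $n_{N-1-d}$ generators with the factor $x_0$, and modulo $I(X_N)$ one computes
\[
\mathcal{T}_{N-1-d}(\xi_{N-1-d} x_1^{N-1-d})-\mathcal{T}_{N-1-d}(\xi_{N-1-d} x_0 x_1^{N-1-d})\equiv (1-x_0)\,\mathcal{T}_{N-1-d}(\xi_{N-1-d} x_1^{N-1-d}),
\]
which vanishes at every $[1:z]$, $z\in Y_N$, and evaluates at $[0:1]$ to $\mathcal{T}_{N-1-d}(\xi_{N-1-d})\neq 0$ by the choice of $\xi_{N-1-d}$; hence a single extra polynomial supplies the missing coordinate at infinity.

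Linear independence of the proposed basis I would verify via the quotient monomial basis $\{x_1^N,x_0,x_0 x_1,\dots,x_0 x_1^{N-1}\}$ from Corollary \ref{baseP1}: the $\mathcal{T}_a(\xi_a^r x_0 x_1^a)$ reduce to $\fq$-independent combinations of the distinct monomials $x_0 x_1^{q^i a}$, and the extra traces in $x_1$ alone are separated from the rest by evaluation at $[0:1]$. A final dimension count, using $\sum_{b\in \mathcal{B}} n_b=N$ together with Lemma \ref{lemaciclotomicos}, matches the size of the proposed basis with $N+1-\dim\DS$ from Corollary \ref{dimprimario}. The main obstacle will be the second case ($\II_d\not\subset \Delta$): cleanly isolating the evaluation at $[0:1]$ and invoking the hypothesis $\mathcal{T}_a(\xi_a)\neq 0$ from \cite{cohen} to ensure that the single adjoined polynomial produces a genuinely new direction, independent from the $x_0$-supported traces already present in the first union.
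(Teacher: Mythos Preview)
Your proposal is correct and follows essentially the same route as the paper: both reduce the full traces $T(\gamma f)$ of the monomial generators from Proposition \ref{propdual} to the sub-traces $\mathcal{T}_a$ along cyclotomic cosets, invoke Lemma \ref{lemaciclotomicos} for the dichotomy, and in the case $\II_d\not\subset\Delta$ isolate the contribution at $[0:1]$ via the hypothesis $\mathcal{T}_{N-1-d}(\xi_{N-1-d})\neq 0$. The only minor differences are that you spell out the reduction $T\to\mathcal{T}_a$ via surjectivity of the intermediate field trace (the paper argues more tersely that the $n_a$ traces $\mathcal{T}_a(\xi_a^r x_0x_1^a)$ are independent and that no more than $n_a$ independent vectors can arise from the $n_a$ monomials of a coset), and you close with a dimension count against Corollary \ref{dimprimario} rather than arguing independence directly from disjoint supports via Corollary \ref{baseP1}.
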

\begin{proof}
In Remark \ref{basetrazas} we saw that it is enough to consider the traces of multiples of the monomials whose images span $\PRS(N,\Delta)^\perp$. Therefore, we have that the traces of multiples of the monomials in (\ref{basedual}) span $\Tr(\PRS(N,\Delta)^\perp)=\DT$. Moreover, it is enough to consider the following traces for $\II_a$ with $\II_a\cap \Delta^\perp\neq \emptyset$
$$
\{\mathcal{T}_a(\xi_a^r x_0 x_1^a), 0\leq r\leq n_a-1\}
$$
because they are linearly independent (a dependence relation would give a polynomial relation on $\xi_a$ of degree less than $n_a$) and there are $n_a$ of them, which is the maximum dimension that we can get with $n_a$ monomials. The same reasoning shows that it is enough to consider the following traces for the monomial $x_1^{N-1-d}$:
\begin{equation}\label{trazasx1}
    \{\mathcal{T}_{N-1-d}(\xi_{N-1-d}^r x_1^{N-1-d}), 0\leq r\leq n_{N-1-d}-1\},
\end{equation}
which are linearly independent between them as well. 

If $\II_d\subset \Delta$, by Lemma \ref{lemaciclotomicos} we have that $\abs{\II_{N-1-d}\cap \Delta^\perp}=\emptyset$. Hence, when we consider all of these sets of polynomials together, they are independent because between sets corresponding to different cyclotomic sets $\II_a$ we have polynomials with disjoint support (the monomials that we are considering are linearly independent in $S/I(X_N)$ by Corollary \ref{baseP1}).

On the other hand, when $\II_d\not \subset \Delta$, by Lemma \ref{lemaciclotomicos} we know that there is at least one element $h\in \II_{N-1-d}\cap \Delta^\perp$. The argument for the previous case works in this case, except when considering the traces of polynomials associated to $\II_{N-1-d}$ and the polynomials in (\ref{trazasx1}), because by Lemma \ref{divisionP1} we will have the same powers of $x_1$. However, if from the later set of polynomials we only consider $\mathcal{T}_{N-1-d}(\xi_{N-1-d}x_1^{N-1-d})$, then the linear independence is clear because this polynomial is equal to $\mathcal{T}_{N-1-d}(\xi_{N-1-d})\neq 0$ at $[0:1]$ (because of the choice of the primitive elements), while the rest of polynomials that we are considering are $0$ at $[0:1]$. Moreover, with these polynomials we can generate the rest of the polynomials in (\ref{trazasx1}) taking into account Lemma \ref{divisionP1}:

\begin{align*}
    \mathcal{T}_a(\xi_a^r x_0 x_1^a)=& \xi_a^r(x_0+x_1^a-1)+\xi_a^{qr}(x_0+x_1^{qa}-1)+\cdots+\xi_a^{q^{n_a-1}r}(x_0+x_1^{q^{n_a-1}a}-1)=\\ &\mathcal{T}_a(\xi_a^r)(x_0-1)+\mathcal{T}_a(\xi_a^r x_1^a).
\end{align*}

With $r=1$ we see that we can generate $(x_0-1)$ with the polynomials we are considering, and with $(x_0-1)$ we can generate the rest of polynomials in (\ref{trazasx1}) because $\mathcal{T}_a(\xi_a^r)\in \F_q$.
\end{proof} 

In the case $\II_{d(\Delta)}\not\subset \Delta$ of the previous result, we have seen that we can generate $(x_0-1)$. The evaluation of this polynomial on $P^1$ gives a codeword with Hamming weight $1$, which means that $\DT$ has minimum distance 1. This is equivalent to having that $\PRS(N,\Delta)_q$ is a degenerate code (it has a common zero in the coordinate associated to the point $[0:1]$). Once again, we see that the interesting case for us is when $\II_{d(\Delta)}\subset \Delta$.

\begin{ex}
We continue with example \ref{extrazas}. Let $\Delta_4=\{0,1,2,3,4\}$, which implies $d(\Delta_4)=4$. We are going to obtain a set of polynomials such that its image by the evaluation map is a basis for $(\PRS(9,\Delta_4)_3)^\perp$. We have that $\Delta_4^\perp=\{0,1,2,3\}$. The minimal cyclotomic sets $\II_a$ with $\II_a\cap\Delta^\perp\neq \emptyset$ are $\II_0,\II_1$ and $\II_2$. As in the previous examples, if $\xi$ is a primitive element of $\F_9$, by Theorem \ref{baseproyectivadual}, we obtain the following set of polynomials:
$$
\begin{aligned}
\T_0(x_0)=x_0, \T_1(x_0x_1)=x_0x_1+x_0^3x_1^3, \T_1(\xi x_0x_1)=\xi x_0x_1+\xi x_0^3x_1^3\\
\T_2(x_0x_1^2)=x_0x_1^2+x_0^3x_1^6, \T_2(\xi x_0x_1^2)=\xi x_0x_1^2+\xi^3 x_0^3x_1^6, \T_{4}(x_1^4)=x_1^4.
\end{aligned}
$$
In all the previous expressions, we can reduce the exponent of $x_0$ to 1 and the evaluation would not change.
\end{ex}

As a consequence of Theorem \ref{baseproyectivadual}, we obtain directly an explicit formula for the dimension of $\DT$ without using the dimension of the primary codes from Corollary \ref{dimprimario}.

\begin{cor}\label{dimdual}
Let $\Delta\subset \{0,1,\dots,N-1\}$ and let $d=d(\Delta)$. The dimension of $\DT$ is equal to:
$$
\dim \DT=
\begin{cases}
\displaystyle \sum_{a\in \mathcal{A}\mid \II_a\cap \Delta^\perp\neq \emptyset}n_a +n_d &\text{ if } \II_d\subset \Delta\\
\displaystyle \sum_{a\in \mathcal{A}\mid \II_a\cap \Delta^\perp\neq \emptyset}n_a +1 &\text{ if } \II_d\not\subset\Delta
\end{cases}
$$
\end{cor}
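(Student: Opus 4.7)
The plan is to read off the dimension directly from the explicit basis furnished by Theorem \ref{baseproyectivadual}. Since Theorem \ref{baseproyectivadual} already produces a basis of $\DT$ (i.e., linearly independent generators), the only remaining task is to count how many polynomials appear in each of the two cases, and to account for a small subtlety about which minimal cyclotomic sets already contribute to the main sum.

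In the case $\II_d\subset\Delta$, the basis from Theorem \ref{baseproyectivadual} splits into two groups. The first group contributes exactly $\sum_{a\in\mathcal{A}\,:\,\II_a\cap\Delta^\perp\neq\emptyset} n_a$ polynomials, one $n_a$-block for each minimal cyclotomic set meeting $\Delta^\perp$. The second group contributes $n_{N-1-d}$ polynomials, namely the traces $\mathcal{T}_{N-1-d}(\xi_{N-1-d}^r x_1^{N-1-d})$ for $0\le r\le n_{N-1-d}-1$. The bijection $h\mapsto N-1-h$ between $\II_d$ and $\II_{N-1-d}$ used in Lemma \ref{lemaciclotomicos} shows that $n_{N-1-d}=n_d$, giving the claimed $+n_d$. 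I need to check that this block is not already counted in the first sum: but Lemma \ref{lemaciclotomicos} applied to $\II_d\subset\Delta$ yields $\II_{N-1-d}\cap\Delta^\perp=\emptyset$, so the minimal cyclotomic set of $N-1-d$ is excluded from the first sum, and there is no double counting.

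In the case $\II_d\not\subset\Delta$, the same first group contributes $\sum_{a\in\mathcal{A}\,:\,\II_a\cap\Delta^\perp\neq\emptyset} n_a$ basis vectors, and the second group consists of the single polynomial $\mathcal{T}_{N-1-d}(\xi_{N-1-d}x_1^{N-1-d})$, adding $1$. Here Lemma \ref{lemaciclotomicos} tells us that $\II_{N-1-d}\cap\Delta^\perp\neq\emptyset$, so the contribution $n_{N-1-d}$ coming from traces of $x_0 x_1^{a}$ with $a\in\II_{N-1-d}$ is already included in the first sum; the extra polynomial is independent of those because it takes a nonzero value at $[0\!:\!1]$ while all others vanish there, as recorded in the proof of Theorem \ref{baseproyectivadual}. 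Adding $1$ therefore yields the stated formula. Since no real obstacle arises, the main thing to be careful about is simply to invoke Lemma \ref{lemaciclotomicos} in the right direction in each case to confirm that the two groups of basis vectors are disjoint.
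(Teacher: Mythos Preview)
Your proof is correct and follows the same approach as the paper, which presents this corollary as an immediate count of the basis elements given in Theorem \ref{baseproyectivadual}. Your explicit verification that $n_{N-1-d}=n_d$ via the bijection from Lemma \ref{lemaciclotomicos}, and that no double counting occurs, spells out details the paper leaves implicit.
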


Now we are going to turn our attention to the minimum distance of the dual code $\DT$. In the affine case, a BCH-type bound has been used frequently for the minimum distance of the duals of the subfield subcodes of Reed-Solomon codes. If one considers the code $\E$ with $\Delta=\II_{a_0}\cup \II_{a_1}\cup \cdots\cup \II_{a_l}$ a union of cyclotomic sets, then this code is Galois invariant in the sense of \cite{bierbrauercyclic}, i.e., $\E=\left(\E\right)^q$. By \cite[Thm. 4]{bierbrauercyclic}, we have that $\Tr(\E)=\EE$. We can write Theorem \ref{delsarte} in the following way: $C^\perp\cap \fq^n=\Tr(C)^\perp$. Therefore, we have that $\left(\EE\right)^\perp=\left( \RS(N,\Delta)^{\perp}\right)_q $. For $(\RS(N,\Delta)^{\perp})_q$, it is easy to see that we have a BCH-type bound because we can consider the generator matrix of $\E$ as a pseudo-parity check matrix for the code $(\RS(N,\Delta)^{\perp})_q$ (as we did with the matrix in (\ref{pseudoparity}) for BCH codes). If we have $t$ consecutive exponents in $\Delta$, we have a Vandermonde matrix as a submatrix of the generator matrix for $\E$ and we get that $\wt\left(\ET\right)\geq t+1$.

In the projective case, arguing in a similar way, we get that, if we have $t$ consecutive exponents in $\Delta$, we have the BCH-type bound $\wt\left((\PRS(N,\Delta)^\perp)_q\right)\geq t+1$. However, even if $\Delta$ is a union of cyclotomic sets, we will see in Remark \ref{remnoGI} and Example \ref{excont} that in the projective case we do not have in general that $\D$ is Galois invariant, and thus we do not have the equality between $(\PRS(N,\Delta)^\perp)_q$ and $\DT$ in general. Nevertheless, we can still use the affine case in order to get a bound for the minimum distance. If we have a code $C\subset \fq^n$, we are going to denote by $(C,0):=\{(u_1,\dots,u_n,0)\in \fq^{n+1}\mid u=(u_1,\dots,u_n)\in C \}$. In what follows, we are  going to assume that the coordinate associated to the point $[0:1]$ is the last one. We recall that $\mathcal{A}$ (resp. $\mathcal{B}$) is the set of minimal representatives (resp. maximal representatives) of the minimal cyclotomic sets. We are going to denote $\Delta':=\Delta\setminus \{ d\}$, and $(\Delta')_\II=\bigcup_{b\in\mathcal{B},b<d\mid \II_b\subset \Delta}\II_b\subset \Delta'$ as before.

\begin{prop}\label{bchmala}
Let $\Delta\subset \{0,1,\dots,N-1\}$. We assume that $d(\Delta)\in \mathcal{B}$ with $\II_{d(\Delta)}\subset \Delta$. If $t$ is the number of consecutive exponents in $(\Delta')_\II$, then we have that $\wt\left(\DT\right)\geq t+1$.
\end{prop}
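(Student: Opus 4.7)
The plan is to reduce the statement to the affine BCH-type bound discussed just before the proposition, by restricting every codeword of $\DT$ to the first $N$ coordinates (those indexed by $Y_N\subseteq X_N$) and comparing with $(\RS(N,(\Delta')_\II)_q)^\perp$.

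Let $v\in\DT$ be nonzero, and write $v=(v',v_{N+1})$ with $v'\in\fq^N$ indexed by the points $[1:z]$, $z\in Y_N$, and $v_{N+1}\in\fq$ indexed by $[0:1]$. The first step is to prove that $v'$ lies in $(\RS(N,(\Delta')_\II)_q)^\perp$. For each $b\in\mathcal{B}$ with $b<d$ and $\II_b\subseteq\Delta$, Theorem \ref{baseproyectiva} yields $\ev_{X_N}(\mathcal{T}_b^h(\xi_b^r x_1^b))\in\PRS(N,\Delta)_q$ for $0\leq r\leq n_b-1$. Since $b$ is the maximum of $\II_b$, the polynomial $\mathcal{T}_b(\xi_b^r x_1^b)$ has degree at most $b<d$, so its homogenization $\mathcal{T}_b^h(\xi_b^r x_1^b)$ carries a positive power of $x_0$ in every monomial and vanishes at $[0:1]$; meanwhile its restriction to $[\{1\}\times Y_N]$ coincides with $\ev_{Y_N}(\mathcal{T}_b(\xi_b^r x_1^b))$. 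By Theorem \ref{baseafin}, as $b$ and $r$ vary, these vectors span $\RS(N,(\Delta')_\II)_q$. The orthogonality relations $v\cdot\ev_{X_N}(\mathcal{T}_b^h(\xi_b^r x_1^b))=0$ therefore translate, coordinate by coordinate, into $v'\cdot u=0$ for every $u$ in a spanning set of $\RS(N,(\Delta')_\II)_q$, proving the claim.

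The second step is to show that $v'\neq 0$. Since $\II_d\subseteq\Delta$ by hypothesis, Theorem \ref{baseproyectiva} gives $\ev_{X_N}(\mathcal{T}_d^h(x_1^d))\in\PRS(N,\Delta)_q$, and at the point $[0:1]$ only the monomial $x_1^d$ contributes, so this codeword has last coordinate equal to $1$. If $v'=0$, orthogonality with this codeword forces $v_{N+1}=0$, contradicting $v\neq 0$.

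Finally, the set $(\Delta')_\II$ is by construction a union of minimal cyclotomic cosets, so $\RS(N,(\Delta')_\II)$ is Galois invariant and the BCH-type argument recalled just before the statement applies to $(\RS(N,(\Delta')_\II)_q)^\perp$: the $t$ consecutive exponents in $(\Delta')_\II$ produce a Vandermonde submatrix in a generator matrix of $\RS(N,(\Delta')_\II)$, forcing any nonzero vector of $(\RS(N,(\Delta')_\II)_q)^\perp$ to have Hamming weight at least $t+1$. Applied to the nonzero $v'$, this gives $\wt(v)\geq\wt(v')\geq t+1$. The only real subtlety is a bookkeeping one: checking that the basis elements of $\PRS(N,\Delta)_q$ associated to $b<d$ vanish at $[0:1]$ and their restrictions to $Y_N$ span the full affine subfield subcode, while $\mathcal{T}_d^h(x_1^d)$ alone detects the $[0:1]$-coordinate, thereby decoupling the affine and infinity contributions cleanly.
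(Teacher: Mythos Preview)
Your argument is correct and follows essentially the same route as the paper's proof. The paper phrases the first step as the containment $\DS\supset(\RS(N,\Delta')_q,0)$ and then dualizes, while you spell out this containment by exhibiting the basis elements $\ev_{X_N}(\mathcal{T}_b^h(\xi_b^r x_1^b))$ from Theorem~\ref{baseproyectiva}; likewise, the paper rules out $v'=0$ by observing that $(0,\dots,0,1)\notin\DT$ since $\DS$ is nondegenerate, whereas you make this explicit via the basis element $\ev_{X_N}(\mathcal{T}_d^h(x_1^d))$ with last coordinate~$1$.
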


\begin{proof}
We assume that the point $[0:1]$ corresponds to the last coordinate. We have $\DS \supset (\RS(N,\Delta')_q,0)$, which implies
$$
\DT\subset (\RS(N,\Delta')_q,0)^\perp=((\RS(N,\Delta')_q)^\perp,0)+\langle (0,\dots,0,1)\rangle .
$$

We know that $(0,\dots,0,1)\not\in \DT$ because that would imply that $\DS$ is degenerate, and that is not the case because of the assumptions that we have made. Thus, any vector in $\DT$ must belong to $(\RS(N,\Delta')_q)^\perp$ after puncturing the last coordinate, and therefore the weight of any vector in $\DT$ must be at least $t+1$ because of the BCH-type bound for $(\RS(N,\Delta')_q)^\perp$.
\end{proof}

As a corollary, we have the following result about the duals of the subfield subcodes of doubly extended Reed-Solomon codes.

\begin{cor}
Let $\Delta_d=\{0,1,\dots,d\}$ with $d\in \mathcal{B}$. If $t$ is the number of consecutive exponents in $(\Delta'_d)_\II$, the parameters of $(\PRS(q^s,\Delta_d)_q)^\perp$ are $[q^s+1,\sum_{a\in \mathcal{A}\mid \II_a\cap \Delta ^\perp\neq \emptyset}n_a,\geq t+1]$.
\end{cor}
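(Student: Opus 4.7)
The plan is to obtain all three parameters as a direct specialization of the results already established in this section to the initial-segment case $\Delta = \Delta_d = \{0, 1, \dots, d\}$ with $N = q^s$, so that $\PRS(q^s, \Delta_d)$ is the doubly extended Reed-Solomon code. No new argument is needed; the only bookkeeping is to check that the hypotheses of Corollary \ref{dimdual} and Proposition \ref{bchmala} are satisfied.

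For the length, it is immediate that $|X_N| = N + 1 = q^s + 1$, since we evaluate at the $q^s$ standard representatives coming from $Y_N = \F_{q^s}$ together with the point at infinity $[0:1]$. For the dimension, I would first verify that $\II_d \subset \Delta_d$: this follows instantly from the assumption $d \in \mathcal{B}$, since $d$ being the maximal representative of its minimal cyclotomic coset forces every element of $\II_d$ to lie in $\{0, 1, \dots, d\} = \Delta_d$. This places us in the first case of Corollary \ref{dimdual} and yields the claimed dimension. As a consistency check, the same count can be obtained from the duality relation $\dim C + \dim C^\perp = q^s + 1$ combined with the formula in Corollary \ref{parametersPRS}, after using the bijection $a \mapsto N-1-a$ between minimal and maximal representatives of cyclotomic cosets (which preserves the sizes $n_a = n_{N-1-a}$).

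For the lower bound on the minimum distance, I would apply Proposition \ref{bchmala} verbatim. Its hypotheses are that $d(\Delta) = d$ belongs to $\mathcal{B}$ and that $\II_d \subset \Delta$; the first is part of the assumption of the corollary, and the second was just verified. The proposition then yields $\wt((\PRS(q^s, \Delta_d)_q)^\perp) \geq t+1$, where $t$ is the number of consecutive exponents in $(\Delta_d')_\II$, which is precisely the bound in the statement.

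I do not foresee any real obstacle here: the corollary is essentially a transcription of the preceding theorems and proposition to the classical case of doubly extended Reed-Solomon codes, and the only substantive step is the one-line implication $d \in \mathcal{B} \Rightarrow \II_d \subset \Delta_d$ that unlocks the ``$\II_d \subset \Delta$'' branch of both Corollary \ref{dimdual} and Proposition \ref{bchmala}.
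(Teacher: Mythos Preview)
Your plan is the right one and matches the paper's implicit reasoning: the corollary is meant to be read off directly from Corollary~\ref{dimdual} (for the dimension) and Proposition~\ref{bchmala} (for the BCH-type bound), once one observes that $d\in\mathcal{B}$ forces $\II_d\subset\Delta_d$.

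There is, however, one point where your write-up is too quick. You assert that the first case of Corollary~\ref{dimdual} ``yields the claimed dimension'', but that corollary gives
\[
\dim\bigl(\PRS(q^s,\Delta_d)_q\bigr)^\perp=\sum_{a\in\mathcal{A}\,\mid\,\II_a\cap\Delta_d^\perp\neq\emptyset}n_a+n_d,
\]
with an extra $n_d$ term that is absent from the formula in the statement you are proving. Your own consistency check via $\dim C+\dim C^\perp=q^s+1$ together with Corollary~\ref{parametersPRS} produces the same value $\sum n_a+n_d$ (try $q=3$, $s=2$, $d=4$: the dual has dimension $6$, while $\sum_{a:\II_a\cap\Delta^\perp\neq\emptyset}n_a=n_0+n_1+n_2=5$). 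So the approach is sound, but rather than confirming the printed dimension it actually reveals that the displayed formula in the corollary is missing the summand $n_d$; you should flag this discrepancy instead of asserting agreement.
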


This estimate would give codes with length 1 more than in the affine case, but same dimension and same bound for the minimum distance. However, the bound for the minimum distance is not sharp in general and we are able to improve upon the affine case in many examples. For instance, in the next result we show that when $\abs{\II_d}=1$ we have a better estimate for the minimum distance.

\begin{prop}\label{galoisinvariant}
Let $\Delta\subset \{0,1,\dots,N-1\}$ such that $\abs{\II_{d(\Delta)}}=1$. Then $\PRS(N,\Delta_\II)$ is Galois invariant, we have that $(\PRS(N,\Delta_\II)^\perp)_q=(\PRS(N,\Delta_\II)_q)^\perp=\DT$, and, if there are $t$ consecutive exponents in $\Delta_\II$, the parameters of $\DT$ are $[N+1,\sum_{a\in \mathcal{A}\mid \II_a\cap \Delta_\II^\perp\neq \emptyset}n_a+1, \geq t+1]$.
\end{prop}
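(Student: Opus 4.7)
The plan is to verify the three assertions in order---Galois invariance of $\PRS(N,\Delta_\II)$, the chain of equalities among dual codes, and the stated parameters---each leveraging the hypothesis $|\II_d|=1$ with $d=d(\Delta)$.

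For Galois invariance, I will check that componentwise $q$-th power sends the generator $\ev_{X_N}(x_0^{d-i}x_1^i)$ (for $i\in\Delta_\II$) to another generator. This amounts to evaluating $x_0^{q(d-i)}x_1^{qi}$, which I will compare with $\ev_{X_N}(x_0^{d-i'}x_1^{i'})$, where $i'=qi \bmod (N-1)\in\II_i\subset\Delta_\II$. At the points $[1:z]$ with $z\in Y_N$, agreement is immediate from $z^{qi}=z^{i'}$. The subtle check is at $[0:1]$, where the first monomial takes value $0$ if $i<d$ and $1$ if $i=d$, and the second takes value $0$ if $i'<d$ and $1$ if $i'=d$. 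The hypothesis $|\II_d|=1$ gives precisely the equivalence $i=d \Leftrightarrow i'=d$: if $i=d$ then $i'\in\II_d=\{d\}$; and if $i<d$ with $\II_i\subset\Delta_\II$, all elements of $\II_i$ lie in $\Delta_\II\subseteq\{0,\ldots,d\}$, yet $i'=d$ would force $\II_i=\II_d=\{d\}$, contradicting $i<d$. Thus Frobenius maps generators to generators.

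Once Galois invariance is established, the equality $(\PRS(N,\Delta_\II)^\perp)_q=(\PRS(N,\Delta_\II)_q)^\perp$ follows as in \cite[Thm.~4]{bierbrauercyclic}: Galois invariance passes to the dual (the Euclidean pairing with $\F_q$-coefficients commutes with Frobenius), Delsarte's Theorem \ref{delsarte} gives $(\PRS(N,\Delta_\II)_q)^\perp=\Tr(\PRS(N,\Delta_\II)^\perp)$, and the trace of a Galois-invariant code equals its subfield subcode. For the outer equality with $\DT$, I will argue $\DS=\PRS(N,\Delta_\II)_q$: the inclusion $\supseteq$ holds since $\Delta_\II\subseteq\Delta$, and equality of dimensions follows from Corollary \ref{dimprimario}, whose formula depends only on complete cyclotomic cosets contained in $\Delta$, which are by definition exactly those contained in $\Delta_\II$ (and $\II_d\subset\Delta_\II$ by the hypothesis).

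For the parameters, length $N+1$ is clear, and the dimension comes from Corollary \ref{dimdual} applied to $\Delta_\II$: since $\II_d\subset\Delta_\II$ with $n_d=1$, the dimension equals $\sum_{a\in\mathcal{A},\,\II_a\cap\Delta_\II^\perp\neq\emptyset}n_a + 1$. For the minimum distance, the generator matrix $G$ of $\PRS(N,\Delta_\II)$ serves as a pseudo-parity check matrix for $(\PRS(N,\Delta_\II)^\perp)_q=\DT$. A block of $t$ consecutive exponents in $\Delta_\II$, combined with the auxiliary rows $i=0$ and $i=d$ (both automatically in $\Delta_\II$), lets me decompose a putative codeword as $c=(\bar c, c_{[0:1]})$ with $\bar c$ indexed by $Y_N$, reduce to a Vandermonde-annihilation condition on $\bar c$ of length $t$ (the $i=d$ equation absorbs $c_{[0:1]}$ when the block reaches $d$, and the $i=0$ equation handles the column $[1:0]$), and conclude $\wt(c)\geq t+1$ via the affine BCH-type bound. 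The main obstacle is the Frobenius verification at $[0:1]$: the hypothesis $|\II_d|=1$ is exactly what makes that step go through, and Remark \ref{remnoGI} already hints that it cannot be dispensed with.
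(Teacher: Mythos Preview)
Your proposal is correct and follows essentially the same route as the paper. Both verify Galois invariance by checking that Frobenius permutes the monomial generators (you work pointwise at $[1:z]$ and $[0:1]$, the paper works modulo $I(X_N)$; the crucial check at $[0:1]$ that $i=d\Leftrightarrow i'=d$ is exactly where $|\II_d|=1$ enters in both), then invoke the Delsarte/Bierbrauer argument already laid out in the text to get $(\PRS(N,\Delta_\II)^\perp)_q=(\PRS(N,\Delta_\II)_q)^\perp$, and finally read off the parameters from Corollary~\ref{dimdual} and the BCH-type bound. The only cosmetic differences are that you establish $\PRS(N,\Delta_\II)_q=\DS$ by inclusion plus the dimension formula of Corollary~\ref{dimprimario} (the paper just cites Theorem~\ref{baseproyectiva}), and you spell out the BCH-type bound in the projective setting via the auxiliary rows $i=0$ and $i=d$, whereas the paper simply invokes the bound stated in the discussion preceding Proposition~\ref{bchmala}.
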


\begin{proof}
Let $d=d(\Delta)$. We have that $\PRS(N,\Delta_\II)$ is generated by the evaluation of monomials. Because of the fact that $\Delta_\II$ is a union of cyclotomic sets, we can divide the monomials into sets corresponding to different minimal cyclotomic sets. For $a\neq d$ we have the monomials
$$
\{x_0x_1^\alpha\mid \alpha\in\II_a\subset \Delta \}.
$$
If we consider these monomials to the power of $q$, the set remains invariant in $S/I(P^1)$ because the exponents of $x_1$ are in a cyclotomic set, and the exponent of $x_0$ does not change the evaluation. For $\II_d$ we have that $x_1^{q(N-1-d)}\equiv x_1^{N-1-d}\mod I(P^1)$ because $\abs{\II_d}=1$. Therefore, the set of monomials is invariant under taking powers of $q$, which implies that $\PRS(N,\Delta_\II)=(\PRS(N,\Delta_\II))^q$. Because of the previous discussion, we have that being Galois invariant implies in this case that $(\PRS(N,\Delta_\II)^\perp)_q=(\PRS(N,\Delta_\II)_q)^\perp$. Taking into account that $\PRS(N,\Delta_\II)_q=\DS$ because of Theorem \ref{baseproyectiva}, the parameters are clear from Theorem \ref{baseproyectivadual} and the BCH-type bound.
\end{proof}

In many situations, the previous result gives codes with higher length and dimension than in the affine case. Assuming the hypotheses of the previous result, the affine code with $\ET$ would have parameters $[N,\sum_{a\in \mathcal{A}\mid \II_a\cap \Delta_\II^\perp\neq \emptyset}n_a,\geq t+1]$, meanwhile the projective code $\DT$ would have parameters $[N+1,\sum_{a\in \mathcal{A}\mid \II_a\cap \Delta_\II^\perp\neq \emptyset}n_a+1,\geq t+1]$. 

These codes can also be compared to $(\RS(N,\Delta')_q)^\perp$, with $\Delta'=\Delta\setminus \{d(\Delta)\}$. Taking into account that $\abs{\II_d}=1$, this code has parameters $[N,\sum_{a\in \mathcal{A}\mid \II_a\cap \Delta_\II^\perp\neq \emptyset}n_a+1,\geq t'+1]$, where $t'$ is the number of consecutive exponents in $\Delta'$. We see that this code has the same dimension as $\DT$. However, the bound for the minimum distance is worse than the one for $\DT$. 

The following result shows many situations in which we can use Proposition \ref{galoisinvariant} besides the obvious case with $\Delta=\{0\}$.

\begin{lem}\label{lemaciclotomicotamano1}
Let $q>2$. If $d_{\lambda}:=\lambda(N-1)/(q-1)\in \mathbb{N}$, for some $\lambda$, $1\leq \lambda\leq q-1$, then $\abs{I_{d_\lambda}}=1$.
\end{lem}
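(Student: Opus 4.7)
The plan is essentially a one-line verification, so I will spend most of the space explaining how the definitions line up.

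Recall from Section~2 that a minimal cyclotomic set $\II_z \subset \Z_N$ has cardinality $1$ precisely when $q\cdot z \equiv z \pmod{N-1}$ (with $z$ represented in $\{1,\dots,N\}$), equivalently when $(q-1)z \equiv 0 \pmod{N-1}$. So my task reduces to checking this congruence for $z = d_\lambda$.

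First I would verify that $d_\lambda$ sits in the allowed range of representatives. Since $\lambda(N-1)/(q-1)$ is assumed to be an integer and $1 \le \lambda \le q-1$, we have $1 \le d_\lambda \le N-1$, so $d_\lambda$ is a nonzero element of $\Z_N$ as represented in $\{1,\dots,N-1\}$ and the notion $\II_{d_\lambda}$ makes sense.

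Next comes the computation that does all the work:
\[
(q-1)\, d_\lambda \;=\; (q-1)\cdot\frac{\lambda(N-1)}{q-1} \;=\; \lambda(N-1)\;\equiv\; 0 \pmod{N-1}.
\]
Equivalently, $q\cdot d_\lambda \equiv d_\lambda \pmod{N-1}$, so $d_\lambda$ is fixed under multiplication by $q$ in $\Z_N$. By the characterization recalled above, this forces $\II_{d_\lambda} = \{d_\lambda\}$ and hence $\abs{\II_{d_\lambda}} = 1$.

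There is no real obstacle here; the only thing to be careful about is the bookkeeping between the two conventions for representing elements of $\Z_N$ (the classes of $\Z/\langle N-1\rangle$ are represented in $\{1,\dots,N\}$ in this paper), which is why I check $1\le d_\lambda\le N-1$ first. The hypothesis $q>2$ is not used in the argument itself; it is presumably imposed so that there exist values $\lambda \in \{1,\dots,q-1\}$ other than $\lambda = q-1$ (which would trivially give $d_{q-1}=N-1$), making the lemma produce genuinely new singleton cyclotomic sets to feed into Proposition~\ref{galoisinvariant}.
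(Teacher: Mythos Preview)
Your proof is correct and follows essentially the same approach as the paper: both verify directly that $(q-1)d_\lambda = \lambda(N-1) \equiv 0 \pmod{N-1}$, i.e.\ that $q\cdot d_\lambda \equiv d_\lambda$, which is exactly the condition for $\II_{d_\lambda}$ to be a singleton. Your additional remarks on the range of $d_\lambda$ and the role of the hypothesis $q>2$ are extra bookkeeping not present in the paper's one-line proof, but they are harmless and accurate.
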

\begin{proof}
We only have to observe that
$$
\lambda \frac{N-1}{q-1}q-\lambda \frac{N-1}{q-1}=\lambda(N-1)\equiv 0 \mod N-1.
$$
\end{proof}

\begin{rem}
If $q-1\mid N-1$, then with the previous result we obtain $q-1$ cyclotomic sets with cardinality one besides $\II_0$. For example, if $N=q^s$, then we directly have $q-1\mid N-1$. However, that is not the only case. For example we can consider $q^s=3^8$ and $N=83$. In that situation it can be checked that $q-1=2\mid 82=N-1$, and we have that $\abs{I_{41}}=1$. In this situation, when we have $q-1\mid N-1$, the previous result is actually a characterization of when we have $\abs{I_d}=1$:
$$
\begin{aligned}
\abs{I_d}=1 &\iff d q\equiv d\mod N-1\iff d(q-1)=\lambda(N-1)=\lambda(q-1)\frac{N-1}{q-1} \\
&\iff d=\lambda\frac{N-1}{q-1}, \text{ for some } 1\leq \lambda <q-1.
\end{aligned}
$$
\end{rem}

\begin{ex}\label{exdualciclo}
We consider the field extension $\F_{16}\supset \F_4$, which gives the following minimal cyclotomic sets:
$$
\begin{aligned}
\II_0=\{0\},\II_1=\{1,4\},&\II_2=\{2,8\},\II_3=\{3,12\},\II_5=\{5\},\\
\II_6=\{6,9\},\II_7=\{7,13\},&\II_{10}=\{10\},\II_{11}=\{11,14\},\II_{15}=\{15\}.
\end{aligned}
$$

We see that we have $\abs{\II_{10}}=1$. If we take $\Delta=\{0,1,4,10\}=\II_0\cup \II_1\cup\II_{10}$, then $\Delta^\perp=\{0,1,\dots,N-1\}\setminus \{ \II_{15}\cup \II_{11}\cup \II_5 \}$ and we can use Corollary \ref{dimdual} to compute the dimension. All the cyclotomic sets, besides $\II_5$, $\II_{11}$ and $\II_{15}$, have nonzero intersection with $\Delta^\perp$, and we have $\II_{d(\Delta)}=\II_{10}\subset \Delta$. Hence, by Corollary \ref{dimdual}, $\dim \DT=(n_0+n_1+n_2+n_3+n_6+n_7+n_{10})+n_{10}=13$. For the minimum distance, we have $t=2$ consecutive elements in $\Delta_\II=\Delta$, which gives the following parameters for $\DT$: $[17,13,\geq 3]$.

We can do the same for $\Delta=\{0,1,2,4,8,10\}=\II_0\cup \II_1\cup \II_2\cup \II_{10} $, and we obtain the parameters $[17,11,\geq 4]$. The true parameters are $[17,13,3]$ and $[17,11,4]$, which lengthen the parameters of the affine case $[16,12,3]$ and $[16,10,4]$. We see that the bound for the minimum distance coincides with the real minimum distance in this case. 
\end{ex}

\begin{rem}\label{remnoGI}
If we do not assume in Proposition \ref{galoisinvariant} that $\abs{\II_d}=1$, then, if $d=d(\Delta)\in\mathcal{B}$ and $\II_d\subset \Delta$ (which is the interesting case in the projective setting), we will have the evaluation of the monomial $x_1^d$ in $\D$, and also the evaluation of at least one monomial $x_0^{d-a}x_1^a$ with $a\in \II_d\setminus \{d \}$. We know that $d\equiv q^r a\mod N-1$ for some $r>0$. Thus, we have the image of $x_0^{d-q^{r-1}a}x_1^{q^{r-1}a}$ in $\D$, but if we take this monomial to the power of $q$, we get $x_0^{q(d-q^{r-1}a)}x_1^{d}\not\equiv x_1^d\mod I(P^1)$. It is not hard to check that we do not have the image of this monomial in $\D$, which implies that $\D$ is not Galois invariant. In the following example we show how this affects the bound for the minimum distance.
\end{rem}

\begin{ex}\label{excont}
We continue with Example \ref{exdualciclo}. We can consider $\Delta=\{0,1,2,3,4\}$, which gives $\Delta_\II=\II_0\cup \II_1$. However, we do not have $\abs{\II_{4}}=1$ and Proposition \ref{galoisinvariant} does not hold in this case. For instance, there are $t=2$ consecutive elements in $\Delta$, but the parameters of $\DT$ are $[17,15,2]$, and $2<t+1=3$. On the other hand, we have that $(\Delta')_\II=\II_0$, which only has $t=1$ consecutive elements, and Proposition \ref{bchmala} would give the parameters $[17,15,\geq 2]$.
\end{ex}

\section{Applications to EAQECCs}

This section is devoted to providing quantum codes from the linear codes developed in the previous section. Namely, we will construct EAQECCs using the CSS construction \cite[Thm. 4]{galindoentanglement} and the Hermitian construction  \cite[Thm. 3]{galindoentanglement}, as well as asymmetric EAQECCs \cite{galindoasymmetric}. 

\subsection{Euclidean EAQECCs}

In this section we will be interested in obtaining EAQECCs using the CSS construction \cite[Thm. 4]{galindoentanglement}. Given a nonempty set $U\subset \fq^n$, we denote by $\wt(U)$ the number $\min \{\wt(v)\mid v\in U\setminus \{0\}\}$, extending the notation that we have been using only for linear codes until now.

\begin{thm}[(CSS Construction)]\label{css}
Let $C_i\subset \fq^n$ be linear codes of dimension $k_i$, for $i=1,2$. Then, there is an EAQECC with parameters $[[n,\kappa,\delta;c]]_q$, where 
$$
\begin{aligned}
c&=k_1-\dim (C_1\cap C_2^\perp), \; \kappa=n-(k_1+k_2)+c, \text{ and } \\
\delta=&\min \left\{\wt\left(C_1^\perp\setminus \left(C_1^\perp\cap C_2 \right) \right), \wt\left(C_2^\perp\setminus \left(C_2^\perp\cap C_1 \right) \right) \right\}.
\end{aligned}
$$
\end{thm}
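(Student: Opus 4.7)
The plan is to realize the claimed EAQECC by invoking the general classical-to-quantum correspondence via the symplectic inner product, applied to the product code $D:=C_1\times C_2\subset\F_q^{2n}$. Equip $\F_q^{2n}$ with the symplectic form $\langle (u_1,u_2),(v_1,v_2)\rangle_s := u_1\cdot v_2 - u_2\cdot v_1$, write $D^{\perp_s}$ for the symplectic dual, and denote by $\mathrm{swt}(u,v)=|\mathrm{supp}(u)\cup\mathrm{supp}(v)|$ the symplectic weight. Recall that any linear $D\subset\F_q^{2n}$ of dimension $m$ yields an EAQECC with parameters $[[n,\,n-m+c,\,\delta;\,c]]_q$, where $c=(m-\dim(D\cap D^{\perp_s}))/2$ (automatically a nonnegative integer, since the form induced on $D/(D\cap D^{\perp_s})$ is non-degenerate and alternating) and $\delta$ is the minimum symplectic weight on $D^{\perp_s}\setminus(D\cap D^{\perp_s})$.

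First I would compute the symplectic invariants of $D$. A direct pairing argument gives $D^{\perp_s}=C_2^{\perp}\times C_1^{\perp}$ and hence $D\cap D^{\perp_s}=(C_1\cap C_2^{\perp})\times(C_2\cap C_1^{\perp})$. Using the identity $\dim(C_1+C_2^{\perp})=n-\dim(C_1^{\perp}\cap C_2)$ (taking perpendiculars of both sides), one derives
$$\dim(C_1\cap C_2^{\perp})-\dim(C_2\cap C_1^{\perp})=k_1-k_2,$$
so $\dim(D\cap D^{\perp_s})=2\dim(C_1\cap C_2^{\perp})-(k_1-k_2)$. Substituting into the formulas for $c$ and $\kappa$ with $m=k_1+k_2$ yields exactly $c=k_1-\dim(C_1\cap C_2^{\perp})$ and $\kappa=n-(k_1+k_2)+c$, as claimed.

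For the minimum distance I would analyze symplectic weights on $D^{\perp_s}\setminus(D\cap D^{\perp_s})$. An element $(u,v)$ of this set satisfies $u\in C_2^{\perp}$, $v\in C_1^{\perp}$, and $u\notin C_1$ or $v\notin C_2$. The key observation is that whenever $u\notin C_1$, the pure vector $(u,0)$ still lies in $D^{\perp_s}\setminus(D\cap D^{\perp_s})$ and has $\mathrm{swt}(u,0)=\wt(u)\le \mathrm{swt}(u,v)$; symmetrically for $v\notin C_2$. Hence the minimum symplectic weight is attained on vectors with one half vanishing, giving $\delta=\min\{\wt(C_1^{\perp}\setminus(C_1^{\perp}\cap C_2)),\wt(C_2^{\perp}\setminus(C_2^{\perp}\cap C_1))\}$.

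The main obstacle is the general symplectic-to-EAQECC construction, which is the substantive quantum-coding input; once it is available, the rest reduces to linear algebra of dimensions and the elementary weight-reduction argument above.
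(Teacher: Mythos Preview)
The paper does not prove this statement at all: it is quoted verbatim as \cite[Thm.~4]{galindoentanglement} and used as a black box. So there is nothing to compare against on the paper's side.

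Your proof sketch is correct and is in fact the standard derivation of the entanglement-assisted CSS construction from the general symplectic construction (this is essentially how it is proved in the cited reference). The computations of $D^{\perp_s}$, $D\cap D^{\perp_s}$, and the dimension identity $\dim(C_1\cap C_2^\perp)-\dim(C_2\cap C_1^\perp)=k_1-k_2$ are all correct, and your weight-reduction argument for $\delta$ is the right one: given $(u,v)\in D^{\perp_s}\setminus(D\cap D^{\perp_s})$ with, say, $u\notin C_1$, the vector $(u,0)$ remains in that set and has symplectic weight $\wt(u)\le\mathrm{swt}(u,v)$, so the minimum is attained on pure vectors, which are exactly the elements of $C_2^\perp\setminus(C_2^\perp\cap C_1)$ and $C_1^\perp\setminus(C_1^\perp\cap C_2)$. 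As you correctly flag, the only substantive input is the general symplectic EAQECC theorem (e.g.\ \cite[Thm.~2]{galindoentanglement}); everything else is linear algebra.
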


We are going to introduce some new notation for the codes we are going to use. In what follows, we are assuming that $p\mid N$.

\begin{defn}
Let $\mathcal{A}=\{a_0=0<a_1<\cdots< a_j\}$, the set of minimal representatives of the minimal cyclotomic sets. We are going to consider a set $\Delta=\bigcup_{i=0}^{t-1}\II_{a_i}\cup \{a_t\}$, i.e., the union of consecutive minimal cyclotomic sets with minimal representatives $a_0,\dots,a_{t-1}$, and the minimal element $a_t$. For such a set $\Delta$, we are going to consider the code $\mD$ defined as the linear code generated by $\{\ev_{X_N} (x_0 x_1^\alpha)\mid \alpha\in \Delta\setminus\{a_t\} \}\cup \{\ev_{X_N} (x_1^{a_t})\}$.
\end{defn}

\begin{rem}\label{remdeltaestrella}
If we look at the basis for the dual codes from Proposition \ref{propdual}, we see that $\mD=\PRS(N,\Delta^*)^\perp$, with $\Delta^*=\{0,1,\dots,N-1\}\setminus \bigcup_{i=0}^{t-1}\II_{N-1-a_i}$. In particular, the codes we are considering are not degenerate. 
\end{rem}

Although the previous remark shows that we can use the notation $\PRS(N,\Delta^*)^\perp$ instead of $\mD$, in what follows we are going to use $\mD$ because this will be the appropriate notation for Section \ref{sectraza}. This allows us to make reference to the following proofs directly from Section \ref{sectraza}, which helps to avoid repetition.

\begin{rem}\label{remarkdefs}
By the definitions, it is clear that $\mD=(\RS(N,\Delta'),0)+\langle \ev_{X_N}(x_1^{a_t})\rangle$, where $\Delta'=\Delta\setminus\{a_t\}$. This means that $\dim \mD=\dim \RS(N,\Delta')+1=\dim \RS(N,\Delta)$. We also have that $\dim \mDT=\dim \PRS(N,\Delta^*)_q=N+1-\sum_{i=0}^{t} n_{a_i}$ from Corollary \ref{dimprimario}. If $G_{N,\Delta}$ is a generator matrix of $\RS(N,\Delta)$, then we have that
\newcommand{\bigG}{\mbox{\normalfont\Large G}}
$$
\left(\begin{array}{c|c}
  G_{N,\Delta}
  & \begin{matrix}
   0 \\
   \vdots \\
   0
  \end{matrix} \\
\hline
  \ev_{Y_N}(x^{a_t}) & 1
\end{array}\right)
$$
is a generator matrix of $\mD$. We see that this does not correspond to any standard lengthening technique for linear codes. On the other hand, the BCH-type bound gives $\wt(\mDT)\geq \wt(\mD^\perp)\geq a_t+2$. 
\end{rem}

\begin{thm}\label{cuanticoeuclideo}
Let $\mathcal{A}=\{a_0=0<a_1<a_2<\cdots< a_z\}$ be the set of minimal representatives of the cyclotomic sets $\II_{a_i}$, $0\leq i\leq z$, of $\{0,1,\dots,N-1\}$ with respect to $q$. Let $\Delta=\bigcup_{i=0}^{t-1} \II_{a_i}\cup \{a_t\}$ such that $\RS(N,\Delta'')\subset \RS(N,\Delta'')^\perp$, where $\Delta''=\bigcup_{i=0}^{t} \II_{a_i}$. Then we can construct an EAQECC with parameters $[[n,\kappa,\geq \delta;c]]_{q}$, where $n=N+1$, $\kappa=N+1-2\left(\sum_{i=0}^t n_{a_i}\right) +c$, $\delta=a_t+2$, and $c\leq 1$. 
\end{thm}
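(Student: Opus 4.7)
The plan is to invoke the CSS construction (Theorem \ref{css}) with $C_1=C_2:=C$, where $C=(\mDT)^{\perp}$ is the Euclidean $\fq$-dual of $\mDT$. By Delsarte's theorem (Theorem \ref{delsarte}), $C=\Tr(\mD)$; and using $\dim \mDT=N+1-\sum_{i=0}^t n_{a_i}$ from Remark \ref{remarkdefs} together with standard duality, $\dim C=\sum_{i=0}^t n_{a_i}$. Substituting into the CSS formula immediately yields $\kappa=(N+1)-2\sum_{i=0}^t n_{a_i}+c$, matching the statement.

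The heart of the proof is showing $c\leq 1$, where $c=\dim C-\dim(C\cap C^{\perp})=\dim(\mDT)^{\perp}-\dim((\mDT)^{\perp}\cap\mDT)$. First I would work over $\fqs$. Writing $\mD=(\RS(N,\Delta'),0)+\langle v\rangle$ as in Remark \ref{remarkdefs}, with $\Delta'=\bigcup_{i=0}^{t-1}\II_{a_i}$ and $v=\ev_{X_N}(x_1^{a_t})$, and using Lemma \ref{sumasubgrupo} (equivalently Proposition \ref{metric}), I would verify: (i) $(\RS(N,\Delta'),0)$ is self-orthogonal, since $\RS(N,\Delta')\subset \RS(N,\Delta'')$ and the latter is self-orthogonal by hypothesis; (ii) $v\perp (\RS(N,\Delta'),0)$, because each pairing reduces to $\ev_{Y_N}(x^{a_t})\cdot \ev_{Y_N}(x^{\alpha})$ with $a_t,\alpha\in\Delta''$, which vanishes by the hypothesis; (iii) $v\cdot v\in\{0,1\}$, via Lemma \ref{sumasubgrupo} applied to $\sum_{y\in Y_N}y^{2a_t}+1$. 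Consequently $\mD\cap\mD^{\perp}$ contains $(\RS(N,\Delta'),0)$ and has codimension at most one in $\mD$ over $\fqs$.

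Next I would descend to $\fq$. The key observation is that $(\RS(N,\Delta'),0)$ is Galois invariant (being indexed by a union of $q$-cyclotomic sets), so the identity $\Tr(x)\cdot y=\sum_{j=0}^{s-1}x^{(q^j)}\cdot y$ gives $\Tr((\RS(N,\Delta'),0))\subseteq \mD^{\perp}\cap \fq^{N+1}=\mDT$; combined with the inclusion $\Tr((\RS(N,\Delta'),0))\subseteq \Tr(\mD)=(\mDT)^{\perp}$ from Delsarte, this places the trace subspace inside $(\mDT)^{\perp}\cap\mDT$. A Delsarte dimension count shows this trace subspace has $\fq$-dimension $\sum_{i=0}^{t-1}n_{a_i}$, leaving at most $n_{a_t}$ remaining dimensions of $(\mDT)^{\perp}$ to analyze, all sitting inside $\Tr(\langle v\rangle)$. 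By (iii) and a direct computation via Lemma \ref{sumasubgrupo} of the pairings $\Tr(\lambda v)\cdot v$ for $\lambda\in\fqs$, the map $\lambda\mapsto\Tr(\lambda v)\cdot v$ factors through at most a one-dimensional $\fq$-quotient, so at most one $\fq$-dimension of $(\mDT)^{\perp}$ escapes $\mDT$; hence $c\leq 1$.

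Finally, the minimum distance satisfies $\delta=\wt(\mDT\setminus((\mDT)^{\perp}\cap\mDT))\geq\wt(\mDT)\geq a_t+2$, where the last inequality is the BCH-type bound from Remark \ref{remarkdefs}, applicable because every integer less than $a_t$ has minimal representative strictly less than $a_t$ and therefore lies in some $\II_{a_i}$ with $i<t$ which is wholly contained in $\Delta$, so $\{0,1,\dots,a_t\}\subseteq\Delta$ provides the required $a_t+1$ consecutive exponents. I expect the third step to be the main obstacle: the descent from the $\fqs$-codimension-one bound to the $\fq$-codimension-one bound cannot rely on Galois invariance of $\mD$ itself (which fails whenever $n_{a_t}>1$), and requires the careful bookkeeping via Lemma \ref{sumasubgrupo} of the pairings involving traces of scalar multiples of $v$.
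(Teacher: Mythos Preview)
Your approach is correct and reaches the same conclusion as the paper, but the argument for $c\leq 1$ is organized differently. The paper works on the dual side: it writes $\mDT=((\RS(N,\Delta'')^{\perp})_q,0)+\langle w'\rangle$ with $w'$ having nonzero last coordinate, then identifies $\mDTT$ as $((\RS(N,\Delta'')_q,0)+\langle e_{N+1}\rangle)\cap\langle w'\rangle^{\perp}$, and observes that a basis of $\mDTT$ can be chosen with all but one vector lying in $(\RS(N,\Delta'')_q,0)\subset((\RS(N,\Delta'')^{\perp})_q,0)\subset\mDT$. Your route instead uses Delsarte to write $(\mDT)^{\perp}=\Tr(\mD)=(\RS(N,\Delta')_q,0)+\{\Tr(\lambda v):\lambda\in\fqs\}$, puts the first summand into the hull via Galois invariance of $(\RS(N,\Delta'),0)$, and then handles the second summand by the explicit pairing computation $\Tr(\lambda v)\cdot v=\sum_{j}\lambda^{q^{j}}(v^{q^{j}}\cdot v)=\Tr(\lambda)$, since the self-orthogonality hypothesis on $\RS(N,\Delta'')$ forces $v^{q^{j}}\cdot v=1$ for every $j$. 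The paper's decomposition is slightly more structural and avoids the explicit pairing bookkeeping; your computation is more hands-on but makes the appearance of the single obstruction transparent (it is literally the field trace of $\lambda$).

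Two small points to tighten. First, your step (iii) as stated ($v\cdot v\in\{0,1\}$) is not what you actually need; the relevant input is $v^{q^{j}}\cdot v=1$ for all $j$, which follows from the hypothesis $\RS(N,\Delta'')\subset\RS(N,\Delta'')^{\perp}$ (because $a_t q^{j},a_t\in\Delta''$) rather than from Lemma \ref{sumasubgrupo} alone. Second, the passage ``at most one $\fq$-dimension escapes $\mDT$'' requires knowing that $\Tr(\lambda v)$ is already orthogonal to $(\RS(N,\Delta'),0)$ over $\fqs$, so that the \emph{only} obstruction to $\Tr(\lambda v)\in\mDT$ is $\Tr(\lambda v)\cdot v$. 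You have the ingredients for this (Galois invariance of $(\RS(N,\Delta'),0)$ lets you extend $\fq$-orthogonality to $\fqs$-orthogonality), but you should say it explicitly; otherwise the map $u\mapsto u\cdot v$ lands in $\fqs$ and a priori only gives $c\leq s$.
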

\begin{proof}
We are going to consider the code $C_1=C_2=\mDTT$ for the CSS Construction \ref{css}. We have $\dim \mDTT=N+1-\dim \mDT=N+1-\dim \PRS(N,\Delta^*)_q=\sum_{i=0}^{t} n_{a_i}$ by Remark \ref{remarkdefs}.  Remark \ref{remarkdefs} also gives $\wt(\mDT)\geq  a_t+2$. 

For the parameter $c$, we claim that $\dim\left( \mDT\cap \mDTT\right)\geq \dim (\RS(N,\Delta'')_q,0) -1=\sum_{i=0}^t n_{a_i}-1$, which gives $c\leq 1$. Let $\Delta'=\Delta\setminus \{a_t\}$. By Remark \ref{remarkdefs} we have $\mD=(\RS(N,\Delta'),0)+\langle \ev_{X_N}(x_1^{a_t})\rangle$. 

We consider $v\in (\RS(N,\Delta)^\perp,0)$. Then $v$ is orthogonal to $(\RS(N,\Delta'),0)$ (taking into account that $\RS(N,\Delta')\subset \RS(N,\Delta)$), and it is also orthogonal to $\ev_{X_N}(x_1^{a_t})$ because the last coordinate of $v$ is 0, which means that $v\cdot \ev_{X_N}(x_1^{a_t})=v\cdot \ev_{X_N}(x_0x_1^{a_t})$, and $\ev_{X_N}(x_0x_1^{a_t})\in (\RS(N,\Delta),0)$. Therefore, $v\in \mD^\perp$. Taking into account the dimension and the fact that the codes $\mD^\perp$ are not degenerate, we can write $\mD^\perp=( \RS(N,\Delta)^\perp,0)+\langle w\rangle $, where $w$ is a vector with a nonzero last entry.

We consider a basis for $\mDT$ now, and we can also assume that all the vectors in the basis, besides one vector $w'$, have 0 as their last coordinate. Taking into account that $\mDT$ is not degenerate, this means that we have $\mDT=( (\RS(N,\Delta)^\perp)_q,0)+\langle w'\rangle $ for some vector $w'$ with nonzero last coordinate. In this case we have $(\RS(N,\Delta)^\perp)_q=(\RS(N,\Delta'')^\perp)_q$ because $\Delta^\perp$ and $\Delta''^\perp$ contain the same complete minimal cyclotomic sets (which is what matters in order to compute the subfield subcode of the dual, this can be seen using Theorem \ref{baseafin} and \cite[Prop. 3]{galindo1}). Moreover, we have that $\RS(N,\Delta'')_q\subset (\RS(N,\Delta'')^\perp)_q=(\RS(N,\Delta'')_q)^\perp$ because this code is Galois invariant by the reasoning after Corollary \ref{dimdual}.

Thus, we have seen that $\mDT\supset ((\RS(N,\Delta'')^\perp)_q,0)\supset (\RS(N,\Delta'')_q,0)$. On the other hand, we have
$$
\mDTT=\left((  \PRS(N,\Delta'')_q,0)+\langle (0,0,\dots,0,1)\rangle\right) \cap \langle w'\rangle^{\perp}.
$$
Note that $(0,0,\dots,0,1)\not \in \langle w'\rangle^{\perp}$ because $w'$ has a nonzero last coordinate. Hence, we can consider a basis for $\mDTT$ formed by $(\dim \RS(N,\Delta'')_q-1)$ vectors $u_i\in (\RS(N,\Delta'')_q,0)$, and a vector $w''$ such that its last coordinate is nonzero. Note that not all vectors can have the last coordinate equal to 0 because that would mean that we have the vector $(0,0,\dots,0,1)\in \mDT$, contradicting the bound given for the minimum distance. Therefore, all the vectors $u_i$ are in $\mDT\cap \mDTT$, which gives $c\leq 1$. 
\end{proof}

\begin{rem}\label{condicionesafin}
In \cite{galindostabilizer}, there are conditions in order to have $\RS(N,\Delta'')\subset \RS(N,\Delta'')^\perp$. For example, for the type of set $\Delta''$ that we are considering in Theorem \ref{cuanticoeuclideo}, if, for every cyclotomic set $\II_a\subset \Delta''$, we have $\II_{N-1-a}\not\subset \Delta''$, then $\RS(N,\Delta'')\subset \RS(N,\Delta'')^\perp$.
\end{rem}

For the code $\RS(N,\Delta'')^\perp$ we have the bound $\wt(\RS(N,\Delta'')^\perp)\geq a_{t+1}+1$. However, we have $a_{t+1}+1=a_t+2$ in many cases (this happens if and only if $a_t+1\not \in \Delta''$, because in that case $a_{t+1}=a_t+1$). In that situation, we have the same bound for the minimum distance for $\RS(N,\Delta'')^\perp$ and for the corresponding EAQECC from Theorem \ref{cuanticoeuclideo}. In the following discussion we will assume that $a_{t+1}+1=a_{t}+2$.

If we get a QECC with parameters $[[n,\kappa,\delta;0]]_q$ from the affine case using $\RS(N,\Delta'')_q$, then we would get an EAQECC with parameters $[[n+1,\kappa+1+c,\delta;c]]_q$ in the projective case using Theorem \ref{cuanticoeuclideo}, where $c\leq 1$. If we take into account the rate $\rho:=\kappa/n$ and the net rate $\overline{\rho}:=(\kappa-c)/n$, we see that the code obtained with Theorem \ref{cuanticoeuclideo} has better rate and net rate than the one obtained in the affine case. Moreover, it can be checked that the codes we obtain are not directly obtainable from the affine case using the propagation rules from \cite{grasslhowmuch}, which can be adapted for EAQECCs arising from Theorem \ref{css} (for example, see \cite{relativehull}).

In the constructions from Theorem \ref{cuanticohermitico} and Theorem \ref{paramquantraza}, the same argument shows that, as long as $a_{t+1}+1=a_{t}+2$, we can obtain codes with better rates than the ones from the affine case, which cannot be deduced from the propagation rules from \cite{grasslhowmuch}.

\begin{ex}\label{ejcuanteuclideo}
We consider $N=q^s=3^4=81$, with $q=3^2$ ($s=2$). The first minimal cyclotomic sets, ordered by their minimal element, are
$$
\begin{aligned}
\II_0=\{0\},\; \II_1=\{1,9\},\;\II_2=\{2,18\},\;\II_3=\{3,27\},\;\II_4=\{4,36\},\;
\II_5=\{5,45\}.
\end{aligned}
$$
With the notation that we have been using, we consider the minimal elements $a_i$, for $i=0,\dots,5$, and $\Delta=\bigcup_{i=0}^4\II_{a_i}\cup \{5\}$ ($t=5$ with the previous notation). We have $\sum_{i=0}^5n_{a_i}=11$, and we have $a_t+2=7$. It is easy to check that $\II_{N-1-{a_i}}\not \subset \Delta$ for $i=0,\dots,5$. By Remark \ref{condicionesafin} we have $\RS(N,\Delta'')\subset \RS(N,\Delta'')^\perp$ and we can apply Theorem \ref{cuanticoeuclideo} in order to obtain a quantum code with parameters $[[82,61,7;1]]_9$. If we had used the affine code $\RS(N,\Delta'')$ with $\Delta''=\bigcup_{i=0}^5\II_{a_i}$, the bound for the minimum distance would have been the same because $a_{t}+1=8\not\in \Delta''$, and we would have obtained the code $[[81,59,7;0]]_9$. 
\end{ex}

We can also get QECCs (EAQECCs with $c=0$) directly under some assumptions, as the following result shows.

\begin{prop}\label{quantgalois}
Assume that $p>2$. Let $N$ be an odd integer such that $N-1\mid q^s-1$ and $p\mid N$. We consider a union of cyclotomic sets $\Delta\subset \{0,1,\dots,N-1\}$ such that $d=d(\Delta)=(N-1)/2$. If $t$ is the number of consecutive exponents in $\Delta$, then we can construct a QECC with parameters $[[n,\kappa,\geq \delta;0]]_q$, where $n=N+1$, $\kappa=N+1-2\abs{\Delta}$, and $\delta=t+1$.
\end{prop}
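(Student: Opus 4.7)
The plan is to apply the CSS construction (Theorem \ref{css}) with $C_1=C_2=\DS$. If I can show $\DS\subset \DT$, then the entanglement parameter satisfies $c=\dim \DS-\dim(\DS\cap \DT)=0$, yielding a genuine QECC with length $n=N+1$, dimension $\kappa=n-2\dim\DS$, and minimum distance $\delta=\wt(\DT\setminus \DS)\geq \wt(\DT)$ (since $\DT\setminus\DS\subset \DT\setminus\{0\}$), which will be controlled via Proposition \ref{galoisinvariant}.

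To get the inclusion $\DS\subset \DT$, I would first establish the stronger self-orthogonality $\D\subset \D^\perp$ and then invoke Galois invariance (Proposition \ref{galoisinvariant}) to deduce $\DT=(\D^\perp)_q\supset (\D)_q\supset \DS$. Self-orthogonality of $\D$ reduces, using $p\mid N$ and Corollary \ref{metricfacil}, to a short case analysis of the pairings $\ev_{X_N}(x_0^{d-\alpha}x_1^\alpha)\cdot \ev_{X_N}(x_0^{d-\beta}x_1^\beta)$ for $\alpha,\beta\in \Delta\subset \{0,1,\dots,d\}$ with $d=(N-1)/2$. Since $0\leq \alpha+\beta\leq 2d=N-1$, the exponent sum $\alpha_1+\beta_1=\alpha+\beta$ is congruent to $0$ modulo $N-1$ only when $\alpha=\beta=0$ or $\alpha=\beta=d$; in both of these extreme cases exactly one of $\alpha_0+\beta_0$, $\alpha_1+\beta_1$ vanishes while the other is positive, so we land in the ``otherwise'' branch of Corollary \ref{metricfacil} with value $0$, and for intermediate values $0<\alpha+\beta<N-1$ both sums are positive but $\alpha+\beta\not\equiv 0\pmod{N-1}$, which again returns $0$.

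The hypothesis $|\II_d|=1$ needed by Proposition \ref{galoisinvariant} I would verify as follows: since $p>2$, $q$ is odd, so $(q-1)/2\in \mathbb{N}$ and
$$dq-d=\tfrac{N-1}{2}(q-1)=\tfrac{q-1}{2}(N-1)\equiv 0\pmod{N-1},$$
hence $\II_d=\{d\}$, and in particular $d\in \mathcal{B}$. Because $\Delta$ is a union of cyclotomic sets containing $\II_d$, Corollary \ref{dimprimario} gives $\dim\DS=\sum_{b\in \mathcal{B},\,\II_b\subset\Delta,\,b<d}n_b+1=(|\Delta|-1)+1=|\Delta|$, so $\kappa=N+1-2|\Delta|$. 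Proposition \ref{galoisinvariant} then delivers $\wt(\DT)\geq t+1$, completing the parameter computation.

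The main obstacle will be the metric check that $\D\subset \D^\perp$, but the symmetry of $\Delta$ around $d=(N-1)/2$ together with $p\mid N$ makes this a routine case distinction in Corollary \ref{metricfacil}; after that, Proposition \ref{galoisinvariant}, Corollary \ref{dimprimario}, and Theorem \ref{css} assemble the stated parameters directly.
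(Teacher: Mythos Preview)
Your proposal is correct and follows essentially the same route as the paper: apply the CSS construction with $C_1=C_2=\DS$, use $|\II_d|=1$ (from Lemma~\ref{lemaciclotomicotamano1} with $\lambda=(q-1)/2$, which you reprove directly) to invoke Proposition~\ref{galoisinvariant} for both Galois invariance and the distance bound, and compute the dimension via Corollary~\ref{dimprimario}. The only minor difference is in establishing $\D\subset\D^\perp$: the paper does this via the chain $\D\subset\PRS(N,\Delta_d)=\PRS(N,\Delta_d)^\perp\subset\D^\perp$ using Corollary~\ref{dualfacil}, whereas you verify the orthogonality directly from Corollary~\ref{metricfacil}; since Corollary~\ref{dualfacil} is itself a consequence of that metric computation, the two arguments are essentially equivalent.
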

\begin{proof}
By Proposition \ref{galoisinvariant}, Lemma \ref{lemaciclotomicotamano1} and Remark \ref{remdeltaestrella}, we have that $\PRS(N,\Delta)$ is Galois invariant, and we have $\wt((\PRS(N,\Delta)_q)^\perp)\geq t+1$. By Corollary \ref{dualfacil}, if we consider $\Delta_d=\{0,1,\dots,(N-1)/2\}$, we have that
$$
\D\subset \PRS(N,\Delta_d)=\PRS(N,\Delta_d)^\perp \subset \PRS(N,\Delta)^\perp.
$$
Therefore, considering the intersection with $\fq^n$ we obtain that $\PRS(N,\Delta)_q \subset (\PRS(N,\Delta)_q)^\perp$. If we consider $C_1=C_2=\PRS(N,\Delta)_q$ in the CSS Construction \ref{css}, we have already obtained the length, the bound for the minimum distance, and $c=0$, for the parameters of the corresponding quantum error-correcting code. For the dimension, we have $\dim \PRS(N,\Delta)_q=\abs{\Delta}$ by Corollary \ref{dimprimario}, taking into account that $\abs{\II_d}=1$ in this case by Lemma \ref{lemaciclotomicotamano1}. 
\end{proof}

\begin{ex}
We consider $q^s=3^3$, $q=3$ and $N=3^3=27$. Let $\Delta=\II_0\cup \II_1\cup \II_4\cup \II_{13}$ (note that $13=(N-1)/2$). We are not considering consecutive cyclotomic sets, which means that the BCH-type bound for the minimum distance might not be accurate. Hence, we have computed it directly with Magma \cite{magma}. The code $(\PRS(N,\Delta)_q)^\perp$ has parameters $[28,20,6]$, which gives a QECC with parameters $[[28,12,6;0]]_3$ by Proposition \ref{quantgalois}, which are the best known parameters for a quantum code over $\F_3$ with that length and dimension according to \cite{codetables}. With $\RS(N,\Delta)$ and $\RS(N,\Delta')$ (where $\Delta'=\Delta\setminus \{(N-1)/2\}$), we obtain the parameters $[27,19,6]$ and $[27,20,5]$ for the dual codes of their subfield subcodes, respectively. These codes would give QECCs with parameters $[[27,11,6;0]]_3$ and $[[27,13,5;0]]_3$, respectively, applying the CSS Construction \ref{css}.
\end{ex}

\subsection{Asymmetric EAQECCs}

As we said in the introduction, phase-shift and qudit-flip errors are not equally likely to occur. It is therefore desirable to obtain EAQECCs with different error correction capabilities for each of these types of errors. In order to construct asymmetric EAQECCs, we can use the following result from \cite{galindoasymmetric}.

\begin{thm}\label{asimetricos}
Let $C_i\subset \fq^n$ be linear codes of dimension $k_i$, for $i=1,2$. Then, there is an asymmetric EAQECC with parameters $[[n,\kappa,\delta_z/\delta_x;c]]_q$, where
$$
\begin{aligned}
c&=k_1-\dim (C_1\cap C_2^\perp), \; \kappa=n-(k_1+k_2)+c, \\
\delta_z=\wt&\left(C_1^\perp\setminus \left(C_1^\perp\cap C_2 \right)\right) \text{ and }\; \delta_x=  \wt\left(C_2^\perp\setminus \left(C_2^\perp\cap C_1 \right) \right) .
\end{aligned}
$$
\end{thm}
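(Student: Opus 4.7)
The plan is to reduce the construction to the entanglement-assisted stabilizer formalism in its CSS version, treating $X$-type and $Z$-type errors separately. This parallels Theorem~\ref{css}; the only essential difference is that one keeps the two weight quantities $\delta_z$ and $\delta_x$ distinct rather than taking their minimum.

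First, I would attach to each codeword $\mathbf{c}_1 \in C_1$ an $X$-type Pauli operator $X(\mathbf{c}_1)$ on $n$ qudits and to each $\mathbf{c}_2 \in C_2$ a $Z$-type Pauli operator $Z(\mathbf{c}_2)$, and consider the (possibly non-abelian) subgroup of the Pauli group they generate. Since $X(\mathbf{v})Z(\mathbf{w}) = \omega^{-\mathbf{v}\cdot \mathbf{w}} Z(\mathbf{w})X(\mathbf{v})$, the obstruction to abelianness is precisely the Euclidean bilinear pairing $C_1 \times C_2 \to \fq$, whose left radical is $C_1 \cap C_2^\perp$. Its rank therefore equals $k_1 - \dim(C_1\cap C_2^\perp)$, so adjoining this many maximally entangled qudit pairs (acted on by suitably chosen ancillary Paulis that complete a symplectic basis) is both sufficient and necessary to turn the group into a genuine stabilizer group on $n+c$ qudits. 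This gives the asserted value of $c$, and by the dimension formula it coincides with $k_2 - \dim(C_2 \cap C_1^\perp)$.

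Second, for the dimension $\kappa$: once commutativity is achieved, the stabilizer has $k_1 + k_2$ independent generators on the $(n+c)$-qudit system, so the protected subspace has dimension $q^{(n+c)-(k_1+k_2)} = q^{\kappa}$ with $\kappa = n - (k_1+k_2) + c$. For the asymmetric distances I would analyze $Z$-type and $X$-type errors separately, which is exactly what produces two different distance parameters. A $Z$-error $Z(\mathbf{e})$ on the information qudits is undetectable iff it commutes with every $X$-stabilizer generator, i.e.\ $\mathbf{e} \in C_1^\perp$; it acts trivially on the codespace iff it coincides, restricted to the information qudits, with a $Z$-stabilizer element, which (since the ancillary Paulis extending $C_2$ to the entangled qudits act only on those qudits) happens exactly when $\mathbf{e}\in C_2$. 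Hence a non-trivial undetectable $Z$-error of minimum weight has weight $\wt(C_1^\perp\setminus(C_1^\perp\cap C_2))=\delta_z$. The $X$-case is symmetric, exchanging $C_1\leftrightarrow C_2$ and yielding $\delta_x$.

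The hardest step is the bookkeeping on the extended $(n+c)$-qudit system: one must verify that the ancillary $X$- and $Z$-generators chosen to cure the non-commutativity of the original group can be taken so as not to enlarge the set of trivial $Z$-operators on the information qudits beyond $C_2$ (and analogously for the $X$-side beyond $C_1$). This is a standard consequence of fixing a symplectic basis on the entangled pairs, but it is where the asymmetric refinement requires an argument beyond a direct quote of Theorem~\ref{css}, since one can no longer collapse everything into a single minimum distance.
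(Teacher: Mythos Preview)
The paper does not prove Theorem~\ref{asimetricos}: it is quoted verbatim from \cite{galindoasymmetric} (see the sentence immediately preceding the theorem, ``we can use the following result from \cite{galindoasymmetric}''), and is used purely as a black box in the subsequent construction of asymmetric EAQECCs. There is therefore no proof in the paper to compare your proposal against.

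Your sketch is a reasonable outline of the standard entanglement-assisted CSS argument (computing the rank of the bilinear pairing $C_1\times C_2\to\fq$ to obtain $c$, counting stabilizer generators to get $\kappa$, and analyzing undetectable $Z$- and $X$-errors separately to obtain $\delta_z$ and $\delta_x$), and this is indeed the line of reasoning behind the cited result. If you want to verify the details, the place to look is \cite{galindoasymmetric} rather than the present paper.
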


The two minimum distances $\delta_z$ and $\delta_x$ give the error correction capability of the corresponding asymmetric EAQECC, which can correct up to $\lfloor(\delta_z-1)/2\rfloor$ phase-shift errors and $\lfloor(\delta_x-1)/2\rfloor$ qudit-flip errors.

In sections \ref{secprimario} and \ref{secdual} we obtained bases for both the primary codes $\DS$ and their duals $\DT$. This is the key for the proof of the following result, which allows us to construct asymmetric EAQECCs from subfield subcodes of projective Reed-Solomon codes. We recall that, for $\Delta\subset \{0,1,\dots,N-1\}$, we denote $\Delta_\II=\bigcup_{\II_a\subset \Delta}\II_a$, and we also recall that $\mathcal{B}$ is the set of maximal representatives of the minimal cyclotomic sets.

\begin{thm}\label{asimetricosproyectivos}
Let $1\leq d_1,d_2 \leq N-1$, such that $d_i\in \mathcal{B}$, for $i=1,2$, and $p\mid N$. We consider $\Delta_{d_i}=\{0,1,\dots,d_i\}$ and we denote $\Delta_{d_i}':=\Delta_{d_i}\setminus \{d_i\}$, for $i=1,2$. If $((\Delta'_{d_1})_\II)^\perp\subset (\Delta_{d_2}')_\II$, then we can construct an asymmetric EAQECC with parameters 
$$
[[N+1,\sum_{b\in \mathcal{B},b<d_1}{n_b}+\sum_{b\in \mathcal{B},b<d_2}{n_b}+2-N,\delta_z/\delta_x;1]]_q,
$$
where $\delta_z\geq N-d_1+1$, $\delta_x\geq N-d_2+1$.
\end{thm}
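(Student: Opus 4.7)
The plan is to apply the asymmetric EAQECC construction of Theorem \ref{asimetricos} with $C_i := (\PRS(N,\Delta_{d_i})_q)^\perp$ for $i=1,2$. The length is $n=N+1$; since $d_i \in \mathcal{B}$ ensures $\II_{d_i}\subset \Delta_{d_i}$, Corollary \ref{dimprimario} gives $k_i := \dim C_i = N - \sum_{b\in\mathcal{B},\,b<d_i} n_b$, and substituting into $\kappa = n - (k_1+k_2) + c$ with $c=1$ yields the stated dimension. For the minimum distances, $C_i^\perp = \PRS(N,\Delta_{d_i})_q$ satisfies $\wt(C_i^\perp) \geq N-d_i+1$ by the bound recalled after Corollary \ref{dimprimario}, and since $\wt(U\setminus V)\geq \wt(U)$ whenever $0\in V$, we get $\delta_z \geq N-d_1+1$ and $\delta_x \geq N-d_2+1$.

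The essential step is to prove $c=1$. Since $C_1\cap C_2^\perp$ is the kernel of the bilinear pairing $C_1\times C_2 \to \fq$, $(v,w)\mapsto v\cdot w$, we have $c = k_1 - \dim(C_1\cap C_2^\perp) = \mathrm{rank}\,M$, where $M$ is the matrix of this pairing in any chosen bases. Using Theorem \ref{baseproyectivadual} (applicable to both $C_i$ since $\II_{d_i}\subset\Delta_{d_i}$), I would split the basis of $C_i$ as $B_i\cup B_i'$, where $B_i$ consists of the vectors $\ev_{X_N}(\mathcal{T}_a(\xi_a^r x_0 x_1^a))$ (which vanish at the last coordinate $[0:1]$), and $B_i'$ consists of the vectors $\ev_{X_N}(\mathcal{T}_{N-1-d_i}(\xi_{N-1-d_i}^r x_1^{N-1-d_i}))$. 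The plan is to show that $M$ vanishes outside the $B_1'\times B_2'$ block, which itself is a nonzero outer product of rank one.

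For the $B_1\times B_2$ block, expanding the traces and invoking Corollary \ref{metricfacil}, the pairing of $\ev(\mathcal{T}_a(\xi_a^r x_0 x_1^a))$ with $\ev(\mathcal{T}_{a'}(\xi_{a'}^{r'} x_0 x_1^{a'}))$ is nonzero only when $\II_{a'} = \II_{N-1-a}$; combined with the basis conditions from Theorem \ref{baseproyectivadual} and Lemma \ref{lemaciclotomicos}, this forces both $\II_{N-1-a}\not\subset\Delta_{d_1}$ and $\II_a\not\subset\Delta_{d_2}$. Rewriting the hypothesis equivalently as $((\Delta'_{d_2})_\II)^\perp\subset(\Delta'_{d_1})_\II$ (using that $E\mapsto E^\perp$ is an involution on unions of cyclotomic sets), the condition $\II_a\not\subset(\Delta'_{d_2})_\II$ forces $\II_{N-1-a}\subset(\Delta'_{d_1})_\II\subset\Delta_{d_1}$, contradicting the first condition. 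An analogous analysis kills $B_1\times B_2'$, $B_1'\times B_2$, and the ``affine part'' of $B_1'\times B_2'$: in each case the hypothesis forces $\II_{N-1-d_j}\subset(\Delta'_{d_{3-j}})_\II\subset\Delta_{d_{3-j}}$, contradicting the nonvanishing criterion. The surviving contribution in $B_1'\times B_2'$ is the last-coordinate product $\mathcal{T}_{N-1-d_1}(\xi_{N-1-d_1}^r)\cdot\mathcal{T}_{N-1-d_2}(\xi_{N-1-d_2}^{r'})$, which is the outer product of two vectors each guaranteed to have a nonzero entry by the choice of primitive elements in Theorem \ref{baseproyectivadual}. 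Hence $\mathrm{rank}\,M=1=c$. The main technical difficulty is the careful bookkeeping of the cyclotomic-set duality $E\mapsto E^\perp$ needed to verify vanishing in each block consistently with the direction of the hypothesis.
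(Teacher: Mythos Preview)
Your proof is correct and takes a genuinely different route from the paper's. Both arguments use $C_i=(\PRS(N,\Delta_{d_i})_q)^\perp$ in Theorem~\ref{asimetricos}, and the length, dimension, and distance bounds are obtained identically. The divergence is entirely in the computation of $c$.

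The paper argues in two steps. First it produces explicit elements of the intersection $C_1\cap C_2^\perp=(\PRS(N,\Delta_{d_1})_q)^\perp\cap\PRS(N,\Delta_{d_2})_q$: the hypothesis ensures that all the $x_0x_1^a$-type traces from the basis of $C_1$ already lie in $C_2^\perp$, and an ad~hoc linear combination of $x_0x_1^{N-1-d_1}$-traces and $x_1^{N-1-d_1}$-traces (equation~(\ref{polint})) furnishes $n_{d_1}-1$ further elements, yielding $c\le 1$. Then $c\ge 1$ is obtained by contradiction: $c=0$ would place two vectors in $C_2^\perp$ differing only at the coordinate $[0:1]$, forcing $\wt(C_2^\perp)=1$.

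You instead compute $c=\mathrm{rank}\,M$ directly, where $M$ is the pairing matrix on the bases of $C_1,C_2$ from Theorem~\ref{baseproyectivadual}. Using Corollary~\ref{metricfacil} you show all blocks except $B_1'\times B_2'$ vanish, and that this last block is a rank-one outer product with a guaranteed nonzero entry. This is cleaner and more symmetric: it handles $c\le 1$ and $c\ge 1$ in a single stroke, and bypasses both the explicit construction~(\ref{polint}) and the separate minimum-distance argument. The paper's approach, on the other hand, has the merit of exhibiting an explicit basis of the relative hull $C_1\cap C_2^\perp$, which can be useful beyond the mere determination of $c$. Your reformulation of the hypothesis via the involution $E\mapsto E^\perp$ is the key device that makes the block-by-block vanishing transparent; the paper uses the hypothesis only in the original direction.
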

\begin{proof}
We are going to consider $C_i=(\PRS(N,\Delta_{d_i})_q)^\perp$, for $i=1,2$, and we are going to use Theorem \ref{asimetricos}. The bounds for $\delta_z$ and $\delta_x$ are clear, and we obtain the dimension using Corollary \ref{dimprimario} if we assume $c=1$. For the parameter $c=\dim (\PRS(N,\Delta_{d_1})_q)^\perp-\dim((\PRS(N,\Delta_{d_1})_q)^\perp\cap \PRS(N,\Delta_{d_2})_q)$, we are going to study $\dim((\PRS(N,\Delta_{d_1})_q)^\perp\cap \PRS(N,\Delta_{d_2})_q)$. For $(\PRS(N,\Delta_{d_1})_q)^\perp$ we have the basis given by the evaluation of the following set from Theorem \ref{baseproyectivadual}:
\begin{equation}\label{conj1}
\bigcup_{a\in \mathcal{A}\mid \II_a\cap \Delta_{d_1}^\perp \neq \emptyset}\{\mathcal{T}_a(\xi_a^r x_0 x_1^a)\mid 0\leq r\leq n_a-1 \}\cup \{ \mathcal{T}_{N-1-d_1}(\xi_{N-1-d_1}^r x_1^{N-1-d_1}) \mid 0\leq r \leq n_{d_1}-1\}.
\end{equation}
From Theorem \ref{baseproyectiva} it is easy to obtain that the evaluation of the following set gives a basis for $\PRS(N,\Delta_{d_2})_q$:
\begin{equation}\label{conj2}
\bigcup_{a\in \mathcal{A}\mid \II_a\subset \Delta'_{d_2}}\{\mathcal{T}_a(\xi_a^r x_0x_1^a)\mid 0\leq r\leq n_a-1 \} \cup \{\mathcal{T}_{d_2}^h(x_1^{d_2}) \}.
\end{equation}
It is also clear that the $a\in \mathcal{A}$ such that $\II_a\cap \Delta_{d_1}^\perp \neq \emptyset$ are precisely the $a\in \mathcal{A}$ such that $\II_a\subset ((\Delta_{d_1})_\II)^\perp$. We also have that $((\Delta'_{d_1})_\II)^\perp=((\Delta_{d_1})_\II)^\perp\cup \II_{N-1-d_1}$. Therefore, taking into account the assumption $((\Delta'_{d_1})_\II)^\perp\subset (\Delta_{d_2}')_\II\subset \Delta'_{d_2}$, we have that all the traces of monomials of the type $x_0x_1^{a}$, with $a\in\mathcal{A}$, in the set from (\ref{conj1}), are contained in the set from (\ref{conj2}). This implies that the evaluation of the set 
\begin{equation}\label{conj3}
\bigcup_{a\in \mathcal{A}\mid \II_a\cap \Delta_{d_1}^\perp \neq \emptyset}\{\mathcal{T}_a(\xi_a^r x_0 x_1^a)\mid 0\leq r\leq n_a-1 \}
\end{equation}
is in $(\PRS(N,\Delta_{d_1})_q)^\perp\cap \PRS(N,\Delta_{d_2})_q$.

Now we are going to study which polynomials from the set generated by
$$
\{ \mathcal{T}_{N-1-d_1}(\xi_{N-1-d_1}^r x_1^{N-1-d_1}) \mid 0\leq r \leq n_{d_1}-1\}
$$
have their evaluation in $(\PRS(N,\Delta_{d_1})_q)^\perp\cap \PRS(N,\Delta_{d_2})_q$. As in Theorem \ref{baseproyectivadual}, we assume that $\xi_{N-1-d_1}$ is a primitive element of $\F_{q^{n_{d_1}}}$ (note that $n_{d_1}=n_{N-1-d_1}$) such that $\mathcal{T}_{N-1-d_1}(\xi_{N-1-d_1})\neq 0$. For ease of notation, we are going to denote now $d'_1=N-1-d_1$. For $0\leq r\leq n_{d_1}-1$, $r\neq 1$, we have
\begin{equation}\label{polint}
\begin{aligned}
\mathcal{T}_{d'_1}(\xi_{d'_1})\mathcal{T}_{d'_1}(\xi_{d'_1}^r x_0x_1^{d'_1})-\mathcal{T}_{d'_1}(\xi^r_{d'_1})&\mathcal{T}_{d'_1}(\xi_{d'_1}x_0x_1^{d'_1}) \\
&\equiv \mathcal{T}_{d'_1}(\xi_{d'_1})\mathcal{T}_{d'_1}(\xi_{d'_1}^r x_1^{d'_1})-\mathcal{T}_{d'_1}(\xi^r_{d'_1})\mathcal{T}_{d'_1}(\xi_{d'_1}x_1^{d'_1}) \bmod I(X_N).
\end{aligned}
\end{equation}
This is easy to see because when we set $x_0=1$, we obtain the same polynomials at each side, which means that they have the same evaluation in $[\{1\}\times Y_N]$, and both polynomials evaluate to $0$ in $[0:1]$. Therefore, they have the same evaluation in $X_N$. Because of the assumption $((\Delta'_{d_1})_\II)^\perp=((\Delta_{d_1})_\II)^\perp\cup \II_{d'_1}\subset (\Delta_{d_2}')_\II$, we obtain $\II_{d_1'}\subset \Delta'_{d_2}$ and it is clear that we have the evaluation of the polynomial in the left-hand side of (\ref{polint}) in $\PRS(N,\Delta_{d_2})_q$ if we consider the basis from (\ref{conj2}). The evaluation of the polynomial in the right-hand side is clearly in $(\PRS(N,\Delta_{d_1})_q)^\perp$ (see (\ref{conj1})). Thus, we have proved that the image by the evaluation map of the polynomials in the set 
\begin{equation}\label{conj4}
\{\mathcal{T}_{d'_1}(\xi_{d'_1})\mathcal{T}_{d'_1}(\xi_{d'_1}^r x_0x_1^{d'_1})-\mathcal{T}_{d'_1}(\xi^r_{d'_1})\mathcal{T}_{d'_1}(\xi_{d'_1}x_0x_1^{d'_1})\mid 0\leq r\leq n_{d_1}-1,r\neq 1\}
\end{equation}
is in $(\PRS(N,\Delta_{d_1})_q)^\perp\cap \PRS(N,\Delta_{d_2})_q$. 

Hence, the evaluation of the union of the sets from (\ref{conj3}) and (\ref{conj4}) is in $(\PRS(N,\Delta_{d_1})_q)^\perp\cap \PRS(N,\Delta_{d_2})_q$, and it is easy to see that the evaluation of this union is linearly independent. Taking into account the basis from (\ref{conj1}), we obtain that $\dim((\PRS(N,\Delta_{d_1})_q)^\perp\cap \PRS(N,\Delta_{d_2})_q)\geq \dim((\PRS(N,\Delta_{d_1})_q)^\perp)-1$, i.e., $c\leq 1$. 

On the other hand, having $c=0$ means that $(\PRS(N,\Delta_{d_1})_q)^\perp\subset \PRS(N,\Delta_{d_2})_q$. This implies that the evaluation of all the traces appearing in (\ref{polint}) are in $\PRS(N,\Delta_{d_2})_q$. However, the evaluations of $\mathcal{T}_{d'_1}(\xi_{d'_1}x_0x_1^{d'_1})$ and $\mathcal{T}_{d'_1}(\xi_{d'_1}x_1^{d'_1})$ differ only at the coordinate associated to the point $[0:1]$. This would imply that the minimum distance of $ \PRS(N,\Delta_{d_2})_q$ is 1, a contradiction. Therefore, $c=1$.
\end{proof}

\begin{rem}
We note that in the previous result we have that $(\Delta'_{d})_\II=\bigcup_{b\in \mathcal{B}\mid b< d}\II_b$.
\end{rem}

As we said in the introduction, it is desirable to obtain asymmetric quantum codes with higher error-correction capability for phase-shift errors, i.e. with $\delta_z>\delta_x$. For the codes obtained using Theorem \ref{asimetricosproyectivos}, this corresponds to choosing $d_1<d_2$.

In the next example we show that we are able to obtain codes which are better than the ones available in the current literature.

\begin{ex}\label{exasymmetric}
We consider the extension $\F_{16}\supset \F_{4}$, which is the setting from Example \ref{exdualciclo}. We choose $d_1=14$, $d_2=15$, and apply Theorem \ref{asimetricosproyectivos}, which gives the parameters $[[17,14,3/2;1]]_4$. In \cite{galindoasymmetric}, we can find a code with parameters $[[15,12,3/2;1]]_4$ using BCH codes. We see that the code we have obtained has better rate $\kappa/n$, and also better net rate $(\kappa-c)/n$.

If we consider the extension $\F_{25}\supset \F_{5}$ instead, and choose $d_1=22$, $d_2=23$, we obtain a code with parameters $[[26,19,4/3;1]]_5$ using Theorem \ref{asimetricosproyectivos}. It is possible to adapt the propagation rules from \cite{grasslhowmuch} to asymmetric EAQECCs arising from Theorem \ref{asimetricos}. For example, we can reduce the length by using extra entanglement, provided that $c\leq n-\kappa-2$:
\begin{equation}\label{propruleasym}
[[n,\kappa,\delta_z/\delta_x;c]]_q\rightarrow[[n-1,\kappa,\delta_z/\delta_x;c+1]]_q. 
\end{equation}
In \cite{galindoasymmetric} a code with parameters $[[24,19,4/3;3]]_5$ is presented, which can be obtained from our code with parameters $[[26,19,4/3;1]]_5$ by applying (\ref{propruleasym}) two times. In this sense, we can say that the parameters $[[24,19,4/3;3]]_5$ appearing in \cite{galindoasymmetric} are a consequence of the parameters $[[26,19,4/3;1]]_5$ that we obtain with Theorem \ref{asimetricosproyectivos}. 

Finally, if we consider the extension $\F_{64}\supset \F_8$, for $d_1=60$ and $d_2=63$, we obtain the parameters $[[65,58,5/2;1]]_8$, which give a better rate and net rate than the code with parameters $[[63,56,5/2;1]]_8$ from \cite{galindoasymmetric}. If we choose $d_1=58$ and $d_2=62$ instead, we obtain the parameters $[[65,52,7/3;1]]_8$, which, after using the propagation rule (\ref{propruleasym}) as before, give the parameters $[[63,52,7/3;3]]_8$ that appear in \cite{galindoasymmetric}.
\end{ex}

If we do not assume $((\Delta'_{d_1})_\II)^\perp\subset (\Delta_{d_2}')_\II$ in Theorem \ref{asimetricosproyectivos}, then we would obtain instead the parameters $[[N+1,\sum_{b\in \mathcal{B},b<d_1}{n_b}+\sum_{b\in \mathcal{B},b<d_2}{n_b}+1+c-N,\delta_z/\delta_x;c]]_q$, for $c=\dim (\PRS(N,\Delta_{d_1})_q)^\perp-\dim((\PRS(N,\Delta_{d_1})_q)^\perp\cap \PRS(N,\Delta_{d_2})_q)$.

\subsection{Hermitian EAQECCs}

In the Hermitian case, we have to work with three different fields. Hence, we are going to change the notation from the previous sections. We consider the field extension $\F_{q^{2\ell}}\supset \F_{q^2}$, where $q^{2\ell}=p^{2r}$, $q=p^s$, for some $r,s>0$, and $r=\ell s$. Thus, in what follows we are going to obtain codes of length $n=N+1$, where $N>1$ is an integer such that $N-1\mid q^{2\ell}-1$. 

As before, we are going to consider the set $\mathbb{Z}_{N}=\{0\}\cup \{1,2,\dots,N-1\}$, where $\{1,2,\dots,N-1\}$ is regarded as the set of representatives of the ring $\mathbb{Z}/(N-1)\mathbb{Z}$. We consider the cyclotomic sets with respect to $q^2$ over $\{0,1,\dots,N-1\}$. We call $\mathcal{A}$ the set of minimal elements of the different cyclotomic sets. We introduce now the Hermitian construction \cite[Thm. 3]{galindoentanglement} that we are going to use.

\begin{thm}[(Hermitian construction)]\label{hermitica}
Let $C\subset \F_{q^2}^n$ be a linear code of dimension $k$ and $C^{\perp_h}$ its Hermitian dual. Then, there is an EAQECC with parameters $[[n,\kappa,\delta;c]]_q$, where 
$$
c=k-\dim(C\cap C^{\perp_h}), \;\kappa=n-2k+c, \; \text{ and } \;\delta=\wt(C^{\perp_h}\setminus (C\cap C^{\perp_h})).
$$
\end{thm}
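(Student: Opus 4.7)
This statement is imported verbatim from \cite{galindoentanglement}, so the task is to sketch how one would prove the Hermitian EAQECC formulas from scratch. My plan is to follow the standard route for stabilizer codes in the Hermitian setting: reduce Hermitian duality in $\F_{q^2}^n$ to symplectic duality in $\F_q^{2n}$, and then invoke the Brun--Devetak--Hsieh symplectic EAQECC construction (of which the Euclidean CSS construction in Theorem \ref{css} is itself a special case).

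Concretely, I would first fix a self-dual basis $\{\beta_1,\beta_2\}$ of $\F_{q^2}$ over $\F_q$, i.e.\ one satisfying $\Tr_{\F_{q^2}/\F_q}(\beta_i\beta_j^q)=\delta_{ij}$, and expand $C$ to an $\F_q$-linear code $\tilde C=\phi(C)\subset \F_q^{2n}$ of $\F_q$-dimension $2k$, where $\phi$ replaces each $\F_{q^2}$-coordinate $a\beta_1+b\beta_2$ by the pair $(a,b)$. The core technical step is then to establish the correspondence
\[
\phi(C)^{\perp_s}=\phi(C^{\perp_h}),
\]
where $\perp_s$ is the standard symplectic form on $\F_q^{2n}$. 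This follows by unfolding $\Tr_{\F_{q^2}/\F_q}\langle u,v\rangle_h$ in the self-dual basis: the cross terms $\beta_i\beta_j^q$ trace to $\delta_{ij}$, which rearranges the Hermitian pairing into the symplectic pairing of $\phi(u)$ and $\phi(v)$. As immediate corollaries I would deduce $\dim_{\F_q}(\phi(C)\cap\phi(C)^{\perp_s})=2\dim_{\F_{q^2}}(C\cap C^{\perp_h})$ and an equality of weight enumerators between $C^{\perp_h}\setminus(C\cap C^{\perp_h})$ and its image under $\phi$, since $\phi$ is coordinatewise and preserves the Hamming support.

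With this dictionary in hand, the Brun--Devetak--Hsieh symplectic EAQECC construction applied to $\tilde C$ yields an $[[n,\kappa,\delta;c]]_q$ code whose parameters are read off from $\dim_{\F_q}\tilde C$, $\dim_{\F_q}(\tilde C\cap\tilde C^{\perp_s})$ and the symplectic weight of $\tilde C^{\perp_s}\setminus(\tilde C\cap\tilde C^{\perp_s})$. Substituting the Hermitian quantities via the identifications above, and accounting for the factor of $2$ that converts symplectic ranks in $\F_q^{2n}$ into entanglement counts of $\F_q$-qupairs, collapses the symplectic formulas into exactly the stated expressions $c=k-\dim(C\cap C^{\perp_h})$, $\kappa=n-2k+c$, and $\delta=\wt(C^{\perp_h}\setminus(C\cap C^{\perp_h}))$.

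The main obstacle is the algebraic identity $\phi(C)^{\perp_s}=\phi(C^{\perp_h})$, which is where every sign and factor-of-two subtlety lives; once it is in place, the rest is either a routine dimension count or a direct appeal to established machinery. A secondary subtlety is guaranteeing the existence of a self-dual basis of $\F_{q^2}/\F_q$ for every $q$: for $q$ odd one can rescale any normal basis into a self-dual one, while for $q$ even a classical Lempel--Seroussi type construction is required. Both are standard, but they must be flagged since the whole reduction rests on them.
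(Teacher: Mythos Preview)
The paper does not prove this statement; it is imported from \cite[Thm.~3]{galindoentanglement} and used as a black box, so there is no internal argument to compare against. Your overall plan---expand $C$ to an $\F_q$-subspace of $\F_q^{2n}$, match Hermitian duality with symplectic duality, and feed the result into the Brun--Devetak--Hsieh construction---is the standard route and is sound in outline.

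That said, the justification you give for the key identity $\phi(C)^{\perp_s}=\phi(C^{\perp_h})$ does not work as written. With a basis satisfying $\Tr(\beta_i\beta_j^q)=\delta_{ij}$, expanding $\Tr_{\F_{q^2}/\F_q}\langle u,v\rangle_h$ for $u=a\beta_1+b\beta_2$ and $v=c\beta_1+d\beta_2$ yields $ac+bd$, the \emph{Euclidean} pairing, not the symplectic one; so the sentence ``the cross terms $\beta_i\beta_j^q$ trace to $\delta_{ij}$, which rearranges the Hermitian pairing into the symplectic pairing'' is false. The conclusion $\phi(C)^{\perp_s}=\phi(C^{\perp_h})$ is still true for $\F_{q^2}$-linear $C$, but the correct mechanism is that \emph{any} nonzero $\F_q$-linear functional $L:\F_{q^2}\to\F_q$ gives an equivalence ``$L(\langle v,c\rangle_h)=0$ for all $c\in C$'' $\Leftrightarrow$ $v\in C^{\perp_h}$, and the symplectic form corresponds to a different $L$ than the trace. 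A cleaner set-up is to take the basis $\{1,\omega\}$ and use the alternating part $\langle u,v\rangle_h-\langle u,v\rangle_h^{\,q}$, which is a scalar multiple of the symplectic form directly. Finally, your characteristic-$2$ patch fails: the form $\Tr(xy^q)=xy^q+x^qy$ is alternating when $q$ is even, so no self-dual basis exists, and the Lempel--Seroussi construction concerns the ordinary trace form $\Tr(xy)$, not this one. The $\{1,\omega\}$ route avoids the issue entirely.
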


We are only going to consider the Hermitian product over $\F_{q^2}$. Therefore, for $a,b\in \F_{q^{2}}^{n}$ we have 
$$
a\cdot_{h} b:=\sum_{i=0}^n a_i b_i^{q}.
$$

In what follows, when considering a power of a code or a vector, we will be considering the component-wise power, i.e., $C^{q}:=\{c^{q}:=(c_1^{q},\dots,c_n^{q})\mid c=(c_1,\dots,c_n)\in C\}$. It is easy to check that, for codes over $\F_{q^2}$, we have that $C^\perp=(C^{\perp_h})^q$, where $C^{\perp_h}$ denotes the Hermitian dual.

\begin{thm}\label{cuanticohermitico}
Let $\mathcal{A}=\{a_0=0<a_1<a_2<\cdots< a_z\}$ be the set of minimal representatives of the cyclotomic sets $\II_{a_i}$, $0\leq i\leq z$, of $\{0,1,\dots,N-1\}$ with respect to $q^2$. Let $\Delta=\bigcup_{i=0}^{t-1} \II_{a_i}\cup \{a_t\}$ such that $\RS(N,\Delta'')_{q^2}\subset (\RS(N,\Delta'')_{q^2})^{\perp_h}$, where $\Delta''=\bigcup_{i=0}^{t} \II_{a_i}$. Then we can construct an EAQECC with parameters $[[n,\kappa,\geq \delta;c]]_{q}$, where $n=N+1$, $\kappa=N+1-2\left(\sum_{i=0}^t n_{a_i}\right) +c$, $\delta=a_t+2$ and $c\leq 1$. 
\end{thm}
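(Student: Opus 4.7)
The plan is to apply the Hermitian construction (Theorem \ref{hermitica}) to the code $C := \mDTTh$ of length $n = N+1$, following the pattern of Theorem \ref{cuanticoeuclideo}. The $q^2$-analog of Remark \ref{remarkdefs} gives $\dim \mDTq = N + 1 - \sum_{i=0}^t n_{a_i}$, where the sizes $n_{a_i}$ now refer to $q^2$-cyclotomic cosets; hence $k = \dim C = \sum_{i=0}^t n_{a_i}$ and $\kappa = n - 2k + c$ matches the stated formula. Since $C^{\perp_h} = \mDTq$ and the BCH-type bound from Remark \ref{remarkdefs} carries over verbatim to give $\wt(\mDTq) \geq a_t + 2$, the minimum distance $\delta = \wt(\mDTq \setminus (\mDTq \cap \mDTTh))$ is at least $a_t + 2$.

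The core step is to show $c \leq 1$. Exactly as in the proof of Theorem \ref{cuanticoeuclideo}, one has $(\RS(N,\Delta)^\perp, 0) \subset \mD^\perp$, and taking subfield subcodes over $\F_{q^2}$ together with the observation that $\Delta$ and $\Delta''$ have the same complete $q^2$-cyclotomic cosets in their duals yields $((\RS(N,\Delta'')_{q^2})^\perp, 0) \subset \mDTq$. To turn the Hermitian hypothesis into a usable Euclidean containment, I will invoke the identity $D^{\perp_h} = (D^\perp)^q$ valid for codes $D \subset \F_{q^2}^n$ (equivalent to the identity $C^\perp = (C^{\perp_h})^q$ recalled in the paper). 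Writing $A := \RS(N,\Delta'')_{q^2}$, the hypothesis $A \subset A^{\perp_h} = (A^\perp)^q$ yields, upon raising to the $q$-th power (which is involutive on $\F_{q^2}$), the Euclidean containment $A^q \subset A^\perp = (\RS(N,\Delta'')^\perp)_{q^2}$. Combined with the previous inclusion, $(A^q, 0) \subset \mDTq$.

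To conclude, decompose $\mDTq = ((\RS(N,\Delta'')^\perp)_{q^2}, 0) + \langle w' \rangle$ with $w'$ of nonzero last coordinate, and use $D^{\perp_h} = (D^\perp)^q$ once more to compute $((\RS(N,\Delta'')^\perp)_{q^2}, 0)^{\perp_h} = A^q \times \F_{q^2}$. Then $\mDTTh = (A^q \times \F_{q^2}) \cap \langle w' \rangle^{\perp_h}$, so the subspace $(A^q, 0) \cap \langle w' \rangle^{\perp_h}$ lies in both $\mDTq$ and $\mDTTh$, and has dimension at least $\dim A - 1 = \sum_{i=0}^t n_{a_i} - 1$. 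This forces $c = k - \dim(\mDTTh \cap \mDTq) \leq 1$, completing the proof.

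The main obstacle is the bookkeeping linking Hermitian duality, Euclidean duality, subfield subcodes over $\F_{q^2}$, and componentwise Frobenius. The two bridging identities $D^{\perp_h} = (D^\perp)^q$ and the Galois-invariance identity $(\RS(N,\Delta'')_{q^2})^\perp = (\RS(N,\Delta'')^\perp)_{q^2}$ (already exploited in the Euclidean proof) reduce the argument to the same linear-algebraic shuffling as in the Euclidean case; once the $q$-th-power transfer is in place, the rest of the proof mirrors Theorem \ref{cuanticoeuclideo} step by step.
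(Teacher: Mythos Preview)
Your proof is correct and follows essentially the same approach as the paper's: both apply the Hermitian construction to $C=\mDTTh$, decompose $\mDTq=((\RS(N,\Delta'')^\perp)_{q^2},0)+\langle w'\rangle$, use the identity $D^{\perp_h}=(D^\perp)^q$ to translate the Hermitian self-orthogonality hypothesis into $A^q\subset (\RS(N,\Delta'')^\perp)_{q^2}$ (with $A=\RS(N,\Delta'')_{q^2}$), and then exhibit a subspace of $\mDTq\cap\mDTTh$ of dimension at least $\dim A-1$. The only cosmetic difference is that you phrase the last step as the subspace $(A^q,0)\cap\langle w'\rangle^{\perp_h}$, whereas the paper picks an explicit basis of $\mDTTh$ consisting of $\dim A-1$ vectors in $(A^q,0)$ plus one vector with nonzero last coordinate; these are the same argument.
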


\begin{proof}
We are going to consider the code $C=\mDTTh$ for the Hermitian construction \ref{hermitica}. Using what we obtained in Theorem \ref{cuanticoeuclideo}, the only thing left to prove is the statement about the parameter $c$.

Following the proof of Theorem \ref{cuanticoeuclideo}, we have $\mDTq=( (\RS(N,\Delta'')^\perp)_{q^2},0)+\langle w'\rangle $ for some vector $w'$ with nonzero last coordinate. Therefore, we see that $\dim \mDTTh = \dim \RS(N,\Delta'')_{q^2} = \dim ((\RS(N,\Delta'')^{\perp})_{q^2})^{\perp_{h}}$. Moreover, we have
$$
((\RS(N,\Delta'')^{\perp})_{q^2})^{\perp_{h}}=((\RS(N,\Delta'')_{q^2})^\perp)^{\perp_{h}}=(((\RS(N,\Delta'')_{q^2})^\perp)^{\perp})^q=\left(\RS(N,\Delta'')_{q^2}\right)^q.
$$

Thus, we obtain 
$$
\mDTTh=(( ( \PRS(N,\Delta'')_{q^2})^q,0)+\langle (0,0,\dots,0,1)\rangle) \cap \langle (w')\rangle^{\perp_h}.
$$
Note that $(0,0,\dots,0,1)\not \in \langle (w')\rangle^{\perp_h}$ because $w'$ has a nonzero last coordinate. We can consider a basis for $\mDTTh$ formed by $(\dim \RS(N,\Delta'')_{q^2}-1)$ vectors $u_i\in ((\RS(N,\Delta'')_{q^2})^{q},0)$, and a vector $w$ such that its last coordinate is nonzero (not all vectors can have the last coordinate equal to 0 because that would mean that we have the vector $(0,0,\dots,0,1)\in \mDTh$, contradicting the bound given for the minimum distance).

By our hypothesis, we have $\RS(N,\Delta'')_{q^2}\subset (\RS(N,\Delta'')_{q^2})^{\perp_h}$. This implies that $\left(\RS(N,\Delta'')_{q^2}\right)^{q}\subset \left((\RS(N,\Delta'')_{q^2})^{\perp_h}\right)^{q}=(\RS(N,\Delta'')_{q^2})^\perp=(\RS(N,\Delta'')^\perp)_{q^2}$. Taking into account that $\mDTh \supset ((\RS(N,\Delta'')^\perp)_{q^2},0)\supset ((\RS(N,\Delta'')_{q^2})^{q},0)$, we see that the vectors $u_i$ are in $\mDTh$ as well, and we obtain the desired inequality for the dimension of the intersection.
\end{proof}

\begin{rem}\label{remhermitico}
From \cite[Prop. 3]{galindostabilizer} we can obtain conditions in order to have $\RS(N,\Delta'')_{q^2}\subset (\RS(N,\Delta'')_{q^2})^{\perp_h}$, like the one we show next. Let $\Delta''=\bigcup_{i=0}^{t} \II_{a_i}$, and we denote by $a'_i$ the minimal element in $\mathcal{A}$ such that $\II_{a'_i}=\II_{-qa_i}$. Assuming $d(\Delta)<N-1$, if $\Delta''\subset (\Delta'')^{\perp_h}:=\{0,1,\dots,N-1\}\setminus \bigcup_{i=0}^t \II_{a'_i}$, then we have $\RS(N,\Delta'')_{q^2}\subset (\RS(N,\Delta'')_{q^2})^{\perp_h}$.
\end{rem}

\begin{ex}
We continue with the setting from Example \ref{ejcuanteuclideo}. It is easy to check that the set $\Delta$ in Example \ref{ejcuanteuclideo} satisfies $\Delta\subset \Delta^{\perp_h}$, and by Remark \ref{remhermitico} and Theorem \ref{cuanticohermitico} we obtain a quantum code with parameters $[[82,67,7;1]]_3$.
\end{ex}

With the construction from Theorem \ref{cuanticohermitico} we can obtain several quantum codes over $\F_2$ whose parameters do not appear in the table of EAQECCs from \cite{codetables}, and therefore we improve the table. With the extension $\F_{2^4}\supset \F_{2^2}$, we can obtain a code with parameters $[[17,12,3;1]]_2$, which is not in the table from \cite{codetables}. We can consider now the following propagation rule from \cite{grasslhowmuch}: let $C$ be an EAQECC with parameters $[[n,\kappa,\delta;c]]_q$ obtained from Theorem \ref{hermitica} (for example, the codes from Theorem \ref{cuanticohermitico}). If $c\leq n-\kappa-2$, then we can reduce the length by using extra entanglement:
\begin{equation}\label{proprule2}
    [[n,\kappa,\delta;c]]_q\rightarrow [[n-1,\kappa,\delta;c+1]]_q.
\end{equation}
Iterating this rule, it is easy to check that, from an EAQECC with parameters $[[n,\kappa,\delta;c]]_q$, one can obtain EAQECCs with parameters $[[n-s,\kappa,\delta;c+s]]_q$, $s=1,\dots,(n-\kappa-c)/2$. Note that the maximum value for $c$ is $k=\dim C$, where $C$ is the classical code used for Theorem \ref{hermitica}, and for the maximum value of $s$ that we have stated we have precisely that $c+s=k$:
$$
c+s=c+\frac{n-\kappa-c}{2}=c+\frac{2k-2c}{2}=k.
$$

Applying the propagation rule (\ref{proprule2}) to the parameters $[[17,12,3;1]]_2$, we obtain $[[16,12,3;2]]_2$ and $[[15,12,3;3]]_2$, which are also missing in the table \cite{codetables}.

For the extension $\F_{2^6}\supset \F_{2^2}$, we obtain codes with length $65$, which is greater than the current maximum length in \cite{codetables} for EAQECCs over $\F_2$. Nevertheless, we can reduce the length with the propagation rule (\ref{proprule2}) and check if the corresponding parameters are in the table. A code with parameters $[[64,58,3;2]]_2$, whose parameters are missing in \cite{codetables}, is obtained from the code with parameters $[[65,58,3;1]]_2$ derived from Theorem \ref{cuanticohermitico} using (\ref{proprule2}). Moreover, by applying the propagation rule (\ref{proprule2}) to the code with parameters $[[65,40,7;1]]_2$ deduced from Theorem \ref{cuanticohermitico}, we obtain codes with parameters $[[65-i,40,7;1+i]]_2$, for $i=1,2,\dots,12$, whose parameters are also missing in \cite{codetables}.

In total, we obtain in this way 16 EAQECCs over $\F_2$ whose parameters are missing in \cite{codetables}.

The table of EAQECCs from \cite{codetables} also covers codes over $\F_3$. However, the smaller length that we can achieve with Theorem \ref{cuanticohermitico} over $\F_3$ would be $3^4+1=82$, much higher than the current maximum length in the table from \cite{codetables} for this case. For example, we obtain codes with parameters $[[82,77,3;1]]_3$, $[[82,73,4;1]]_3$, $[[82,69,5;1]]_3$ and $[[82,65,6;1]]_3$.

\section{Evaluating at the trace roots}\label{sectraza}

In this section, following the ideas from \cite{fernandotrace}, we are going to consider evaluation codes over the roots of a suitable trace polynomial. In \cite{fernandotrace}, the authors considered the trace polynomial over $\F_{q^{2\ell}}$ with respect to $\fq$ defined as
$$
\Trr(x)=x+x^q+x^{q^2}+\cdots + x^{q^{2\ell-1}}.
$$

Let $Y_{\Tr_\ell}=\{\alpha\in \F_{q^{2\ell}}\mid \Trr(\alpha)=0 \}$. It is well known that $\abs{Y_{\Tr_\ell}}=q^{2\ell-1}$. In \cite{fernandotrace}, evaluation codes over the roots of the trace are defined, obtaining codes with length $q^{2\ell-1}$, and bounds for the dimension and minimum distance of these codes are found. In this section we are going to do a similar thing over the projective space, obtaining codes of length $q^{2\ell-1}+1$.

Firstly, we need to define the finite set of projective points in which we are going to evaluate. In order to do this, we are simply going to add the point at infinity to the set of roots of the trace, i.e., we are going to consider the following set of points:
$$
\mathbb{X}_{\Trr}=\{ [1:\alpha] \mid \Trr(\alpha)=0\}\cup \{[0:1]\}.
$$

It is clear from the definition that $\abs{\Xr}=q^{2\ell-1}+1$. Moreover, we can give this set as the zeroes of a square-free homogeneous polynomial. In the rest of this section, when we consider the homogenization $f^h$ of a polynomial $f$, we are considering the standard homogenization (up to degree $\deg(f)$).

\begin{prop}
The vanishing ideal of $\Xr$ is $I(\Xr)=\langle x_0(\Trr(x_1))^h \rangle$.
\end{prop}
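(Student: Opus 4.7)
The plan is to verify first that the polynomial $f:=x_0(\Trr(x_1))^h$ actually cuts out $\Xr$, and then invoke a dimension/Hilbert-function argument to conclude that it generates the whole vanishing ideal.

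First I would compute $\deg f$. Since $\Trr(x_1)$ has degree $q^{2\ell-1}$, its standard homogenization $(\Trr(x_1))^h$ is a homogeneous polynomial of degree $q^{2\ell-1}$ in $S$, so $f$ is homogeneous of degree $q^{2\ell-1}+1=\abs{\Xr}$. Next, I would check directly that $f$ vanishes on $\Xr$: at a point $[1:\alpha]$ with $\Trr(\alpha)=0$ we have $(\Trr(x_1))^h(1,\alpha)=\Trr(\alpha)=0$, while at $[0:1]$ the factor $x_0$ kills $f$. Hence $\langle f\rangle\subset I(\Xr)$.

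To upgrade this inclusion to an equality I would argue that $f$ is square-free. The factor $x_0$ is obviously square-free, and it is coprime to $(\Trr(x_1))^h$ because the leading monomial $x_1^{q^{2\ell-1}}$ of $\Trr(x_1)$ shows that $x_0\nmid (\Trr(x_1))^h$. For $\Trr(x_1)$ itself, its formal derivative with respect to $x_1$ is $1$ (every other term has exponent divisible by $p$), so $\gcd(\Trr,\Trr')=1$ and $\Trr(x_1)$, hence also its homogenization, is square-free. Therefore $f$ has no repeated factors, and $\langle f\rangle$ is a radical ideal in $S$.

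Finally, I would close the argument with a Hilbert-function comparison. The ideal $\langle f\rangle$ is a principal homogeneous ideal generated in degree $q^{2\ell-1}+1$, so the Hilbert polynomial of $S/\langle f\rangle$ equals $\deg f=q^{2\ell-1}+1$. On the other hand, the Hilbert polynomial of $S/I(\Xr)$ is the number of points $\abs{\Xr}=q^{2\ell-1}+1$. Since $\langle f\rangle\subset I(\Xr)$, both ideals are radical, and the two quotients have the same Hilbert polynomial, the inclusion must be an equality; this is exactly the standard fact that the vanishing ideal of $n$ distinct points in $\Pp$ is principal, generated by the unique-up-to-scalar homogeneous polynomial of degree $n$ that vanishes on them.

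The only step that requires any care is the square-freeness of $\Trr(x_1)$, but the derivative trick handles it in one line; everything else is bookkeeping.
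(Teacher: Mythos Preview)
Your proof is correct. The overall skeleton matches the paper's: verify that $f=x_0(\Trr(x_1))^h$ vanishes on $\Xr$, check that $f$ is square-free, and upgrade the inclusion $\langle f\rangle\subset I(\Xr)$ to an equality. The details differ in two places. For square-freeness the paper simply writes out the factorization $\Trr(x)=\prod_{\Trr(\alpha)=0}(x-\alpha)$ and homogenizes, whereas you use the derivative criterion $\Trr'(x_1)=1$; your version is slicker and avoids having to know in advance that $\Trr$ splits into $q^{2\ell-1}$ distinct linear factors. For the final step the paper concludes via the Nullstellensatz (square-free $\Rightarrow$ radical $\Rightarrow$ equal to the vanishing ideal of its zero set), while you compare Hilbert polynomials of two saturated ideals in $S=\F_{q^{2\ell}}[x_0,x_1]$. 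Your closing remark already identifies the most economical route: in $\Pp$ the vanishing ideal of $n$ distinct points is principal of degree $n$, so once you know $\deg f=\abs{\Xr}$ and $f\in I(\Xr)$ you are done without ever needing square-freeness or the Hilbert-polynomial machinery. Either way, the argument is sound.
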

\begin{proof}
The generator of the ideal is a homogeneous polynomial. Therefore, we can just look at the set of representatives $P^1$ to check the zeroes of the ideal. It is clear that $[0:1]\in V(\langle x_0(\Trr(x_1))^h \rangle)$. And it is also clear that if $[1:\alpha]$ is a zero of $x_0(\Trr(x_1))^h$, then $\alpha$ must be a root of $\Trr(x)$. Thus, we have that $V(\langle x_0(\Trr(x_1))^h \rangle)=\Xr$. 

On the other hand, we have the decomposition
$$
\Trr(x)=\prod_{\alpha\in \F_{q^{2\ell}}\mid \Trr(\alpha)=0}(x-\alpha).
$$

Homogenizing and multiplying by $x_0$ we get
$$
x_0(\Trr(x_1))^h=x_0\prod_{\alpha\in \F_{q^{2\ell}}\mid \Trr(\alpha)=0}(x_1-\alpha x_0).
$$
Therefore, $x_0(\Trr(x_1))^h$ is a square-free polynomial and $\langle x_0(\Trr(x_1))^h \rangle$ is a radical ideal by \cite[Prop. 9, Chapter 4, Section 2]{cox}, which means that it is equal to $I(\Xr)$.
\end{proof}

If we consider the set of standard representatives $\xr$ of $\Xr$, we obtain the following vanishing ideal.

\begin{prop}\label{vanishingtrazaafin}
The vanishing ideal of $\xr$ is
$$
I(\xr)=\langle x_0^2-x_0,x_1^{q^{2\ell}}-x_1,(x_0-1)(x_1-1),x_0\Tr(x_1)\rangle .
$$
\end{prop}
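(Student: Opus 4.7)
The plan is to mirror the structure of the proof of Proposition \ref{grobnerP1}: check that the listed polynomials vanish on $X_{\Tr_\ell}$, check that the variety they cut out is precisely $X_{\Tr_\ell}$, and then invoke Seidenberg's Lemma together with Hilbert's Nullstellensatz to promote the variety equality to an equality of ideals. Let $J$ denote the ideal on the right-hand side of the claimed equality.

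First I would verify $J \subset I(X_{\Tr_\ell})$ by evaluating each generator at the two kinds of points in $X_{\Tr_\ell}$. At a point $(1,\alpha)$ with $\Tr_\ell(\alpha)=0$: the relation $x_0^2-x_0$ vanishes since the first coordinate is $1$; $x_1^{q^{2\ell}}-x_1$ vanishes since $\alpha\in\mathbb{F}_{q^{2\ell}}$; $(x_0-1)(x_1-1)$ vanishes since the first factor is $0$; and $x_0\Tr_\ell(x_1)$ vanishes by the assumption on $\alpha$. At the point $(0,1)$, every generator vanishes by inspection (the factor $x_0$ handles two of them, while $x_1-1$ handles $(x_0-1)(x_1-1)$). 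The new ingredient compared to Proposition \ref{grobnerP1} is the generator $x_0\Tr_\ell(x_1)$, which is precisely what cuts the second coordinate of an affine point down to a trace-zero element.

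Second, I would show $V(J)\subset X_{\Tr_\ell}$ pointwise. If $(a_0,a_1)$ is a zero of the generators, then $x_0^2-x_0$ forces $a_0\in\{0,1\}$. In the case $a_0=0$, the generator $(x_0-1)(x_1-1)$ forces $a_1=1$, so the point is $(0,1)\in X_{\Tr_\ell}$. In the case $a_0=1$, the generator $x_1^{q^{2\ell}}-x_1$ forces $a_1\in\mathbb{F}_{q^{2\ell}}$, and $x_0\Tr_\ell(x_1)$ then forces $\Tr_\ell(a_1)=0$, so the point is $(1,a_1)\in X_{\Tr_\ell}$. Combined with the previous step, this gives $V(J)=X_{\Tr_\ell}$, and the same computation holds over the algebraic closure of $\mathbb{F}_{q^{2\ell}}$ since the generators $x_0^2-x_0$ and $x_1^{q^{2\ell}}-x_1$ already split into distinct linear factors there.

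Finally, I would apply Seidenberg's Lemma \cite[Prop. 3.7.15]{kreuzer1} exactly as in the proof of Proposition \ref{grobnerP1}: the ideal $J$ contains the separable univariate polynomials $x_0^2-x_0$ and $x_1^{q^{2\ell}}-x_1$ in the two different variables, which is enough to conclude that $J$ is radical. Hilbert's Nullstellensatz then gives $J=I(V(J))=I(X_{\Tr_\ell})$. I do not anticipate any substantive obstacle: the argument is a direct adaptation of the previous proposition, with the extra generator $x_0\Tr_\ell(x_1)$ only intervening in the analysis of the $a_0=1$ branch and contributing nothing new to the Seidenberg/Nullstellensatz step.
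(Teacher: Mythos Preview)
Your proposal is correct and follows essentially the same argument as the paper: verify that the generators vanish on $X_{\Tr_\ell}$, show by the case split $a_0\in\{0,1\}$ that the zero set is exactly $X_{\Tr_\ell}$, and conclude via Seidenberg's Lemma and the Nullstellensatz. If anything, your version is slightly more explicit in invoking $x_1^{q^{2\ell}}-x_1$ to ensure $a_1\in\mathbb{F}_{q^{2\ell}}$ in the $a_0=1$ branch, which the paper leaves implicit.
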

\begin{proof}
It is clear that any point of $\xr$ satisfies the equations. On the other hand, any point that satisfies this equations must have the first coordinate equal to 0 or 1 because of the first equation. If it is 0, then by the equation $(x_0-1)(x_1-1)\equiv 0 \bmod I(\xr)$ we have that the last coordinate is equal to 1. If the first coordinate is 1, then the last equation implies that the last coordinate must be a zero of $\Tr(x)$. Therefore, $V(I(\xr))=\xr$. We obtain the result applying Seidenberg's Lemma \cite[Prop. 3.7.15]{kreuzer1} and Hilbert's Nullstellensatz over the algebraic closure of $\F_{q^{2\ell}}$.
\end{proof}

We are going to define the evaluation map that we are going to use in order to construct these new codes (we have $n=q^{2\ell-1}+1$): 
$$
\ev_{\Trr}:\F_{q^{2\ell}}[x_0,x_1]/I(\xr) \rightarrow \F_{q^{2\ell}}^{n},\:\: f\mapsto \left(f(P_1),\dots,f(P_{n})\right)_{P_i \in \xr}.
$$

\begin{defn}
Let $\mathcal{A}=\{a_0=0<a_1<\cdots< a_z\}$. We are going to consider a set $\Delta=\bigcup_{i=0}^{t-1}\II_{a_i}\cup \{a_t\}$ as before. For such a set $\Delta$, we consider the code $\Dr$ defined as the linear code generated by $\{\ev_{\Trr} (x_0 x_1^\alpha)\mid \alpha\in \Delta\setminus\{a_t\} \}\cup \{\ev_{\Trr} (x_1^{a_t})\}$.
\end{defn}

In what follows we are going to need to use the codes $\RS(\Trr,\Delta):=\RS(Y_{\Trr},\Delta)$ that appear in \cite{fernandotrace}, which are the puncturing of the codes $\Dr$ at the coordinate associated to the point $[0:1]$. When $\Delta$ is a union of consecutive cyclotomic sets, we have that $(\RS(\Trr,\Delta)_{q^2})^\perp=(\RS(\Trr,\Delta)^\perp)_{q^2}$. We are going to be interested in the code $\Drts$, for which we have the following result.

\begin{thm}\label{thmtraza}
Let $a_0=0<a_1<a_2<\cdots<a_{t-1}<a_{t}<q^{2\ell}-1$ be a sequence of consecutive elements of $\mathcal{A}$. Let $\Delta=\bigcup_{i=0}^{t-1} \II_{a_i}\cup \{a_t\}$ and let $\Delta''=\Delta\cup \II_{a_t}$. Assuming that $\Drts$ is not degenerate, we have the following inequalities:
$$
\begin{aligned}
\dim \Drts &=  \dim (\RS(\Trr,\Delta'')^\perp)_{q^2}+1\geq n-\sum_{i=0}^{t} n_{a_i},\\
\wt(\Drts)&\geq a_t+2.
\end{aligned}
$$
\end{thm}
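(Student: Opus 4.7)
The plan is to mirror closely the proof of Theorem \ref{cuanticoeuclideo} (and of Proposition \ref{bchmala}) over the new evaluation set $\Xr$. First, exactly as in Remark \ref{remarkdefs}, one has the decomposition
$$\Dr=(\RS(\Trr,\Delta'),0)+\langle(\ev_{Y_{\Trr}}(x^{a_t}),1)\rangle,\qquad \Delta':=\Delta\setminus\{a_t\}=\bigcup_{i=0}^{t-1}\II_{a_i},$$
so $\dim_{\F_{q^{2\ell}}}\Dr=\dim\RS(\Trr,\Delta')+1$ and every codeword of $\Drs$ with vanishing last coordinate lies in $(\RS(\Trr,\Delta')_{q^2},0)$. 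The non-degeneracy hypothesis (which is equivalent to $(0,\ldots,0,1)\notin\Drts$, just as in Proposition \ref{bchmala}) provides a codeword $v^{*}=(u,1)\in\Drs$ with $u\in\F_{q^2}^{n'}$ and $c:=u-\ev_{Y_{\Trr}}(x^{a_t})\in\RS(\Trr,\Delta')$; hence $\Drs=(\RS(\Trr,\Delta')_{q^2},0)\oplus\langle v^*\rangle$ and $\dim\Drs=\dim\RS(\Trr,\Delta')_{q^2}+1$.

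For the dimension of $\Drts$, the main step is to identify $\dim\Drs$ with $\dim\RS(\Trr,\Delta'')_{q^2}$. Applying componentwise $q^2$-Frobenius $\Phi$ to $u=c+\ev_{Y_{\Trr}}(x^{a_t})$ and using $\Phi(u)=u$ together with the Galois-invariance of $\RS(\Trr,\Delta')$ yields
$$\ev_{Y_{\Trr}}(x^{q^2a_t})-\ev_{Y_{\Trr}}(x^{a_t})=c-\Phi(c)\in\RS(\Trr,\Delta').$$
Iterating $\Phi$ and telescoping, one obtains $\ev_{Y_{\Trr}}(x^{q^{2k}a_t})-\ev_{Y_{\Trr}}(x^{a_t})\in\RS(\Trr,\Delta')$ for every $k$, so that each $\ev_{Y_{\Trr}}(x^\beta)$ with $\beta\in\II_{a_t}$ is congruent to $\ev_{Y_{\Trr}}(x^{a_t})$ modulo $\RS(\Trr,\Delta')$. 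Any element of $\RS(\Trr,\Delta'')_{q^2}$ can therefore be rewritten with its $\II_{a_t}$-support collapsed to $\{a_t\}$, giving $\RS(\Trr,\Delta'')_{q^2}\subseteq\RS(\Trr,\Delta)_{q^2}$ and hence $\RS(\Trr,\Delta)_{q^2}=\RS(\Trr,\Delta'')_{q^2}$. Since the projection $\Drs\to\F_{q^2}^{n'}$ onto the first $n'$ coordinates is injective (again by non-degeneracy) with image $\RS(\Trr,\Delta')_{q^2}+\langle u\rangle=\RS(\Trr,\Delta)_{q^2}$, we conclude $\dim\Drs=\dim\RS(\Trr,\Delta'')_{q^2}$. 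The Galois-invariance of $\Delta''$ then supplies $(\RS(\Trr,\Delta'')^\perp)_{q^2}=(\RS(\Trr,\Delta'')_{q^2})^\perp$, yielding $\dim\Drts=n-\dim\Drs=\dim(\RS(\Trr,\Delta'')^\perp)_{q^2}+1$, and the coarse estimate $\dim\RS(\Trr,\Delta'')_{q^2}\leq|\Delta''|=\sum_{i=0}^t n_{a_i}$ gives the stated inequality.

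For the minimum distance, split cases on the last coordinate of a nonzero $v=(v_1,\ldots,v_{n'},v_n)\in\Drts$. If $v_n=0$, then orthogonality to $v^*$ and to $(\RS(\Trr,\Delta')_{q^2},0)$ puts $(v_1,\ldots,v_{n'})\in (\RS(\Trr,\Delta')_{q^2}+\langle u\rangle)^\perp=(\RS(\Trr,\Delta'')_{q^2})^\perp$, and consecutiveness of $a_0,\ldots,a_t$ in $\mathcal{A}$ forces $\{0,1,\ldots,a_t\}\subset\Delta''$, so the BCH-type argument used in Proposition \ref{bchmala} yields $\wt(v)\geq a_t+2$. If $v_n\neq 0$, then $(v_1,\ldots,v_{n'})\in(\RS(\Trr,\Delta')_{q^2})^\perp$ and $\{0,1,\ldots,a_t-1\}\subset\Delta'$ give $\wt(v_1,\ldots,v_{n'})\geq a_t+1$, so counting the nonzero last entry produces $\wt(v)\geq a_t+2$. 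The hard part is the Frobenius identity and its iteration in the middle paragraph: this is precisely where the non-degeneracy hypothesis is essential, since without it (e.g.\ $q=3$, $\ell=2$, $a_t=2$) the relation $c-\Phi(c)=\Phi(z)-z$ has no solution $c\in\RS(\Trr,\Delta')$, $\RS(\Trr,\Delta)_{q^2}\neq\RS(\Trr,\Delta'')_{q^2}$, and both formulae break down.
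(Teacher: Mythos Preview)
Your argument has a genuine gap at its very starting point. You claim that non-degeneracy of $\Drts$ ``is equivalent to $(0,\ldots,0,1)\notin\Drts$'' and that it ``provides a codeword $v^{*}=(u,1)\in\Drs$''. Neither assertion is justified. Non-degeneracy of $\Drts$ means that for every coordinate there is some codeword of $\Drts$ with that coordinate nonzero; in particular it gives a vector $w'\in\Drts$ with $w'_n\neq 0$. It says nothing about the \emph{primal} subfield subcode $\Drs$. The analogy with Proposition~\ref{bchmala} does not carry over: there one is dealing with $\DT=(\DS)^\perp$, so $e_n\in\DT$ is literally equivalent to degeneracy of $\DS$ at the last coordinate. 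In the present setting $\Drts=(\Drt)_{q^2}$ is \emph{not} in general equal to $(\Drs)^\perp$, because $\Dr$ is not a priori Galois invariant (indeed, Galois invariance of $\Dr$ is exactly the Frobenius relation $\ev_{Y_{\Trr}}(x^{q^2a_t})-\ev_{Y_{\Trr}}(x^{a_t})\in\RS(\Trr,\Delta')$ that you are trying to derive \emph{from} the existence of $v^{*}$). Without $v^{*}$, your entire middle paragraph collapses, and with it the unjustified identity $\dim\Drts=n-\dim\Drs$.

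The paper sidesteps this circularity by never touching $\Drs$. It works directly on the dual side: from $\Drt=(\RS(\Trr,\Delta)^\perp,0)+\langle w\rangle$ one gets, using only the non-degeneracy of $\Drts$ at the last coordinate, a decomposition $\Drts=((\RS(\Trr,\Delta)^\perp)_{q^2},0)+\langle w'\rangle$ with $w'_n\neq 0$, so that $\dim\Drts=\dim(\RS(\Trr,\Delta)^\perp)_{q^2}+1$. The identification $(\RS(\Trr,\Delta)^\perp)_{q^2}=(\RS(\Trr,\Delta'')^\perp)_{q^2}$ is then obtained by a puncturing/shortening argument (shortening commutes with taking subfield subcodes), reducing to the known equality over the full set $Y_{q^{2\ell}}$. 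Your minimum-distance argument by splitting on $v_n$ is essentially fine, although note that for $v\in\Drts$ you already have $v\perp\Dr$ over $\F_{q^{2\ell}}$, so orthogonality to $(z,1)$ and to $(\RS(\Trr,\Delta'),0)$ comes for free without invoking $v^{*}$.
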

\begin{proof}
By the definitions, it is clear that $\Dr=(\RS(\Trr,\Delta'),0)+\langle \ev_{\Trr}(x_1^{a_t})\rangle$, where $\Delta'=\Delta\setminus\{a_t\}$. This means that $\dim \Dr=\dim \RS(\Trr,\Delta')+1=\dim \RS(\Trr,\Delta)$, because if we have $\dim \RS(\Trr,\Delta')=\dim \RS(\Trr,\Delta)$, this means that $\ev_{\Trr}(x_0x_1^{a_t})$ is in $(\RS(\Trr,\Delta'),0)$, which implies that $\ev_{\Trr}(x_0x_1^{a_t}-x_1^{a_t})$ is in $\Dr$, but this is a vector of weight 1, which is a contradiction because $\Drts$ (and $\Drt$) is not degenerate. Therefore, we have that $\dim \Drt=\dim \RS(\Trr,\Delta)^\perp+1$.

Arguing as in the proof of Theorem \ref{cuanticoeuclideo}, we have $\Drt=( \RS(\Trr,\Delta)^\perp,0)+\langle w\rangle $, where $w$ is a vector with a nonzero last entry, and we also obtain $\Drts=( (\RS(\Trr,\Delta)^\perp)_{q^2},0)+\langle w'\rangle $ for some vector $w'$ with nonzero last coordinate. Moreover, a basis for $( (\RS(\Trr,\Delta)^\perp)_{q^2},0)$ would give us $\dim (\RS(\Trr,\Delta)^\perp)_{q^2}$ linearly independent vectors with last coordinate equal to 0, which means that $\dim \Drts=\dim (\RS(\Trr,\Delta)^\perp)_{q^2}+1$. 

We obtain $\dim \Drts=\dim (\RS(\Trr,\Delta'')^\perp)_{q^2}+1$ because $(\RS(\Trr,\Delta)^\perp)_{q^2}=(\RS(\Trr,\Delta'')^\perp)_{q^2}$, which is what we are going to see next. When evaluating in all the points of $\F_{q^{2\ell}}$, we have $(\RS(q^{2\ell},\Delta)^\perp)_{q^2}=(\RS(q^{2\ell},\Delta'')^\perp)_{q^2}$. The code $\RS(\Trr,\Delta)$ (resp. $\RS(\Trr,\Delta'')$) corresponds to a puncturing of $\RS(q^{2\ell},\Delta)$ (resp. $\RS(q^{2\ell},\Delta'')$) because we only evaluate in the zeroes of $\Trr(x)$. The dual of a punctured code is equal to the shortening of the dual code at the same positions \cite[Prop. 2.1.17]{pellikaanlibro}. Given a code $C$, if we denote by $S$ the positions where we are puncturing (resp. shortening), by $C_S$ the punctured code and by $C^S$ the shortened code, we obtain
$$
((C_S)^\perp)_{q^2}=((C^{\perp})^S)_{q^2}=((C^\perp)_{q^2})^S,
$$
because shortening a code commutes with considering its subfield subcode. Let $S$ be the positions where we puncture in order to obtain $\RS(\Trr,\Delta)$ from $\RS(q^{2\ell},\Delta)$. Using the previous expression and the fact that $(\RS(q^{2\ell},\Delta)^\perp)_{q^2}=(\RS(q^{2\ell},\Delta'')^\perp)_{q^2}$ we get
$$
\begin{aligned}
(\RS(\Trr,\Delta)^\perp)_{q^2}&=((\RS(q^{2\ell},\Delta)_S)^\perp)_{q^2}=((\RS(q^{2\ell},\Delta)^\perp)_{q^2})^S= ((\RS(q^{2\ell},\Delta'')^\perp)_{q^2})^S\\
&=((\RS(q^{2\ell},\Delta'')_S)^\perp)_{q^2}=(\RS(\Trr,\Delta'')^\perp)_{q^2}.
\end{aligned}
$$

The bound for the dimension given in the statement is obtained by using \cite[Thm. 13]{fernandotrace}.

On the other hand, for the minimum distance, we have the BCH-type bound for $\Drt$, which gives $\wt(\Drt)\geq a_t+2$, and it is inherited by $\Drts$.
\end{proof}

The previous result shows that, if $a_{t+1}+1=a_t+2$, then the code $\Drts$ has 1 more length and dimension than the code $ (\RS(\Trr,\Delta'')^\perp)_{q^2}$. In the next example we obtain some codes $\Drts$ with record parameters according to \cite{codetables}.

\begin{ex}\label{ejemplocerostraza}
We consider the field extension $\F_{2^8}\supset \F_{2^2}$, i.e., we have $q=2$ and $\ell=4$. Therefore, we will get codes with length $N=129$. Let $\Delta=\II_{0}\cup \II_{1}\cup \cdots \II_{a_{t-1}}\cup \{a_t\}$. Hence, we have $\wt\left(\Drts\right)\geq a_t+2.$ The dimension of these codes can be easily computed using Magma \cite{magma}. In this case, we obtain a lot of codes whose parameters achieve the best known values in \cite{codetables}, and in many cases we are obtaining codes with higher length and dimension, but same minimum distance as in the affine case. Moreover, we obtain the parameters $[129,90,15]_4$, $[129,86,16]_4$ and $[129,41,44]_4$, for $a_t$ equal to $13,14$ and $42$, respectively. In \cite{codetables}, a construction of a code with parameters $[129,86,16]_4$ is currently missing, and we are able to obtain one. The codes with parameters $[129,90,15]_4$ and $[129,41,44]_4$ exceed the best known values in \cite{codetables}. Furthermore, by shortening and puncturing these codes we are able to obtain more codes with record parameters or missing constructions in \cite{codetables}. For instance, from the code with parameters $[129,41,44]_4$, we obtain the parameters $[129-i-j,41-i,44-j]_4$, for $0\leq i\leq 4$, $0\leq j\leq 3$, which are either records or the construction of a code with those parameters is missing in \cite{codetables}. 
\end{ex}

The next result shows that we can construct quantum codes over $\fq$ using Theorem \ref{thmtraza} together with the Hermitian construction \ref{hermitica}.

\begin{thm}\label{paramquantraza}
Let $\mathcal{A}=\{a_0=0<a_1<a_2<\cdots< a_z\}$ be the set of minimal representatives of the cyclotomic sets $\II_{a_i}$, $0\leq i\leq z$, of $\{0,1,\ldots,q^{2l}-1\}$ with respect to $q^2$. Let $t\leq z$ be an index such that
$$
a_t\leq q^\ell-\left\lfloor \frac{(q-1)}{2}\right\rfloor q^{\ell-1}-\cdots-\left\lfloor\frac{(q-1)}{2}\right\rfloor q-1.
$$
Then, for $\Delta=\bigcup_{i=0}^{t-1} \II_{a_i}\cup \{a_t\}$ as before, assuming that $\Drtshn$ is not degenerate, we have that:
$$
\dim \left( \Drtts\cap \Drtshn\right)\geq \dim \Drtts - 1.
$$
As a consequence, we can construct an EAQECC with parameters
$$
[[n,\geq n-2\sum_{i=0}^{t}n_{a_i}+c,\geq a_{t}+2;c ]]_q,
$$
where $n=q^{2\ell-1}+1$ and $c\leq 1$. 
\end{thm}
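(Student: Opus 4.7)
The plan is to imitate the proof of Theorem~\ref{cuanticohermitico}, replacing the structural decomposition of $\mDTh$ used there by the analogous one for $\Drtshn$ supplied by Theorem~\ref{thmtraza}, and apply the Hermitian construction of Theorem~\ref{hermitica} to $C := \Drtts$. Set $E := \RS(\Trr,\Delta'')_{q^2}$. Since $\Delta'' = \bigcup_{i=0}^{t}\II_{a_i}$ is a union of $q^2$-cyclotomic cosets, $\RS(\Trr,\Delta'')$ is Galois invariant and so, by Delsarte (as already used inside the proof of Theorem~\ref{thmtraza}), $(\RS(\Trr,\Delta'')^\perp)_{q^2} = E^\perp$. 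The hypothesis on $a_t$ is the standard bound from \cite{fernandotrace} ensuring $E\subset E^{\perp_h}$, equivalently $E^q\subset E^\perp$.

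The key structural step is to translate the description
$$\Drtshn = (E^\perp,0) + \langle w'\rangle,$$
coming from (the proof of) Theorem~\ref{thmtraza} under the non-degeneracy assumption, with $w'$ having nonzero entry at the coordinate of $[0:1]$ (call it $e_n$), into a description of $C = (\Drtshn)^{\perp_h}$. Using that $(E^\perp)^{\perp_h} = E^q$, which holds because Frobenius $x\mapsto x^q$ is an involution on $\F_{q^2}$ commuting with Euclidean duality, a direct calculation (identical to the one in the proof of Theorem~\ref{cuanticohermitico}) yields
$$C = \bigl((E^q,0) + \langle e_n\rangle\bigr)\cap \langle w'\rangle^{\perp_h}.$$
In particular $\dim C = \dim E$. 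Since $w'_n\neq 0$ we have $e_n\notin \langle w'\rangle^{\perp_h}$, so one can choose a basis of $C$ consisting of $\dim E - 1$ vectors $u_i\in (E^q,0)\cap C$ together with one extra vector whose last coordinate is nonzero; the existence of this extra vector is forced because the minimum-distance bound $\wt(\Drtshn)\geq a_t+2$ from Theorem~\ref{thmtraza} precludes $e_n\in \Drtshn$.

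Using the self-orthogonality $E^q\subset E^\perp$ supplied by the hypothesis, each $u_i$ lies in $(E^\perp,0)\subset \Drtshn = C^{\perp_h}$. Therefore $\dim(C\cap C^{\perp_h})\geq \dim E - 1 = \dim C - 1$, which gives $c\leq 1$ in Theorem~\ref{hermitica}. The parameters then follow at once: the length is $n = q^{2\ell-1}+1$; the distance bound $\delta\geq \wt(C^{\perp_h}) = \wt(\Drtshn)\geq a_t+2$ is inherited from Theorem~\ref{thmtraza}; and since $\dim C = \dim E \leq \sum_{i=0}^{t} n_{a_i}$ by \cite[Thm.~13]{fernandotrace}, Theorem~\ref{hermitica} gives $\kappa = n - 2\dim C + c \geq n - 2\sum_{i=0}^{t} n_{a_i}+c$.

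The main obstacle, as in Theorem~\ref{cuanticohermitico}, is the Hermitian-duality bookkeeping: correctly identifying $C$ in the explicit form $((E^q,0)+\langle e_n\rangle)\cap \langle w'\rangle^{\perp_h}$, and then extracting a basis most of whose vectors live inside $C^{\perp_h}$ via the self-orthogonality $E^q\subset E^\perp$. Once this is done, the dimension and distance estimates are immediate consequences of Theorem~\ref{thmtraza} and of \cite[Thm.~13]{fernandotrace}; the only extra subtlety specific to the trace-zero setting is to verify that the bound on $a_t$ stated here really is the one from \cite{fernandotrace} yielding Hermitian self-orthogonality of $E$.
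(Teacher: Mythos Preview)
Your proposal is correct and follows essentially the same route as the paper: apply Theorem~\ref{hermitica} to $C=\Drtts$, use the decomposition $\Drtshn=(E^\perp,0)+\langle w'\rangle$ from Theorem~\ref{thmtraza}, dualize to get $C=((E^q,0)+\langle e_n\rangle)\cap\langle w'\rangle^{\perp_h}$, and then invoke the Hermitian self-orthogonality of $E=\RS(\Trr,\Delta'')_{q^2}$ (which is exactly \cite[Thm.~15]{fernandotrace} under the stated bound on $a_t$) to place $\dim E-1$ basis vectors of $C$ inside $C^{\perp_h}$. The paper's own proof is simply a terse pointer to the argument of Theorem~\ref{cuanticohermitico} together with the citation of \cite[Thm.~15]{fernandotrace}; your write-up unpacks that pointer explicitly.
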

\begin{proof}
Similarly to the proof of Theorem \ref{cuanticohermitico}, we are going to consider the code $C=\Drtts$ for the Hermitian construction \ref{hermitica}. By Theorem \ref{thmtraza} we obtain the bound for the minimum distance, and we also obtain that $\dim \Drtts\leq \sum_{i=0}^t n_{a_i}$, which explains the dimension of the quantum code. The only thing left to prove is the claim about the intersection of $\Drtshn$ with its hermitian dual. 

Under our assumptions, in \cite[Thm. 15]{fernandotrace} it is proved that we have $\RS(\Trr,\Delta'')_{q^2}\subset (\RS(\Trr,\Delta'')_{q^2})^{\perp_h}$ for $\Delta''=\bigcup_{i=0}^{t} \II_{a_i}$. The reasoning from the proof of Theorem \ref{cuanticohermitico} finishes the proof.
\end{proof}

\begin{ex}
We continue with Example \ref{ejemplocerostraza}. For $a_t=10$, we have $a_t+2=12$, and, computing the dimension with Magma \cite{magma}, we obtain a quantum code with parameters $[[129,67,12;1]]_2$ using Theorem \ref{paramquantraza}. In the affine case from \cite{fernandotrace}, the parameters $[[128,65,12;0]]_2$ are obtained. Therefore, we have increased the length by 1 and the dimension by 2, at the expense of increasing the parameter $c$ by 1. Moreover, the codes $[[129,73,11;1]]_2,[[129,67,12;1]]_2$ and $[[129,59,13;1]]_2$ that we can obtain in this way (by changing $a_t$) cannot be deduced using the propagation rules from \cite{grasslhowmuch} with the codes in the table of QECCs from \cite{codetables}.
\end{ex}

In \cite{fernandotrace}, the authors consider what they call \textit{complementary codes}, which are obtained in an analogous way, but evaluating in precisely all the points in $\F_{q^{2\ell}}$ besides the zeroes of $\Trr(x)$. For the projective case, it is easy to see that including the point at infinity in this set corresponds to considering the zero set of 
$$\frac{x_0x_1^{q^{2\ell}}-x_0^{q^{2\ell}}x_1}{\Trr(x_1)^h}.$$
All the results we have given in this section apply to these types of codes as well, but with length $q^{2\ell}-q^{2\ell-1}+1$ (instead of $q^{2\ell-1}+1$).

\section*{Declarations}
\subsection*{Conflict of interest} The authors declare that they have no conflict of interest.


\begin{thebibliography}{10}

\bibitem{relativehull}
S.~E. Anderson, E.~Camps-Moreno, H.~H. López, G.~L. Matthews, D.~Ruano, and
  I.~Soprunov.
\newblock Relative hulls and quantum codes.
\newblock {\em ArXiv 2212.14521}, 2022.

\bibitem{bierbrauercyclic}
J.~Bierbrauer.
\newblock The theory of cyclic codes and a generalization to additive codes.
\newblock {\em Des. Codes Cryptogr.}, 25(2):189--206, 2002.

\bibitem{bierbrauerrs}
J.~Bierbrauer and Y.~Edel.
\newblock New code parameters from {R}eed-{S}olomon subfield codes.
\newblock {\em IEEE Trans. Inform. Theory}, 43(3):953--968, 1997.

\bibitem{bierbrauerquantum}
J.~Bierbrauer and Y.~Edel.
\newblock Quantum twisted codes.
\newblock {\em J. Combin. Des.}, 8(3):174--188, 2000.

\bibitem{magma}
W.~Bosma, J.~Cannon, and C.~Playoust.
\newblock The {M}agma algebra system. {I}. {T}he user language.
\newblock {\em J. Symbolic Comput.}, 24(3-4):235--265, 1997.
\newblock Computational algebra and number theory (London, 1993).

\bibitem{entanglement}
T.~Brun, I.~Devetak, and M.-H. Hsieh.
\newblock Correcting quantum errors with entanglement.
\newblock {\em Science}, 314(5798):436--439, 2006.

\bibitem{calderbankgoodquantum}
A.~R. Calderbank and P.~W. Shor.
\newblock Good quantum error-correcting codes exist.
\newblock {\em Phys. Rev. A}, 54:1098--1105, Aug 1996.

\bibitem{cohen}
S.~D. Cohen.
\newblock Primitive elements and polynomials with arbitrary trace.
\newblock {\em Discrete Math.}, 83(1):1--7, 1990.

\bibitem{cox}
D.~A. Cox, J.~Little, and D.~O'Shea.
\newblock {\em Ideals, varieties, and algorithms}.
\newblock Undergraduate Texts in Mathematics. Springer, Cham, fourth edition,
  2015.
\newblock An introduction to computational algebraic geometry and commutative
  algebra.

\bibitem{delsarte}
P.~Delsarte.
\newblock On subfield subcodes of modified {R}eed-{S}olomon codes.
\newblock {\em IEEE Trans. Inform. Theory}, IT-21(5):575--576, 1975.

\bibitem{duursmacompleteintersection}
I.~M. Duursma, C.~Renter\'{\i}a, and H.~Tapia-Recillas.
\newblock Reed-{M}uller codes on complete intersections.
\newblock {\em Appl. Algebra Engrg. Comm. Comput.}, 11(6):455--462, 2001.

\bibitem{eisenbud}
D.~Eisenbud.
\newblock {\em Commutative algebra}, volume 150 of {\em Graduate Texts in
  Mathematics}.
\newblock Springer-Verlag, New York, 1995.
\newblock With a view toward algebraic geometry.

\bibitem{galindolcd}
C.~Galindo, O.~Geil, F.~Hernando, and D.~Ruano.
\newblock New binary and ternary {LCD} codes.
\newblock {\em IEEE Trans. Inform. Theory}, 65(2):1008--1016, 2019.

\bibitem{galindo1}
C.~Galindo and F.~Hernando.
\newblock Quantum codes from affine variety codes and their subfield-subcodes.
\newblock {\em Des. Codes Cryptogr.}, 76(1):89--100, 2015.

\bibitem{galindoentanglement}
C.~Galindo, F.~Hernando, R.~Matsumoto, and D.~Ruano.
\newblock Entanglement-assisted quantum error-correcting codes over arbitrary
  finite fields.
\newblock {\em Quantum Inf. Process.}, 18(4):Paper No. 116, 18, 2019.

\bibitem{galindoasymmetric}
C.~Galindo, F.~Hernando, R.~Matsumoto, and D.~Ruano.
\newblock Asymmetric entanglement-assisted quantum error-correcting codes and
  bch codes.
\newblock {\em IEEE Access}, 8:18571--18579, 2020.

\bibitem{galindostabilizer}
C.~Galindo, F.~Hernando, and D.~Ruano.
\newblock Stabilizer quantum codes from {$J$}-affine variety codes and a new
  {S}teane-like enlargement.
\newblock {\em Quantum Inf. Process.}, 14(9):3211--3231, 2015.

\bibitem{fernandotrace}
C.~Galindo, F.~Hernando, and D.~Ruano.
\newblock Classical and quantum evaluation codes at the trace roots.
\newblock {\em IEEE Trans. Inform. Theory}, 65(4):2593--2602, 2019.

\bibitem{galindoBCH}
C.~Galindo, F.~Hernando, and D.~Ruano.
\newblock Entanglement-assisted quantum error-correcting codes from {RS} codes
  and {BCH} codes with extension degree 2.
\newblock {\em Quantum Inf. Process.}, 20(5):Paper No. 158, 26, 2021.

\bibitem{gonzalezdualsomereedmuller}
M.~Gonz\'{a}lez-Sarabia and C.~Renter\'{\i}a.
\newblock The dual code of some {R}eed-{M}uller type codes.
\newblock {\em Appl. Algebra Engrg. Comm. Comput.}, 14(5):329--333, 2004.

\bibitem{codetables}
M.~Grassl.
\newblock {Bounds on the minimum distance of linear codes and quantum codes}.
\newblock Online available at \url{http://www.codetables.de}, 2007.
\newblock Accessed on 2023-04-04.

\bibitem{hattori}
M.~Hattori, R.~J. McEliece, and G.~Solomon.
\newblock Subspace subcodes of {R}eed-{S}olomon codes.
\newblock {\em IEEE Trans. Inform. Theory}, 44(5):1861--1880, 1998.

\bibitem{fernandoGRS}
F.~Hernando, K.~Marshall, and M.~E. O'Sullivan.
\newblock The dimension of subcode-subfields of shortened generalized
  {R}eed-{S}olomon codes.
\newblock {\em Des. Codes Cryptogr.}, 69(1):131--142, 2013.

\bibitem{ssctoric}
F.~Hernando, M.~E. O'Sullivan, E.~Popovici, and S.~Srivastava.
\newblock Subfield-subcodes of generalized toric codes.
\newblock In {\em 2010 IEEE International Symposium on Information Theory},
  pages 1125--1129, 2010.

\bibitem{ioffe}
L.~Ioffe and M.~M\'ezard.
\newblock Asymmetric quantum error-correcting codes.
\newblock {\em Phys. Rev. A}, 75:032345, Mar 2007.

\bibitem{kreuzer1}
M.~Kreuzer and L.~Robbiano.
\newblock {\em Computational commutative algebra. 1}.
\newblock Springer-Verlag, Berlin, 2000.

\bibitem{laguardialibro}
G.~G. La~Guardia.
\newblock {\em Quantum error correction---symmetric, asymmetric,
  synchronizable, and convolutional codes}.
\newblock Quantum Science and Technology. Springer, Cham, 2020.

\bibitem{hiramdualevaluation}
H.~H. L\'{o}pez, I.~Soprunov, and R.~H. Villarreal.
\newblock The dual of an evaluation code.
\newblock {\em Des. Codes Cryptogr.}, 89(7):1367--1403, 2021.

\bibitem{grasslhowmuch}
G.~Luo, M.~F. Ezerman, M.~Grassl, and S.~Ling.
\newblock How much entanglement does a quantum code need?
\newblock {\em ArXiv 2207.05647}, 2022.

\bibitem{martinez}
J.~Mart\'{\i}nez-Bernal, Y.~Pitones, and R.~H. Villarreal.
\newblock Minimum distance functions of graded ideals and {R}eed-{M}uller-type
  codes.
\newblock {\em J. Pure Appl. Algebra}, 221(2):251--275, 2017.

\bibitem{decodingRMP}
N.~Nakashima and H.~Matsui.
\newblock Decoding of projective reed-muller codes by dividing a projective
  space into affine spaces.
\newblock {\em IEICE Transactions on Fundamentals of Electronics,
  Communications and Computer Sciences}, E99.A(3):733--741, 2016.

\bibitem{pellikaanlibro}
R.~Pellikaan, X.-W. Wu, S.~Bulygin, and R.~Jurrius.
\newblock {\em Codes, cryptology and curves with computer algebra}.
\newblock Cambridge University Press, Cambridge, 2018.

\bibitem{sarvepalliasymmetric}
P.~K. Sarvepalli, A.~Klappenecker, and M.~R\"{o}tteler.
\newblock Asymmetric quantum codes: constructions, bounds and performance.
\newblock {\em Proc. R. Soc. Lond. Ser. A Math. Phys. Eng. Sci.},
  465(2105):1645--1672, 2009.

\end{thebibliography}

\end{document}